\newif\ifarxiv
\algnewcommand\algorithmiccase{\textbf{case}}
\newenvironment{breakablealgorithm}
  {
   \begin{center}
     \refstepcounter{algorithm}
     \hrule height.8pt depth0pt \kern2pt
     \renewcommand{\caption}[2][\relax]{
       {\raggedright\textbf{\fname@algorithm~\thealgorithm} ##2\par}%
       \ifx\relax##1\relax 
         \addcontentsline{loa}{algorithm}{\protect\numberline{\thealgorithm}##2}%
       \else 
         \addcontentsline{loa}{algorithm}{\protect\numberline{\thealgorithm}##1}%
       \fi
       \kern2pt\hrule\kern2pt
     }
  }{
     \kern2pt\hrule\relax
   \end{center}
  }
\tikzset{
mystyle/.style={
  circle,
  inner sep=0pt,
  align=center,
  draw=black,
  fill=white
  }
}
\title{Rational Capability  in Concurrent Games}
\ifarxiv \thanks{This is an extended version of the same title paper that will appear in AAMAS
2025. This version contains technical appendixes with proof details that, for space reasons, do
not appear in the AAMAS 2025 version.} 
\author{Yinfeng Li}
\affiliation{
  \institution{IRIT, CNRS, University of Toulouse}
  \city{Toulouse}
  \country{France}}
\email{yinfeng.li@irit.fr}
\author{Emiliano Lorini}
\affiliation{
  \institution{IRIT, CNRS, University of Toulouse}
  \city{Toulouse}
  \country{France}}
\email{emiliano.lorini@irit.fr}
\author{Munyque Mittelmann}
\affiliation{
  \institution{University of Naples Federico II}
  \city{Naples}
  \country{Italy}}
\email{munyque.mittelmann@unina.it}
\begin{abstract}
We extend  concurrent game structures (CGSs)
    with a simple  notion
    of preference
    over computations 
    and define 
    a minimal notion of rationality
    for agents based on the concept
    of dominance. We use this notion to interpret a $\cllogic$  
    and an $\atllogic$ languages  that extend the  basic  $\cllogic$ and $\atllogic$ languages with
    modalities for 
    rational  capability, namely,
    a coalition's capability to \emph{rationally} 
    enforce a given
    property. 
    For each of these languages, we provide 
    results
    about the complexity of satisfiability checking and model checking as well as about axiomatization. 
\end{abstract}
\keywords{Logics for Multi-Agent Systems; Rationality; Strategic Reasoning}
\newcommand{\BibTeX}{\rm B\kern-.05em{\sc i\kern-.025em b}\kern-.08em\TeX}
\newtheorem{theorem}{Theorem}
\newtheorem{example}{Example}
\newtheorem{lemma}{Lemma}
\newtheorem{corollary}{Corollary}
\newtheorem{definition}{Definition}
\newtheorem{fact}{Fact}
\newbox\itembox
\def\itemlistlabel#1{#1\hfill}
\def\itemlist#1{\setbox\itembox=\hbox{#1}%
                \list{}{\labelwidth\wd\itembox
                             \leftmargin\labelwidth
                             \advance\leftmargin by\itemindent
                             \advance\leftmargin by\labelsep
                             \let\makelabel\itemlistlabel}}
\DeclareMathAlphabet\mathbfcal{OMS}{cmsy}{b}{n}
\begin{document}


\pagestyle{fancy}
\fancyhead{}


\maketitle

\section{Introduction}

The field of 
logics for multi-agent systems
has been very active
in the last twenty years.
Different logics
have been proposed
and their proof-theoretic,
complexity
and algorithmic 
aspects 
for satisfiability
and model checking studied
in detail. 
The list of logics in this area is long.
It includes
alternating-time temporal logic
($\atllogic$) \cite{alur2002alternating,GorankoDrimmelenATL},
its ``next''-fragment 
called 
coalition logic
($\cllogic$) \cite{DBLP:journals/logcom/Pauly02,GorankoTARK2021},
the logic of agency $\stitlogic $ \cite{DBLP:journals/logcom/BroersenHT06,DBLP:conf/atal/BoudouL18}, and 
the more expressive strategy logic
($\strategylogic$) \cite{MogaveroMPV14,DBLP:conf/concur/ChatterjeeHP07}.
A  widely used
semantics for interpreting
these logics is based  on  concurrent game structures (CGSs), 
transition systems
in which state-transitions are labeled by joint actions of agents. 
A CGS allows us to represent the repeated interaction between multiple agents 
in a natural way as well as their choices and strategies.
It is similar
to the game-theoretic concept of dynamic 
game
in which players move sequentially or repeatedly.
But an element that is missing from  CGSs compared 
to dynamic games is the preference of the agents.
Indeed, most logics for multi-agent
systems
including 
$\atllogic$,
$\cllogic$,
$\strategylogic$
and 
 $\stitlogic $ abstract away
 from the agents' preferences
 as they are only interested in representing
 and reasoning about 
 the 
 \emph{game form}, namely, the way an outcome
 is determined based on the agents' 
 concurrent choices over time.

In this paper  we extend 
CGSs with a basic concept of preference.
This is in order to have a semantics that allows us to represent a game in its entirety, capturing both its aspects (the game form and the agents' preferences), and consequently to reason about 
rational
choices and strategies 
in the game. 
Specifically, we introduce a new class
of structures called CGS with preferences
(CGSP) that includes  one  preference
ordering for each agent at  each state in the underlying CGS.
An agent's preference at a given state is relative to
the set of computations (or histories) starting at this state.
We consider an interesting subclass
of CGSP with \emph{stable preferences}
in which agents' preferences do not change over time. This reminds
the notion of
time consistency
of preferences studied in  economics, 
in opposition to time inconsistency \cite{Lowewenstein2002}. 
We employ   CGSP
to interpret
two novel
languages 
$\atlratlogic$
and 
$\clratlogic$ ($\atllogic$/$\cllogic$ with minimal rationality) 
  that extend the  basic   $\atllogic$ and 
  $\cllogic$ languages with
    modal operators  for 
    \emph{rational}   capability, namely,
    a coalition's capability to 
    enforce a given outcome by choosing a 
        \emph{rational} strategy. 
        The notion of rationality
        that we use to define these operators  
        is based on strong dominance: 
        the collective strategy of a coalition is rational insofar as the individual strategies that compose it are not strongly dominated.
It is a minimal notion of rationality since it does not require the agent to reason about what others will choose. It simply requires an  agent not to play a strategy that is beaten by another
of its strategies  regardless of what  the others choose.
In \cite{Horty2001}, it  is
shown that this minimal dominance-based  requirement of rationality
is particularly suitable for defining the deontic  notion of obligation, namely, what an agent or coalition
ought to do. 
The general idea of refining the capability operators of ATL by restricting quantification to the agents' rational strategies is shared  with 
Bulling et al. 
\cite{BullingJD08}.
But unlike us, they 
do not extend  CGSs 
with an explicit notion of preference. In their semantics 
sets of 
plausible/rational strategies 
can be only referred to via
atomic plausibility terms (constants) whose interpretation is ``hardwired''
in the model. 
A similar idea can also be found in \cite{DBLP:journals/sLogica/LoriniS16}
in which 
rational 
 $\stitlogic $ (``seeing to it that'')
modalities are introduced.

For each of the languages we introduce,  results about the complexity of satisfiability checking and model checking as well as about axiomatization
are provided.
In particular, the following are the main results of the paper:
\begin{itemize}
\item tree-like model property for $\atlratlogic$;  
    \item polynomial embeddings
    of $\atlratlogic$
    into 
    $\atllogic$
    under the stable preference assumption,
    and 
    of $\clratlogic$
    into 
    $\cllogic$
    both 
       under the stable preference assumption
       and with no assumption; 

       \item thanks to the embeddings, 
     tight complexity results
     of satisfiability checking
     for  $\atlratlogic$
    and
$\clratlogic$; 

\item a sound and complete axiomatization
for the logic 
$\clratlogic$; 

\item a model checking algorithm
for $\atlratlogic$ for the class of concurrent game structures with short-sighted preferences.

\end{itemize}

    The paper is organized as follows. In Section
    \ref{sec:relatedwork},
    we discuss related work.
    In Section \ref{sec:semantics},
    we present the semantic foundation
    of our framework. 
    Then, 
    in Section \ref{sec:language}, 
    we introduce the languages of 
    $\atlratlogic$
    and
$\clratlogic$.
Section \ref{sec:resultsblock}
is devoted
to the tree-like model property,
the embeddings
and the complexity results
for our logics. In Section \ref{axiomatization}
we deal with axiomatization, while in 
Section \ref{sec:modelchecking}
we move to model checking. 
\ifarxiv 
 Detailed proofs are given in   
 Appendixes A-H.
\else
\textcolor{blue}{Detailed proofs are presented in the extended version of this paper, available on ArXiV [add url]. }
\fi 




\section{Related work}\label{sec:relatedwork}


Several works have studied logics  for reasoning about preferences, without considering the strategic and temporal dimensions. In particular, \citeauthor{BenthemL07} \cite{BenthemL07} proposed a dynamic logic of knowledge update and preference upgrade, where incoming suggestions or commands change the preference relations. \citeauthor{Lorini21} \cite{Lorini21} presented a general logical framework for reasoning about agents’ cognitive
attitudes, which captures concepts of knowledge, belief, desire, and preference.  
\citeauthor{GrossiHK22} \cite{GrossiHK22} investigated four different semantics for conditional logics based on preference relations over alternatives. The semantics differ in the way of selecting the most preferred alternative, which includes maximality, optimality, unmatchedness, and acceptability. \ifarxiv Maximality is related to the rationality concept we consider in this paper. Maximal alternatives are those not strictly dispreferred to any other. \fi 

Two of the most important developments in logics for strategic reasoning are ATL \cite{alur2002alternating} and SL \cite{MogaveroMPV14}.
Unlike ATL, SL can express complex solution concepts (such as dominant strategy equilibrium) and thus capture some notions of rationality.  
However, in both logics, agents' preferences are not modeled intrinsically, instead, their goals can be represented as Boolean formulas. 
A way to incorporate preferences in those logics is to include atomic propositions stating that the utility of an agent is greater than or equal to a given value \cite{baltag2002logic}, which requires an exhaustive enumeration for each relevant utility threshold.  
The extensions of ATL and SL with quantitative semantics \cite{jamroga2024playing,bouyer2023reasoning} generalize fuzzy temporal logics and capture quantitative goals.  This approach has been recently used to represent agents' utilities in  mechanism design ~\cite{SLKF_KR21,MittelmannMMP22}. 

The dominance relation among strategies has been considered alongside specifications in temporal logics  \cite{AminofGR21,AminofGLMR21}. These works provide algorithms for synthesizing best-effort strategies, which are maximal in the dominance order, in the sense that they achieve the agent goal against a maximal set of environment specifications.

Rationality in concurrent games is typically associated with  a\-gents' knowledge and preferences. Know-How Logic with the Intelligence \cite{naumov2021intelligence} captures rational agents' capabilities that depend on the intelligence information about the opponents’ actions. The interplay between agents' preferences and their knowledge was described in \cite{Naumov2023AnEL}. 
A sound, complete, and decidable logical system expressing higher-order preferences to the other agents was given in  \cite{Jiang2024-JIAALO}. However, none of these three papers address the connection between rational agents' capabilities and their preferences.

Our work is also related to the research on rational verification and synthesis. The first is the problem of checking whether a temporal goal is satisfied in some or all game-theoretic equilibria of a CGS \cite{AbateGHHKNPSW21,GutierrezNPW23}. Rational synthesis consists in the automated construction of such a model \cite{FismanKL10, CFGR16}. 
Different types of agent objectives have been considered, including Boolean temporal specifications \cite{gutierrez2019equilibrium}, mean payoff \cite{gutierrez2024characterising}, and lexicographical preferences \cite{gutierrez2017nash}

While being able to analyze multi-agent systems with respect to solution concepts, both rational verification and model-checking SL specifications face high complexity issues. 
In particular, key decision problems for rational verification with temporal specifications are known to be \DExptime-complete \cite{GutierrezNPW23} and model-checking  SL is non-elementary for memoryful agents \cite{MogaveroMPV14}.

ATL with plausibility \cite{BullingJD08} allows the specification of sets of
rational strategy profiles, and reason about agents' play if the agents can only play  these strategy profiles. The approach considers plausibility terms, which are mapped to a set of strategy profiles.  
The logic includes formulas of the form $(\text{set-pl} \omega) \varphi$, meaning  that ``assuming that the set of rational strategy profiles is defined in terms of the plausibility terms $\omega$, then, it is
plausible to expect that $\varphi$ holds''. 
This idea was extended in  \cite{BullingJ09} to a variant of SL for imperfect information games.
However, as emphasized in the introduction, 
Bulling et al. do not 
represent 
agents' 
preferences in their  semantics. 
This is a crucial
difference between their work and ours.
Our main focus is
on  extending CGSs
with preferences,
studying the dynamic properties
of agents' preferences in concurrent games,
and defining a logic
of rational capability with the help
of the semantics
combining CGSs with preferences.

\section{Semantics}\label{sec:semantics}

In this section,
we first  define the basic
elements of the semantics:
the notions
of concurrent game structure (CGS),
computation
and strategy.
Then, we 
extend a CGS
with preferences
and use the resulting
structure to define the notion
of dominated strategy.

\subsection{Preliminaries}\label{sec:sempreliminaries}

Let $\ATM$  be a countable set of atomic propositions
and $\AGT=\{1, \ldots ,n\}$ a finite set of agents.
A coalition is a (possibly empty)
set of agents
from $\AGT$. 
Coalitions are denoted by
$\Group, \Group', \ldots  $
$\AGT$
is also called the grand coalition. 
The following definition introduces
the concept
of concurrent game
structure (CGS), as defined in \cite{{DBLP:conf/atal/BoudouL18}}. 
 \begin{definition}[CGS]\label{CGS}
A concurrent game structure (CGS)
is a tuple 
$M  = \big( W,  \ACT, (\relAct{\jactatm})_{\jactatm \in\JACT},  \valProp  \big)$ 
with 
\begin{itemize}
\item $W$ a non-empty set of worlds
(or states),
\item $\ACT$ a set of action names
and  
$\JACT = \ACT^n      $  the corresponding  set of
joint action names,

\item  $\relAct{\jactatm} \subseteq W \times W$ a  transition relation
for joint action $\jactatm$, 
\item $\valProp: W \longrightarrow 2^\ATM$
a valuation function, 
\end{itemize}
 such that
for every $w \in W$ and $\jactatm \in\JACT $:

\begin{enumerate}[label=(C\arabic*)]

\item\label{C1}  $ \relAct{\jactatm}  $ is deterministic (\emph{collective choice determinism}),\footnote{A relation $ \relAct{}  $ is   deterministic if
$\forall w,v,u \in W, \text{ if } w\relAct{} v\text{ and }w   \relAct{}  u \text{ then }v=u $.
}

\item\label{C2}  if $ \jactatm(1) \in \choiceSet_1(w), \ldots , \jactatm(n) \in \choiceSet_n( w)$
then $\relAct{ \jactatm  } (w) \neq \emptyset$ (\emph{inde\-pendence of choices}),

\item\label{C3}   $\relAct{ }  $
is serial (\emph{neverending interaction}),\footnote{A relation $ \relAct{}  $ is  serial if
$\forall w \in W , \exists v \in W \text{ s.t. }w\relAct{}v  $.
}

\end{enumerate}
where 
\begin{align*}
\relAct{ }  & = \bigcup_{  \jactatm \in \JACT }  \relAct{  \jactatm},\\
\choiceSet_i( w) & = \{  \actatm \in \ACT \suchthat \exists \jactatm  \in \JACT \text{ s.t. } \relAct{ \jactatm  } (w) \neq \emptyset \text{ and } \jactatm(i)= \actatm  \}.
\end{align*}

\end{definition}


 The previous definition  slightly differs
from the usual definition of
CGS used for interpreting 
$\atllogic$ \cite{GorankoDrimmelenATL} and 
strategy logic ($\strategylogic$) \cite{MogaveroMPV14}. 
In particular a
CGS, as defined 
in Definition \ref{CGS},
is a 
 multi-relational
  structure, \emph{alias} Kripke model,
  the kind of structure traditionally
  used in modal
  logic.
  Every joint action
  is associated 
to a binary relation
over states satisfying certain properties, 
while 
in the usual
semantics for 
$\atllogic$
and 
$\strategylogic$
 a transition function
 is used that 
maps a state
and a joint action executable
at this state to a successor state.
The two variants  are interdefinable.
We use the multi-relational variant  of CGS
since it
is particularly convenient for proving the model-theoretic
and proof-theoretic results
in the rest of the paper.

The relation 
$\relAct{\jactatm } $
with $\jactatm \in \JACT$
is used to identify  the set of states
$\relAct{\jactatm }(w)=
\{v \in W :
w \relAct{\jactatm } v \}$
that are reachable from
state $w$
when  the agents collectively choose
joint action 
$\jactatm$
at state $w$,
that is, 
when every agent $i$
chooses
the individual component $\jactatm(i)$
at state $w$. 
 $\relAct{\jactatm } (w)=\emptyset$
 means that the  joint action $\jactatm  $ cannot
 be collectively chosen by the agents
 at state $w$. 
 The set
 $\choiceSet_i( w)$
 in the previous definition
 corresponds
 to
 agent $i$'s  choice set
 at state $w$, i.e.,
  the set of actions
 that agent $i$
 can choose at state $w$
 (or agent $i$'s set of available actions
 at $w$). Note that an agent's choice
set may vary from one state to another,
i.e.,
it might be the case that
 $\choiceSet_i( w)\neq \choiceSet_i( v)$
 if  $w\neq v$. 
Constraint C1 captures 
\emph{collective choice  determinism}:
the outcome
of a collective choice of all agents is uniquely determined.
Constraint C2 corresponds to the
\emph{independence of  choices} assumption:
if agent $1$
can individually  choose action 
$ \jactatm(1)$, 
agent $2$
can individually  choose action 
$ \jactatm(2)$,...,
agent $n$
can individually  choose action 
$ \jactatm(n)$, 
then the agents can collectively choose joint action 
$\jactatm$. More intuitively, this means that agents can never be deprived of choices due to the choices made by other agents.
Constraint C3 corresponds to the 
\emph{neverending interaction} assumption:
every state in a CGS has
 \emph{at least one successor},
 where the successor of a given state
 is a state
 which is reachable
 from
 the former via
 a collective choice
 of all agents.

 For notational
 convenience,
 in the rest
of the paper,
sometimes 
use the 
abbreviation 
 $\relActJoint \eqdef   (\relAct{\jactatm})_{\jactatm \in\JACT}$
 to indicate a profile
 of transition relations,
 and write
 $M  = ( W,  \ACT, 
 \relActJoint,  \valProp  )$ 
 instead
 of $M  = \big( W,  \ACT, (\relAct{\jactatm})_{\jactatm \in\JACT},  \valProp  \big)$ 
 for a CGS.

\begin{example}[Crossing road]
Assume a model $M_{cross}$ 
representing a system with two  vehicles (denoted $v_1$ and $v_2$) that need to decide how to act when approaching  intersections. Each vehicle can either go straight on ($Move$) or wait ($Skip$). Their 
goal is to cross the road, but they prefer to avoid collisions, which happen when they go straight at the same time. $M_{cross}$ is represented by Figure \ref{fig:model}. 
The initial state is denoted with $init$, while $crash$ denotes the failure state (i.e., a collision occurred). The proposition $c_1$ (similarly, $c_2$) indicates the situation in which the vehicle $v_1$ has crossed (resp., $v_2$).

\begin{figure}[h]

   \centering

     \scalebox{.7}{
     \begin{tikzpicture}[shorten >= 1pt, shorten <= 1pt,  auto]
  \tikzstyle{rond}=[circle,draw=black,minimum size  = 1.25cm] 
   \tikzstyle{label}=[sloped,shift={(0,-.1cm)}]
  
  \node[rond,fill=white] (q0) { \begin{tabular}{c}   \end{tabular}};
  \node[rond,fill=white] (q4) [right= 2cm of q0] {\begin{tabular}{c} $crash$  \end{tabular}};
  \node[rond,fill=white] (q2) [ below= 2cm   of q4] {\begin{tabular}{c} $c_2$  \end{tabular}};
   \node[rond,fill=white] (q1) [ above= 2cm  of q4] {\begin{tabular}{c} $c_1$   \end{tabular}}; 
  \node[rond,fill=white,mystyle] (q3) [ right= 2cm  of q4] {\begin{tabular}{c} $\begin{tabular}{c} $c_1,c_2$\end{tabular}
  $  \end{tabular}};

  \node  [below= 0.1cm  of q0] {$q_0$};
  \node  [below= 0.1cm  of q1] {$q_1$};
  \node  [below= 0.1cm  of q2] {$q_2$};
  \node  [below= 0.1cm  of q3] {$q_3$};
  \node  [below= 0.1cm  of q4] {$q_4$};

 \path[->,thick] (q0) edge  node [label,pos=.5]    {$(Skip,Move)$}    (q2);
 \path[->,thick] (q2) edge  node [label,pos=.5]    {\begin{tabular}{c} $(Move,Skip)$\\$(Move,Move) $\end{tabular}
 }    (q3);
 \path[->,thick] (q0) edge  node [label,pos=.5]    {$(Move,Skip)$}    (q1);
  \path[->,thick] (q1) edge  node [label,pos=.5]    {\begin{tabular}{c} $(Skip,Move)$\\$(Move,Move) $\end{tabular}
  }    (q3);
 \path[->,thick] (q0) edge  node [label,pos=.5]    {$(Move,Move)$}    (q4);

\path [->,thick] (q0) edge[loop left]  node  [pos=.3]   {$(Skip,Skip)$} (q0);
\path [->,thick] (q2) edge[loop left]  node  [pos=.3]   {\begin{tabular}{c} $(Skip,Skip)$\\$(Skip,Move) $\end{tabular}}  (q2);
\path [->,thick] (q1) edge[loop left]  node  [pos=.3]   {\begin{tabular}{c} $(Skip,Skip)$\\$(Move,Skip) $\end{tabular}}  (q1);
\path [->,thick] (q3) edge[loop right]  node  [pos=.3]   {
$(*,*)$
}  (q3);
\path [->,thick] (q4) edge[loop right]  node  [pos=.3]   {$(*,*)$}  (q4);

\end{tikzpicture}
}

     \caption{Model $M_{cross}$ representing a  system with two vehicles approaching an intersection. 
     Arrows represent transitions between states and are labeled by joint actions of $v_1$ and $v_2$. $(*,*)$ denotes any action.
     } 
     \label{fig:model} 
 \end{figure}
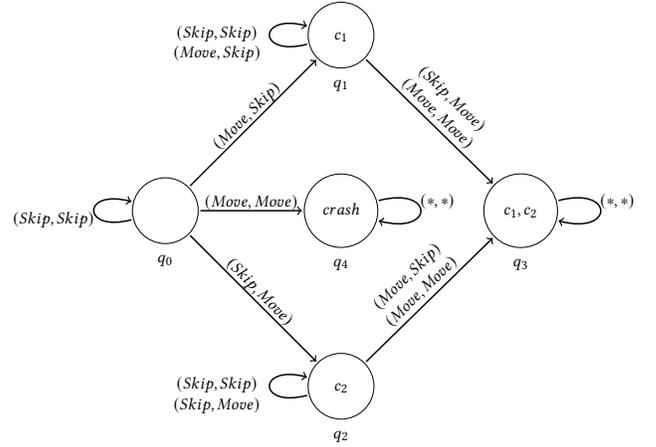


\end{example}

The following definition introduces the notions
of path and computation, two essential elements of temporal
logics and logics for strategic reasoning.
 \begin{definition}[Path and computation]
A \emph{path}
in a  CGS
$M = ( W,  \ACT,  \relActJoint,  \valProp  )$ is a sequence
$\lambda= w_0 w_1 w_2 \ldots$
of states from  $W$ 
 such that
$w_k    \relAct{ } w_{k +1 }$
for all $k \geq 0$,
    where we recall $\relAct{ }   = \bigcup_{  \jactatm \in \JACT }  \relAct{  \jactatm}$. 
The set of all paths in $M$ is denoted by  $\pathset_M$.
Given a path $\lambda$
of length higher than $k'$ and $k\leq k'$,
the $k$-th element of $\lambda$
is denoted by $\lambda(k)$. 
A \emph{computation}  (or  \emph{full path})
in $M$
is a path $\lambda \in \pathset_M$
such that there is no $\lambda' \in \pathset_M$ of which $\lambda $ is a proper prefix.
The set of all computations in $M$ is denoted  by $\historyset_M$.
The set of all computations in $M$
starting at world $w \in W$ 
(i.e., whose first element is $w$) 
is denoted by  $\historyset_{M,w}$.
\end{definition}

From Constraint C3 in Definition \ref{CGS}, 
it is easy to prove the following fact. 
\begin{fact}
    If $\lambda \in\historyset_M $
    then $\lambda$
    is infinite.
\end{fact}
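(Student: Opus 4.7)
The plan is to argue by contradiction using seriality (constraint \ref{C3}). Assume toward contradiction that some $\lambda \in \historyset_M$ is finite, so we can write $\lambda = w_0 w_1 \ldots w_k$ for some $k \geq 0$, where each consecutive pair satisfies $w_j \relAct{} w_{j+1}$. The goal is to exhibit a strict extension $\lambda'$ of $\lambda$ in $\pathset_M$, which would contradict the defining maximality property of a computation (the non-existence of any $\lambda' \in \pathset_M$ of which $\lambda$ is a proper prefix).

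First I would apply \ref{C3} to the last state $w_k$: since $\relAct{}$ is serial, there exists some $v \in W$ with $w_k \relAct{} v$. Then I would define $\lambda' = w_0 w_1 \ldots w_k v$ and verify routinely that $\lambda'$ satisfies the path condition: the consecutive-pair constraints inherited from $\lambda$ still hold, and the new pair $(w_k, v)$ satisfies $w_k \relAct{} v$ by construction. Hence $\lambda' \in \pathset_M$, and by construction $\lambda$ is a proper prefix of $\lambda'$, yielding the desired contradiction.

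There is really no obstacle here, since the fact is an immediate consequence of seriality plus the maximality clause in the definition of computation; the only thing to be careful about is handling the degenerate case $k = 0$ (a single-state ``path'' $w_0$), which is covered uniformly by the same seriality argument. I would therefore present the proof as a single short paragraph rather than a case split.
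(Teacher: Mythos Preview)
Your proposal is correct and matches the paper's intended argument: the paper simply states that the fact follows easily from Constraint~\ref{C3} (seriality) without spelling out the details, and your contradiction-via-extension argument is precisely how one cashes this out. There is nothing to add.
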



An agent's individual 
perfect recall
strategy is nothing
but the specification
of a choice for the agent at the end
of every finite path 
in a CGS. 
It is formally defined as follows. 

\begin{definition}[Individual strategy]
Let $M = ( W,  \ACT,  \relActJoint,  \allowbreak \valProp  )$
be a CGS. A (perfect recall) strategy for agent  $i$
    in $M$
    is a function $\strategy_i   $
    that maps every
    finite path $w_0 \ldots w_k\in 
    \pathset_{M }$
    to a choice $\strategy_i(w_0 \ldots w_k) \in \choiceSet_i( w_k )$
    available to agent $i$
    at the end of this finite path,
    where again  we recall $\relAct{ }   = \bigcup_{  \jactatm \in \JACT }  \relAct{  \jactatm}$.
\end{definition}

A collective  strategy   is the assignment of an individual
strategy to each agent.

\begin{definition}[Collective strategy]
Let $M = ( W,  \ACT, \relActJoint, \allowbreak \valProp  )$
be a CGS. 
A collective strategy   for a  coalition $\Group$
in $M $
     is a function $\strategymap_\Group$
     that associates every agent $i \in \Group$
     to a strategy $\strategymap_\Group(i)$
     for $i$ in $M$. 
     The set of collective  strategies for coalition
      $\Group$ in $M$ is denoted by  $\stratsetatl_{M}^\Group $.
      Its elements are denoted by 
      $\strategymap_\Group , \strategymap_\Group', \ldots $
\end{definition}

Given a coalition $\Group$, 
$\strategymap_\Group \in \stratsetatl_{M}^\Group $
and 
$\strategymap_{\AGT \setminus \Group}' \in \stratsetatl_{M}^{\AGT \setminus \Group} $,
we define 
$\strategymap_\Group 
\oplus 
\strategymap_{\AGT \setminus \Group}'
\in \stratsetatl_{M}^{\AGT } 
$ to be the composition of the two strategies:
\begin{align*}
&\strategymap_\Group 
\oplus 
\strategymap_{\AGT \setminus \Group}'(i)=
\strategymap_\Group (i) \text{ if }i \in \Group,\\
&\strategymap_\Group 
\oplus 
\strategymap_{\AGT \setminus \Group}'(i)=
\strategymap_{\AGT \setminus \Group}'(i) \text{ otherwise}.
\end{align*}


Given an initial
state $w$ and a  collective strategy  for
a coalition 
  $\Group$
  we can compute the set of computations generated by
  this strategy
  starting at $w$.

\begin{definition}[Generated computations]
Let $M = ( W,  \ACT, $ $ \relActJoint,  \valProp  )$
be a CGS,
$w\in W$
and $\strategymap_\Group \in \stratsetatl_{M}^\Group  $. 
The set 
$\outset_M(w, \strategymap_\Group)$
denotes
the
set of all computations
$\lambda = w_0 w_1 w_2 \ldots $
in $\historyset_{M}$
such that
$w_0=w$
and
for every $k \geq 0$,   there is $ \jactatm \in \JACT  $ such that:
\begin{itemize}
\item $\strategymap_\Group (i)(w_0 \ldots w_k) =  \jactatm(i) $
for all $i \in \Group$, and 
\item $ w_k  \relAct{ \jactatm  } w_{k +1  }$. 
\end{itemize}

\end{definition}

$\outset_M(w, \strategymap_\Group)$
is
the set of computations in  $M$
generated by 
coalition $\Group$'s
collective strategy 
$\strategymap_\Group$
starting at state $w$. 
 Note that
the set $\outset_M (w, \strategymap_\AGT )$
is a singleton 
because of 
Constraint C1  
for collective choice  determinism. 
The unique element of 
$\outset_M (w, \strategymap_\AGT )$
is denoted by $\lambda^{M{,}w{,}\strategymap_\AGT }$.

Note also that there is a single strategy 
$\strategymap_\emptyset $
for the empty coalition, the one which
makes no assignments at all.
Thus, 
$\outset_M (w, \strategymap_\emptyset  )=\historyset_{M,w}$. 

\subsection{Adding preferences}
\label{sec:pref}

In this section,
we extend the notion
of CGS of Definition \ref{CGS}
with preferences.
 \begin{definition}[CGS with preferences]\label{def:cgspref}
 Let 
 $M = ( W,  \ACT,  \allowbreak \relActJoint, 
  \valProp  )$
  be a CGS. A preference structure for $M$
  is a tuple $\Omega_M=(\preceq_{i,w }   )_{i \in \AGT, w \in W }$
  where, for every $i \in \AGT$ and $w \in W$, 
  $\preceq_{i,w }$
  is total preorder over $  \historyset_{M,w}$.
  We call the pair $(M,\Omega_M)$
  a CGS with preferences (CGSP). 
  As usual, we write 
    $\lambda' \prec_{i,w } \lambda$
    if 
       $\lambda' \preceq_{i,w } \lambda$
       and
          $\lambda \not \preceq_{i,w } \lambda'$.
          
          We say that the CGSP 
          $(M,\Omega_M)$ has stable preferences
          if
          the following condition holds:
          \begin{align*}
       (\mathbf{SP}) \   \forall w,v \in W ,
          \forall \lambda, \lambda' \in 
          \historyset_{M,v },
          & \text{ if } w \relAct{ }  v
          \\ &
          \text{ then }
          \big(
        \lambda'   \preceq_{i,v}\lambda  \text{ iff }
        w  \lambda'   \preceq_{i,w } w\lambda 
          \big). 
          \end{align*}
\end{definition}
Constraint $ \mathbf{SP}$
for stable preferences captures the fact
that an agent's preference is stable over time:
an agent prefers a computation $\lambda $ to a computation $\lambda '$ 
starting  at  the same world  $v$ 
if and only if it prefers the precursor  of $\lambda$
(i.e., $w \lambda $)
to the precursor of $\lambda '$ 
(i.e., $w \lambda'  $)
at each
predecessor $w$ of $v$.

\begin{example}[Crossing road (cont.)]
Let us resume our example. 
The preference relations $\preceq_{v_1,w_0}$ and $\preceq_{v_2,w_0}$ of agents $v_1$ and $v_2$ (resp.) in state $w_0$ is illustrated in Figure \ref{fig:pref} (preference relation over the other states are analogous). 
The intuition of the preference of each agents $v_i$ is that the less preferred situation for each agent is when there is a collision (the computation indicated with $-_i$). 
Additionally, the agents prefer computations in which he crossed (indicated by 
$+_i$) to the ones he did not ($=_i$). 
We denote by
$P_{cross} = (M_{cross},\Omega_{M_{cross}})$ the CGS $M_{cross}$ with preferences 
$\Omega_M=(\preceq_{i,w } )_{i \in \AGT, w \in W }$.

\begin{figure}[h]

   \centering

     \scalebox{.7}{
     \begin{tikzpicture}[shorten >= 1pt, shorten <= 1pt,  auto]
  \tikzstyle{rond}=[circle,draw=black,minimum size  = 1.25cm] 
   \tikzstyle{label}=[sloped,shift={(0,-.1cm)}]
  
  \node[rond,fill=white] (q0) {};

  \node[rond,fill=white] (q1) [below left= 2.5cm and 1cm of q0] { $c_2$};
  \node[rond,fill=white] (q2)  [left= 1cm of q1] {$c_1$};
  \node[rond,mystyle,fill=black!20] (q3)  [right= 1cm of q1] {$crash$};
  \node[rond,fill=black!20] (q4) [right= 1cm of q3]  { };
  \path[->,thick] (q0) edge  node [label,pos=.5]    {$(Skip, Move)$}    (q1);
  \path[->,thick] (q0) edge  node [label,pos=.5]    {$(Move, Skip)$}    (q2);
  \path[->,thick] (q0) edge  node [label,pos=.5]    {$(Move, Move)$}    (q3);
  \path[->,thick] (q0) edge  node [label,pos=.5]    {$(Skip, Skip)$}    (q4);
  \path [->,thick] (q3) edge[loop left,looseness=4]  node  [pos=.5,xshift=.6cm, yshift=0.5cm
  ]   {$(*,*)$}  (q3);
  \path [->,
  thick] (q0) edge[loop left,looseness=4]  node [label,pos=.5]    {}  (q0);

  \node[rond,fill=black!20] (q6)  [below= 2cm of q2] {$c_1$};
  \node[rond,mystyle,fill=black!20] (q5)  [left= 1cm of q6]   {$c_1,c_2$};
  \path[->,thick] (q2) edge  node [label,pos=.5]    {$(*,*)$}    (q6);
  \path[->,thick] (q2) edge  node [label,pos=.5]    {$(*,*)$}    (q5);
  \path [->,thick] (q5) edge[loop left,looseness=4]  node  [pos=.5,xshift=.6cm, yshift=0.5cm]   {$(*,*)$}  (q5); 
  edge  node [label,pos=.5]    {}  (q2);
  \path [->,
  thick] (q2) edge[loop left,looseness=4]  node [label,pos=.5]    {}  (q2);

  \node[rond,mystyle,fill=black!20] (q7)  [below= 2cm of q1] {$c_1,c_2$};
  \node[rond,mystyle,fill=black!20] (q8)  [right= 1cm of q7]   {$c_2$};
  \path[->,thick] (q1) edge  node [label,pos=.5]    {$(Move, *)$}    (q7);
  \path[->,thick] (q1) edge  node [label,pos=.5]    {$(Skip, *)$}    (q8);
  \path [->,thick] (q7) edge[loop left,looseness=4]  node  [pos=.5,xshift=.6cm, yshift=0.5cm]   {$(*,*)$}  (q7);
   edge  node [label,pos=.5]    {}  (q1);
   \path [->,
   thick] (q1) edge[loop left,looseness=4]  node [label,pos=.5]    {}  (q1);

  \node  [right= 0.01cm  of q0] {$q_0$};
  \node  [right= 0.01cm  of q4] {$q_0$};

  \node  [right= 0.01cm  of q2] {$q_1$};
  \node  [right= 0.01cm  of q6] {$q_1$};

  \node  [right= 0.01cm  of q1] {$q_2$};
  \node  [right= 0.01cm  of q8] {$q_2$};
  
  \node  [right= 0.01cm  of q5] {$q_3$};
  \node  [right= 0.01cm  of q7] {$q_3$};

  \node  [right= 0.01cm  of q3] {$q_4$};

  \node[below=0.1cm of q5]{$(+_1,+_2)$}; 
  \node[below=0.1cm of q6]{$(+_1,=_2)$}; 
  \node[below=0.1cm of q7]{$(+_1,+_2)$};
  \node[below=0.1cm of q8]{$(=_1,+_2)$};
  \node[below=0.1cm of q4]{$(=_1,=_2)$};
  \node[below=0.1cm of q3]{$(-_1,-_2)$};


 

\end{tikzpicture}
}

     \caption{Representation of the unravelling of $M_{cross}$ from the initial state ($w_0$). Branches represent (groups of) computations. Transitions are labeled by the action taken by $v_1$ and $*$ denotes any action. Self-loops indicate computations where the state is repeated. 
     Grey states indicate computations with an infinite suffix that repeats on the same state. Labels in the form under the grey states represent the preference relations  $\preceq_{v_1,w_0}$ and $\preceq_{v_2,w_0}$  of the agents $v_1$ and $v_2$, respectively.  Computations labeled with $+_i$ are strictly  preferred to $=_i$ by agent $v_i$, and  $=_i$ are strictly preferred to~$-_i$ by agent $v_i$ (where $i = \{1,2\}$).     } 
     \label{fig:pref} 
 \end{figure}
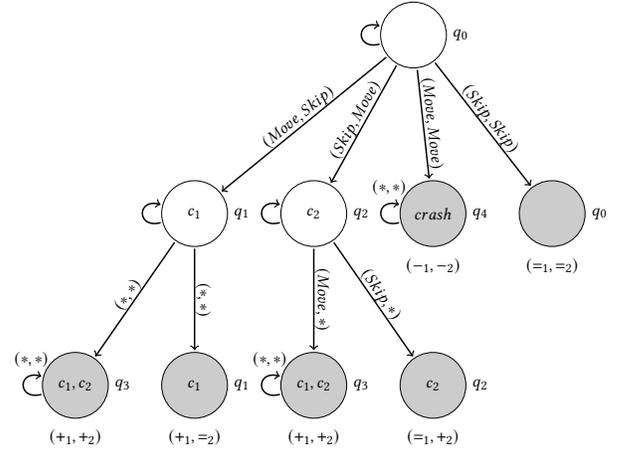

The strategies in which the agent performs $Move$ in the initial state are not dominated, because it may lead to the state where he crossed or to a collision. 
On the other hand, the strategy in which the agent waits (action $Skip$) when only the other agent has crossed is dominated by the strategy in which he moves whenever agent $v_2$ has crossed.   
\end{example}

The following definition introduces
the notion 
of dominated strategy,
the essential constituent 
of minimal rationality
for agents. 
 \begin{definition}[Dominated strategies]
Let $P=(M,\Omega_M)$ be a CGSP
with  $M = ( W,  \ACT,  \relActJoint, 
  \valProp  )$
  a CGS and $\Omega_M=(\preceq_{i,w }   )_{i \in \AGT, w \in W }$
  a preference structure for $M$, 
 $i \in \AGT$, $w \in W$,
  and  $\strategymap_{\{i\}},\strategymap_{\{i\}}' \in \stratsetatl_{M}^{\{i\}}$.
  We say that at world $w$
  agent $i$'s strategy $\strategymap_{\{i\}}' $
  dominates agent $i$'s  strategy $\strategymap_{\{i\}} $
  iff 
  \begin{align*}
  \forall \strategymap_{\AGT \setminus {\{i\}}}''
    \in \stratsetatl_{M}^{\AGT \setminus {\{i\}} },
 \lambda^{M{,}w{,}\strategymap_{\{i\}}   \oplus 
 \strategymap_{\AGT \setminus {\{i\}} }''}  
 \prec_{i,w}  \lambda^{M{,}w{,}\strategymap_{\{i\}}'  \oplus 
 \strategymap_{\AGT \setminus {\{i\}}}''}  . 
  \end{align*}
  Agent $i$'s strategy $\strategymap_{\{i\}} $
    is said to be dominated at $w$ 
    if
    there exists another strategy $\strategymap_{\{i\}} '$
    of $i$
    which dominates $\strategymap_{\{i\}} $  at $w$.
 Agent $i$'s set of dominated strategies
  at $w$ 
  is denoted by $\mathit{Dom}_{M,w}^i$.
\end{definition}

In the next section
we introduce 
a novel
language
that extends
the language
of $\atllogic$
with a family of operators
for rational capability.
It will be  interpreted by means
of the notion of CGSP.

\section{Language }\label{sec:language}

The language of $\atlratlogic$
($\atllogic$ with \emph{minimal rationality}), denoted 
by $\lang_{\atlratlogic}(\ATM, \AGT )$,
is defined by the following grammar: 
\begin{center}\begin{tabular}{lcl}
 $\varphi, \psi$  & $\bnf$ & $p \mid \neg\varphi \mid \varphi \wedge \psi  \mid
 \atlop{\Group}
\nexttime \varphi \mid 
 \atlop{\Group}
\henceforth  \varphi 
 \mid   \atlop{\Group}
 (\until{\varphi}{\psi} )$\\
 &  & $ \atloprat{\Group}
\nexttime \varphi \mid 
 \atloprat{\Group}
\henceforth  \varphi 
 \mid   \atloprat{\Group}
 (\until{\varphi}{\psi} ),$
\end{tabular}\end{center}
where  $p$
ranges over  $\ATM$
and $\Group $ ranges over $2^\AGT$. 
The other Boolean connectives
and constructs 
$\vee, \rightarrow, \leftrightarrow, \top, \bot $
are defined as abbreviations in the usual way.

On the one hand, 
formulas 
$  \atlop{\Group}
\nexttime \varphi$,
$  \atlop{\Group}\henceforth \varphi  $
and 
$  \atlop{\Group}( \until { \varphi    } { \psi    }) $
capture the notion of strategic  capability.
They 
have the usual
$\atllogic$ readings:
$  \atlop{\Group}
\nexttime \varphi$
has to be read 
``coalition $\Group$ has 
a strategy at its disposal 
which guarantees that
$\varphi$ is going to be true in the next state'',
while 
$  \atlop{\Group}\henceforth \varphi  $
has to be read 
``coalition $\Group$ has 
a strategy at its disposal 
which guarantees that   
 $\varphi$ will always be true''. 
Finally, 
$  \atlop{\Group}( \until { \varphi    } { \psi    }) $
has to be read 
``coalition $\Group$ has 
a strategy at its disposal 
which guarantees that   
 $\varphi$ will be true until $\psi$ is true''.
 On the other hand,
 formulas 
 $  \atloprat{\Group}
\nexttime \varphi$,
$  \atloprat{\Group}\henceforth \varphi  $
and 
$  \atloprat{\Group}( \until { \varphi    } { \psi    }) $
capture the notion
of \emph{rational} strategic capability: 
$  \atloprat{\Group}
\nexttime \varphi$
has to be read 
``coalition $\Group$ has 
a \emph{rational} strategy at its disposal 
which guarantees that
$\varphi$ is going to be true in the next state'',
$  \atloprat{\Group}\henceforth \varphi  $
has to be read 
``coalition $\Group$ has 
a \emph{rational} strategy at its disposal 
which guarantees that   
 $\varphi$ will always be true''. 
Finally, 
$  \atloprat{\Group}( \until { \varphi    } { \psi    }) $
has to be read 
``coalition $\Group$ has 
a \emph{rational} strategy at its disposal 
which guarantees that   
 $\varphi$ will be true until $\psi$ is true''.

Formulas of the language 
$\lang_{\atlratlogic}(\ATM, \AGT )$
are evaluated relative to a pair   
$(P,w)$
with 
$P=(M,\Omega_M)$
a CGSP, 
 $M = ( W,  \ACT,  \allowbreak \relActJoint, 
  \valProp  )$
  a CGS,
  $\Omega_M$
  a preference structure for $M$
  and $w \in W$, as follows:
 \begin{eqnarray*}
   (P,w ) \models   p  & \Longleftrightarrow  &  
  p\in    \valProp\big(\lambda(0) \big)
 ,\\
  (P,w ) \models   \atlop{\Group}
 \nexttime \varphi  & \Longleftrightarrow  &  
  \exists \strategymap_\Group \in  \stratsetatl^\Group_M
  \text{ s.t. }\forall \lambda \in \outset(w, \strategymap_\Group),
  \\ & &
\big( P, \lambda(1)  \big)\models \varphi ,\\
  (P,w ) \models   \atlop{\Group}
 \henceforth  \varphi  & \Longleftrightarrow  &  
  \exists \strategymap_\Group \in  \stratsetatl^\Group_M
  \text{ s.t. }\forall \lambda \in \outset(w, \strategymap_\Group),\\
&& \forall k > 0, \big( P, \lambda(k)  \big)\models \varphi ,\\
(   P,w)\models  \atlop{\Group}( \until { \varphi    } { \psi    })
     & \Longleftrightarrow  &  
   \exists  \strategymap_\Group
  \in \stratsetatl^\Group_M \text{ s.t. }
   \forall  \lambda \in \outset(w, \strategymap_\Group) , \\
   && \exists k > 0 \text{ s.t. } \big(  P,\lambda(k) \big)  \models \psi \text{ and } \\
&& \forall 
   h \in \{ 1, \ldots, k-1\}  , \big( P, \lambda(h) \big) \models \varphi ,\\
     (P,w ) \models   \atloprat{\Group}
 \nexttime \varphi  & \Longleftrightarrow  &  
  \exists \strategymap_\Group \in  \stratsetatl^\Group_M
  \text{ s.t. }
  \forall i \in \Group, 
   \\ & & \strategymap_\Group|_{\{i\}}\not \in \mathit{Dom}_{M,w}^i \text{ and }\\
 && \forall \lambda \in \outset(w, \strategymap_\Group),
\big( P, \lambda(1)  \big)\models \varphi ,\\
  (P,w ) \models   \atloprat{\Group}
 \henceforth  \varphi  & \Longleftrightarrow  &  
  \exists \strategymap_\Group \in  \stratsetatl^\Group_M
  \text{ s.t. }   \forall i \in \Group, 
   \\ & & \strategymap_\Group|_{\{i\}}\not \in 
  \mathit{Dom}_{M,w}^i \text{ and } \\
  && \forall \lambda \in \outset(w, \strategymap_\Group),\\
&& \forall k > 0, \big( P, \lambda(k)  \big)\models \varphi ,\\
(   P,w)\models  \atloprat{\Group}( \until { \varphi    } { \psi    })
     & \Longleftrightarrow  &  
   \exists  \strategymap_\Group
  \in \stratsetatl^\Group_M \text{ s.t. }
    \forall i \in \Group, \\
    &&
    \strategymap_\Group|_{\{i\}}\not \in \mathit{Dom}_{M,w}^i \text{ and }\\
&&   \forall  \lambda \in \outset(w, \strategymap_\Group) , \\
   && \exists k > 0 \text{ s.t. } \big(  P,\lambda(k) \big)  \models \psi \text{ and } \\
&& \forall 
   h \in \{ 1, \ldots, k-1\}  , \big( P, \lambda(h) \big) \models \varphi ,
\end{eqnarray*}
where $ \strategymap_\Group|_{\{i\}}$
is the restriction
of function
$\strategymap_\Group$
to $\{i\}\subseteq \Group$.
Note that the difference between
the strategic capability
operators
and the \emph{rational}
strategic capability
operators
lies in the restriction
to non-dominated (minimally rational)
strategies. 
While the  $\atllogic$
strategic capability 
operators existentially quantify
over
the set of collective strategies 
of the coalition $\Group$ (i.e., $  \exists \strategymap_\Group \in  \stratsetatl^\Group_M$),
their  rational  
counterparts
existentially
quantify 
over
the set of collective strategies 
of the coalition $\Group$
such that all 
their individual
components  are not dominated
(i.e., $\forall i \in \Group, 
    \strategymap_\Group|_{\{i\}}\not \in 
  \mathit{Dom}_{M,w}^i$).

The following fragment 
defines 
the language of $\clratlogic$
($\cllogic$ with \emph{Minimal Rationality}), denoted
by $\lang_{\clratlogic}(\ATM, \AGT )$: 
\begin{center}\begin{tabular}{lcl}
 $\varphi, \psi$  & $\bnf$ & $p \mid \neg\varphi \mid \varphi \wedge \psi  \mid
 \atlop{\Group}
\nexttime \varphi \mid  \atloprat{\Group}
\nexttime \varphi, $
\end{tabular}\end{center}
where  $p$
ranges over  $\ATM$
and $\Group $ ranges over $2^\AGT$. 

The languages $\lang_{\atllogic}(\ATM, \AGT )$ of $\atllogic$
and $\lang_{\cllogic}(\ATM, \AGT )$ of $\cllogic$
are defined as usual:
\begin{itemize}
    \item
    $\lang_{\atllogic}(\ATM, \AGT )$ is the fragment of
$\lang_{\atlratlogic}(\ATM, \AGT )$
with no formulas
 $\atloprat{\Group}
\nexttime \varphi ,
 \atloprat{\Group}
\henceforth  \varphi 
,   \atloprat{\Group}
 (\until{\varphi}{\psi} )$, and
 \item 
  $\lang_{\cllogic}(\ATM, \AGT )$ is the fragment of
$\lang_{\clratlogic}(\ATM, \AGT )$
with no formulas
 $\atloprat{\Group}
\nexttime \varphi$.
\end{itemize}

 \begin{example}[Crossing road (cont.)]
 
 Returning to our example, it is easy to check that 
 $(P_{croos}, w_0) \models \atloprat{v_1} \nexttime \neg crash  
 $ 
 that is, agent $v_1$ has a rational strategy to avoid a collision. 
However, the agent $v_1$ has no rational strategy to ensure to \textit{eventually} cross the street, that is, 
$(P_{croos}, w_0) \not \models  \atloprat{v_1}  \top \untill c_1 
 $.

 \end{example}

\section{Tree-like  model property and embedding }\label{sec:resultsblock}

In this section we first  state
the tree-like model property for the language $\lang_{\atlratlogic}(\ATM, \AGT )$.
Thanks to it, we will provide 
a polynomial embedding
of the $\atlratlogic$-language
into the 
$\atllogic$-language
which also offers  a polynomial embedding of 
the
 $\clratlogic$-language
into the
$\cllogic$-language.
Thanks to the embedding 
we will be able
to provide tight complexity results
for satisfiability checking for the 
two languages $\lang_{\atlratlogic}(\ATM, \AGT )$
and
$\lang_{\clratlogic}(\ATM, \AGT )$.

\subsection{Tree-like  model property}

Let
 $\relAct{ }^*$, $\relAct{ }^-$ and $\relAct{ }^+$
be, respectively, the reflexive and  transitive closure,
the inverse, and the transitive closure of 
$\relAct{ } = \bigcup_{  \jactatm \in \JACT }  \relAct{  \jactatm}$.
\begin{definition}\label{def:propertiesCGS}
  Let $M = ( W,  \ACT,   \relActJoint,  \valProp  )$
  be a CGS.
  We say that:
  \begin{itemize}
    \item $M$ has a unique root iff
          there is  a unique $ w_0 \in W$ (called the \emph{root}),
          such that, for every $v \in W$, $w_0 \relAct{ }^* v$;
    \item $M$ has unique predecessors iff
          for every $v \neq w_0 $, the cardinality of $\relAct{ }^-( v)$ is at most one;
    \item $M$ has no cycles iff $\relAct{ }^+$ is irreflexive;

    \item $M$ is tree-like iff  it has a unique root,
 unique predecessors and no cycles;

     \item $M$
    is joint action disjoint iff for every $w \in W$
    and for every $\jactatm, \jactatm' \in \JACT $,
    if $\jactatm\neq \jactatm'$
    then $ \relAct{\jactatm} (w) \cap \relAct{\jactatm' } (w) =\emptyset$.
    
  \end{itemize}

\end{definition}

The 
property of 
``having stable preferences''
defined in Definition \ref{def:cgspref}
and the 
properties of ``having unique root'',
``having unique predecessors'',
``having no cycles'',
``being tree-like''
and ``being joint action disjoint''
defined in 
Definition \ref{def:propertiesCGS}
are abbreviated $\mathit{sp},\mathit{ur},\mathit{up} ,
\mathit{nc}, \allowbreak \mathit{tr}$ and $ \mathit{ad} $. 
  The properties defined in 
  Definition \ref{def:propertiesCGS}
  naturally extend to CGSPs: the CGSP
  $P=(M,\Omega_M)$ satisfies one of these properties 
  if the underlying CGS $M$
  satisfies it.  
  For every $X\subseteq \{\mathit{sp}, \mathit{ur},\mathit{up} ,
\mathit{nc}, \mathit{tr}, \allowbreak \mathit{ad}  \}$,
the class of CGS satisfying the properties in $X$
is denoted by $\classcgspvar{X}$. 
By $\classcgspvar{\emptyset}$, we denote the class of all CGSP.

The following Lemma 
\ref{lemmacrucial}
is a tree-like model property for the language $\lang_{\atlratlogic}(\ATM, \AGT )$.
      The proof of the lemma 
     is given in Appendix A in the supplementary material. 
     The proof relies on a three-step transformation. First, we transform
     a CGSP into a CGSP with joint action disjointness by constructing
     one copy of a state for each possible joint action. Second, we transform
     the resulting CGSP with joint action disjointness into a CGSP with joint action disjointness, unique
     predecessor and no cycles. This second transformation associates every state
     of the original CGSP to a finite path. Third, we generate the submodel
     from the point of evaluation of the original model
     to guarantee unique rootness. 
\begin{lemma}\label{lemmacrucial}
  Let $\varphi \in \lang_{\atlratlogic}(\ATM, \AGT )$.
  Then,
  \begin{itemize}
      \item $\varphi$
      is satisfiable 
      for the class $\classcgspvar{\emptyset } $
      iff  $\varphi$
      is satisfiable 
      for the class $\classcgspvar{\{ \mathit{tr},\mathit{ad}  \}}$,

      \item $\varphi$
      is satisfiable 
      for the class $\classcgspvar{\{\mathit{sp}  \}}$
      iff  $\varphi$
      is satisfiable 
      for the class $\classcgspvar{\{ \mathit{sp},\mathit{tr},\mathit{ad}  \}}$. 
  \end{itemize}

\end{lemma}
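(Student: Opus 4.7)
The plan is to follow a three-step transformation that successively enforces joint action disjointness, unique predecessors together with acyclicity, and uniqueness of the root, while preserving the truth value of every formula in $\lang_{\atlratlogic}(\ATM, \AGT )$ at the designated point of evaluation. The challenge, compared to a standard unravelling argument for $\atllogic$, is that the language contains the rational modalities $\atloprat{\Group}$, so the transformations must preserve not just transitions, choices and valuations, but also the sets $\mathit{Dom}_{M,w}^i$ of dominated individual strategies; and in the stable-preference case they must additionally preserve condition $(\mathbf{SP})$.

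For the first step, given a CGSP $P=(M,\Omega_M)$ and a state $w_0$ at which the target formula $\varphi$ is true, I would build a joint-action-disjoint CGSP $P'$ by duplicating each non-root state once per joint action reaching it. A state of $P'$ is a pair $(v,\jactatm)$ with $v\in W$ and $\jactatm\in\JACT$; the relation $\relAct{\jactatm'}^{P'}$ sends $(v,\jactatm)$ to $(v',\jactatm')$ whenever $v\relAct{\jactatm'}v'$ in $M$. Valuation and choice sets at $(v,\jactatm)$ are copied from $v$, so the projection $\pi$ to the first coordinate preserves all structural ingredients. Preferences at $(v,\jactatm)$ are defined by pulling $\preceq_{i,v}$ back along the canonical bijection between computations of $P'$ starting at $(v,\jactatm)$ and computations of $M$ starting at $v$.

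For the second step, starting from $P'$, I would construct the tree unfolding $P^t$ whose states are finite paths of $P'$ emanating from the chosen initial copy of $w_0$, with transitions, choices and valuation read off from the last state. Joint action disjointness is inherited from $P'$, and the construction yields unique predecessors and no cycles by design. Preferences on computations of $P^t$ are again defined by pullback through the last-state projection. The third step is simply to take the submodel reachable from the trivial one-state path, which yields a unique root while keeping all properties already established.

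The main obstacle, and the reason to choose these projections canonically, is showing that the rational-capability clauses survive, which reduces to proving that each projection induces a bijection between individual strategies in the new and the old structure that commutes with $\outset_M$ and with the preference preorders; from this a strategy in the new structure is dominated at some state iff its image is dominated at the projected state. With this observation in hand, truth preservation becomes a routine induction on $\varphi$ in which the $\atloprat{\Group}$ cases are handled by the same argument as the $\atlop{\Group}$ cases, restricted to strategies whose individual components are non-dominated. For the second item of the lemma I would additionally check that $(\mathbf{SP})$ is inherited at each step: since preferences on the new structure are defined by pullback along the projections, and these projections commute with prepending a predecessor state, condition $(\mathbf{SP})$ on the new CGSP follows directly from $(\mathbf{SP})$ on the old one.
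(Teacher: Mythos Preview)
Your three-step plan matches the paper's exactly: first enforce joint-action disjointness by tagging states with joint actions, then take the tree unfolding via finite paths, then restrict to the submodel generated by the designated point. The paper likewise defines preferences in each new structure by pullback along the state projection, and handles the second item by observing that $(\mathbf{SP})$ is inherited through these pullbacks.

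One caveat: your reduction of the rational-capability clauses to a \emph{bijection} between individual strategies in the new and old structures is accurate for Step~2 (tree unfolding of an already action-disjoint model) and for Step~3 (generated submodel), but not for Step~1. In the action-disjoint copy $P'$, a finite path from $(v,\jactatm)$ records not only the visited states of $M$ but also the joint action taken at each transition; when $M$ itself is not action-disjoint, several such $P'$-paths project to the same finite state-path in $M$, so $P'$ has strictly more finite paths---hence strictly more perfect-recall strategies---than $M$, and there is no canonical bijection (nor a canonical ``image'' in $M$ of an arbitrary $P'$-strategy). Your ``canonical bijection between computations'' in Step~1 fails for the same reason. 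The pullback definition of preferences is still well-defined, since it only uses the projection as a map, and the truth-preservation induction still goes through; but the argument for the $\atloprat{\Group}$ clauses in Step~1 has to be organised around lifting $M$-strategies to $P'$ and checking that non-dominatedness transfers in both directions, rather than around a bijection. The paper's appendix is no more explicit on this point than you are (it simply declares the induction ``routine''), so this is a shared gap in detail rather than a divergence in approach.
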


\subsection{Embedding}

Let us consider 
the following translation
$\mathit{tr}:
\lang_{\atlratlogic}(\ATM, \AGT ) \longrightarrow
\lang_{\atllogic}(\ATM^+, \AGT )
$ 
with $\ATM^+ =\ATM\cup  \big\{ \mathit{rat}_i \suchthat
i \in \AGT 
\big\}$:
\begin{align*}
  \mathit{tr}(p ) &= p,\\
\mathit{tr}(\neg \varphi  ) &= \neg \mathit{tr}( \varphi  ),\\
\mathit{tr}( \varphi \wedge
\psi ) &=   \mathit{tr}( \varphi  ) \wedge \mathit{tr}( \psi   ) ,\\
\mathit{tr}( \atlop{\Group}
\nexttime \varphi ) &=  \atlop{\Group} \nexttime \mathit{tr}( \varphi  )  ,\\
\mathit{tr}( \atlop{\Group}
\henceforth \varphi ) &=  \atlop{\Group}\henceforth \mathit{tr}( \varphi  ),\\
\mathit{tr}\big( \atlop{\Group}
(\until{\varphi}{\psi})  \big) &=  \atlop{\Group} \big(\until{\mathit{tr}(\varphi)}{\mathit{tr}(\psi)}  \big),\\
\mathit{tr}( \atloprat{\Group}
\nexttime \varphi ) &=  \atlop{\Group} \nexttime
\big(\mathit{rat}_C \wedge \mathit{tr}( \varphi  )\big)  ,\\
\mathit{tr}( \atloprat{\Group}
\henceforth \varphi ) &=  \atlop{\Group}\henceforth \big(\mathit{rat}_C \wedge \mathit{tr}( \varphi  )\big),\\
\mathit{tr}\big( \atloprat{\Group}
(\until{\varphi}{\psi})  \big) &=  \atlop{\Group}\big(
\until{
(\mathit{rat}_C \wedge \mathit{tr}(\varphi) )}
{(\mathit{rat}_C \wedge \mathit{tr}(\psi) ) } \big)  ,
\end{align*}
 with $\mathit{rat}_C\eqdef \bigwedge_{i \in \AGT}  \mathit{rat}_i$
 and the special atomic formula 
 $\mathit{rat}_i$
 standing for ``agent $i$
 is rational''.

The idea of the translation
is to transform 
a rational capability
operator
into its 
ordinary capability
counterpart using the special
atomic formulas 
$\mathit{rat}_i$. Specifically, the fact that
a coalition $\Group$
has a rational strategy to ensure a given outcome 
is translated into the fact that 
the coalition $\Group$
has a strategy
to force the  outcome 
by ensuring that all its members are rational.
As the following theorem highlights,
satisfiability
of $\atlratlogic$-formulas
is 
reducible to 
satisfiability
of $\atllogic$-formulas
using 
the translation 
$  \mathit{tr}$.
      The proof of the theorem
     is given in Appendix B in the supplementary material.
     The proof relies on a  non-trivial
     construction which 
     transforms
     a tree-like CGSP into a new tree-like  CGSP
     in which an atomic formula of type $   \mathit{rat}_i$
     matches 
     the computations that are generated by a 
     non-dominated strategy  of agent $i$.
     The assumption of stable preferences is essential
     to guarantee that this matching exists.

 \begin{theorem}\label{theo:embedding}
     Let $\varphi \in \lang_{\atlratlogic}(\ATM, \AGT )$.
     Then, 
     $\varphi $
     is satisfiable for  the class 
     $\classcgspvar{\{\mathit{sp} \}}$
     iff $\big( \bigwedge_{i \in \AGT }
     \atlop{ \{i \}} \henceforth  
      \mathit{rat}_i \big)  \wedge
     \mathit{tr}(\varphi) $
     is satisfiable for the class
          $\classcgspvar{\{\mathit{sp} \}}$.
 \end{theorem}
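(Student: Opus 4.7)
The plan is to prove the two directions of the equivalence separately, using Lemma~\ref{lemmacrucial} to reduce both to the class $\classcgspvar{\{\mathit{sp},\mathit{tr},\mathit{ad}\}}$ of tree-like, joint-action disjoint CGSPs with stable preferences.

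For the forward direction, given a CGSP $(M,\Omega_M)$ with stable preferences satisfying $\varphi$ at some state $w$, I would apply Lemma~\ref{lemmacrucial} to assume without loss of generality that $M$ is tree-like and joint-action disjoint, and that $w$ is its root $w_0$. I would then construct a CGSP $P^+$ over $\ATM^+$ whose underlying CGS coincides with $M$---so every $\atlop{\Group}$-operator is interpreted identically in $P$ and $P^+$---but whose valuation extends $\valProp$ with a labelling of each atom $\mathit{rat}_i$ designed to track the non-dominance of strategies of agent $i$. Concretely, exploiting the tree structure (each state $v$ has a unique history from $w_0$), I would place $\mathit{rat}_i$ in $\valProp^+(v)$ exactly when the history leading to $v$ is consistent with a non-dominated strategy of $i$, using stable preferences to propagate the label between predecessor and successor states consistently down the tree.

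Once $P^+$ is built, the first conjunct $\bigwedge_{i \in \AGT} \atlop{\{i\}} \henceforth \mathit{rat}_i$ holds because, for each agent $i$, any fixed non-dominated strategy $\strategymap^*_{\{i\}}$ at $w_0$ witnesses that $i$ can force $\mathit{rat}_i$ at every state visited along the computations it generates. The second conjunct $\mathit{tr}(\varphi)$ is established by induction on $\varphi$. The Boolean and pure $\atlop{\Group}$ cases are immediate, since the original atoms and the transition relation are unchanged. The critical cases are those of the form $\atloprat{\Group} O \psi$: here the key fact to check is that, under the chosen labelling, a collective strategy $\strategymap_\Group$ produces only $\mathit{rat}_C$-marked outcomes at the positions relevant for $O$ if and only if each individual component $\strategymap_\Group|_{\{i\}}$ is non-dominated at $w_0$. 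Stable preferences is essential in both directions of this equivalence, as it guarantees that non-dominance propagates along the paths a strategy generates, so that local state labels faithfully encode the global strategy-level property.

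For the backward direction, I would start with a CGSP $P^+$ over $\ATM^+$ with stable preferences satisfying the translated conjunction at $w_0$, and forget the interpretations of the new atoms to obtain a CGSP $P$ over $\ATM$ (which retains stable preferences, since the underlying transition structure is unchanged). An induction on $\varphi$ gives $(P,w_0)\models\varphi$. Only the $\atloprat{\Group}$ cases are non-trivial: any witness strategy for $\atlop{\Group} O (\mathit{rat}_C \wedge \mathit{tr}(\psi))$ forces $\mathit{rat}_C$ along its outcomes, and the marking characterisation established in the forward direction, read in reverse, shows that its individual components must be non-dominated; the extra conjunct $\bigwedge_{i \in \AGT} \atlop{\{i\}} \henceforth \mathit{rat}_i$ supplies the non-dominated strategies for each agent that are needed to harmonise witnesses across sub-formulas.

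The main obstacle is the very design of the $\mathit{rat}_i$-marking in the forward direction, and the verification of its two-way correspondence with non-dominance. A naive labelling, such as \emph{$v$ is $\mathit{rat}_i$ iff some non-dominated $\strategymap_{\{i\}}$ reaches $v$}, validates one direction of the correspondence but typically fails the other, since a dominated strategy may visit exactly the same states as a non-dominated one. The construction must therefore be more refined, exploiting the tree structure to associate each labelled history with a specific non-dominated continuation; the role of stable preferences is precisely to make this association well-defined across successive depths, so that the state-level label $\mathit{rat}_i$ faithfully encodes the strategy-level notion of non-dominance.
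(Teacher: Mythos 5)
Your forward direction is essentially the paper's argument: apply Lemma~\ref{lemmacrucial} to pass to a tree-like, joint-action-disjoint model, keep the transition structure and preferences, and relabel each state $v$ with $\mathit{rat}_i$ according to whether the step from its unique predecessor $u$ into $v$ is the first move of some strategy of $i$ that is non-dominated at $u$; stable preferences then propagate non-dominance down the tree so that the first conjunct holds and the $\atloprat{\Group}$-cases of the induction go through. (Your worry that this ``naive'' labelling is too coarse is not borne out: thanks to joint-action disjointness the successor state determines $i$'s action at $u$, and the capability operators only quantify existentially over strategies, so agreeing with \emph{some} non-dominated strategy one step at a time is exactly what is needed.)

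The backward direction, however, has a genuine gap. You propose to take the model $P^+$ of $\big(\bigwedge_i \atlop{\{i\}}\henceforth\mathit{rat}_i\big)\wedge\mathit{tr}(\varphi)$, forget the atoms $\mathit{rat}_i$, and keep its preference structure, invoking ``the marking characterisation established in the forward direction, read in reverse.'' But that characterisation was a property of the specific valuation you \emph{constructed} in the forward direction; in an arbitrary model of the translated formula the atoms $\mathit{rat}_i$ bear no relation whatsoever to the given preferences. For instance, if all agents are totally indifferent then every strategy is non-dominated, so $\atloprat{\Group}\nexttime p$ collapses to $\atlop{\Group}\nexttime p$, while $\atlop{\Group}\nexttime(\mathit{rat}_C\wedge p)$ can fail for a restrictive $\mathit{rat}_i$-labelling; taking $\varphi=\neg\atloprat{\Group}\nexttime p$ then breaks your induction. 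The missing idea is that in this direction you must \emph{discard} the given preferences and build a new preference structure $\Omega_M'$ from the $\mathit{rat}_i$-labelling: the paper assigns utility $1$ at the root to exactly those computations generated by a strategy of $i$ all of whose outcomes satisfy $\mathit{rat}_i$ henceforth, and $0$ otherwise, and then propagates this to descendants via the stable-preference condition. This makes the non-dominated strategies of $i$ coincide with those forcing $\henceforth\mathit{rat}_i$ (the conjunct $\atlop{\{i\}}\henceforth\mathit{rat}_i$ guarantees such strategies exist, so the utility-$1$ class is non-empty), after which the induction succeeds.
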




As the following theorem highlights, 
  if we restrict  to the language
$ \lang_{\clratlogic}(\ATM, \AGT )$
the translation $\mathit{tr}$
also  
provides
an  embedding for the general
class
 $\classcgspvar{ }$.
The proof of  Theorem \ref{theo:embedding2}
is a straightforward
adaptation 
of the proof of Theorem \ref{theo:embedding}.
Instead of matching
an atomic formula
  $   \mathit{rat}_i$
  with a computation,
  for every state in a tree-like CGSP
  we match  $   \mathit{rat}_i$ with
  a successor 
  of this state along a computation generated  by a 
     non-dominated strategy  of agent $i$.
 The assumption of stable preferences
 is no longer required
 since the translation
 of formula $ \atloprat{\Group}
\nexttime \varphi $
only refers
to the truth values
of atoms $   \mathit{rat}_i$
in the next state. 

 \begin{theorem}\label{theo:embedding2}
     Let $\varphi \in \lang_{\clratlogic}(\ATM, \AGT )$.
     Then, 
     $\varphi $
     is satisfiable for  the class 
     $\classcgspvar{} $
     iff $\big( \bigwedge_{i \in \AGT }
     \atlop{ \{i \}} \nexttime 
      \mathit{rat}_i \big)  \wedge
     \mathit{tr}(\varphi) $
     is satisfiable for the class
          $\classcgspvar{} $.
     
 \end{theorem}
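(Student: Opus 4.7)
The proof of Theorem \ref{theo:embedding2} is a straightforward adaptation of the proof of Theorem \ref{theo:embedding} in which $\mathit{rat}_i$ is matched to single successors rather than to whole computations. The plan is to re-run the same two-way construction, with the correspondence between $\mathit{rat}_i$ and non-dominated strategies localized to the one-step semantics of $\nexttime$; since no $\henceforth$ or $\until$ appears in $\lang_{\clratlogic}(\ATM,\AGT)$, the stable-preference assumption becomes superfluous because we never have to compare preferences at distinct time points.

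For the $\Rightarrow$ direction, suppose $(P,w_0)\models \varphi$ for some $P=(M,\Omega_M)\in\classcgspvar{}$. By the analogue of Lemma \ref{lemmacrucial} for $\lang_{\clratlogic}(\ATM,\AGT)$ one may assume $P$ is tree-like and joint-action-disjoint. Construct $P'$ by extending the valuation: for every non-root state $v$, whose unique predecessor under action-disjointness is a unique pair $(u,\jactatm)$ with $u\relAct{\jactatm}v$, declare $v\models \mathit{rat}_i$ iff $\jactatm(i)$ is the first action at $u$ of some non-dominated strategy for agent $i$. A structural induction on subformulas $\psi$ of $\varphi$ then yields $(P,v)\models \psi$ iff $(P',v)\models \mathit{tr}(\psi)$ for every reachable $v$; the crucial case is $\psi=\atloprat{\Group}\nexttime\chi$, where the non-dominance of each $\strategymap_\Group|_{\{i\}}$ at $w$ transfers exactly to the $\mathit{rat}_C$-label on each successor by construction. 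The additional conjunct $\bigwedge_{i\in\AGT}\atlop{\{i\}}\nexttime \mathit{rat}_i$ at $w_0$ is immediate: for each $i$, fix any non-dominated strategy of $i$ at $w_0$, and its first action makes $\mathit{rat}_i$ true at every one-step successor.

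For the $\Leftarrow$ direction, take $P'$ with valuation interpreting the atoms $\mathit{rat}_i$ and with $(P',w_0)\models \bigl(\bigwedge_{i\in\AGT}\atlop{\{i\}}\nexttime \mathit{rat}_i\bigr)\wedge \mathit{tr}(\varphi)$. After a tree-like, joint-action-disjoint transformation, endow the underlying CGS with a preference structure $\Omega$ that stratifies computations according to the goodness of the $i$-component of the joint action taken out of $w$: strategies whose first $i$-action forces $\mathit{rat}_i$-successors across all opponent choices are strictly preferred to those whose first $i$-action admits a non-$\mathit{rat}_i$-successor; computations with equally-classified first $i$-actions are indifferent. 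This localization makes a strategy for $i$ non-dominated at $w$ precisely when its first action witnesses $\atlop{\{i\}}\nexttime \mathit{rat}_i$ at $w$. A second structural induction shows $(P,v)\models \psi$ iff $(P',v)\models \mathit{tr}(\psi)$, with the additional conjunct at $w_0$ guaranteeing that the non-dominated witnesses required by the $\atloprat{\Group}\nexttime$ semantics exist exactly where the translation needs them.

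The main obstacle is the preference design in the $\Leftarrow$ direction: one must verify that, under the chosen stratification, dominance between two strategies for agent $i$ at $w$ is decided solely by the $\mathit{rat}_i$-class of the immediate successors of $w$, so that the infinite tail of a strategy never breaks the correspondence. Because $\lang_{\clratlogic}(\ATM,\AGT)$ only reasons one step ahead, this verification is conceptually simpler than in Theorem \ref{theo:embedding} and does not rely on any stability assumption on preferences.
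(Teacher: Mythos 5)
Your proposal matches the paper's own argument: the paper proves Theorem \ref{theo:embedding2} exactly as a one-step adaptation of the proof of Theorem \ref{theo:embedding}, labelling a successor state with $\mathit{rat}_i$ when it is reached along a computation generated by a non-dominated strategy of $i$ (equivalently, under joint-action disjointness, when the incoming joint action's $i$-component is the first action of such a strategy), and in the converse direction inducing preferences that depend only on whether an agent's first action forces $\mathit{rat}_i$ at all successors, which is why stable preferences become unnecessary. The constructions, the use of the tree-like/action-disjoint normal form from Lemma \ref{lemmacrucial}, and the justification for dropping the $\mathit{sp}$ assumption are all essentially identical to the paper's.
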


The following complexity result
is a direct corollary of Theorems \ref{theo:embedding} and \ref{theo:embedding2},
the fact that the size of $\mathit{tr}(\varphi) $
is polynomial
in the size of the input formula $\varphi$
and the fact that satisfiability checking
for $\atllogic$
is \Exptime-complete \cite{DrimmelenLICS}
and
satisfiability checking
for $\cllogic$
is \Pspace-complete \cite{DBLP:journals/logcom/Pauly02}.

  \begin{corollary}
Checking satisfiability of
formulas in the language $ \lang_{\atlratlogic}(\ATM, \AGT )$ relative to  the class    $\classcgspvar{\{\mathit{sp} \}}$ 
is  \Exptime-complete.
It is \Pspace-complete
relative to  
both classes   
$\classcgspvar{ } $
and $\classcgspvar{\{\mathit{sp} \}}$, 
when 
restricting to the fragment $ \lang_{\clratlogic}(\ATM, \AGT )$. 
  \end{corollary}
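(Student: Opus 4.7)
The plan is to establish the upper bounds via the embeddings of Theorems~\ref{theo:embedding} and~\ref{theo:embedding2}, and the lower bounds by observing that $\atllogic$ and $\cllogic$ sit syntactically inside their rational counterparts and that any CGS can be trivially extended to a CGSP (with stable preferences) without affecting the truth of preference-free formulas.

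For the upper bounds, I would first argue that $\mathit{tr}$ increases the formula size only by a constant factor per connective, since each translation clause is local: the rational operators are rewritten by inserting the conjunct $\mathit{rat}_C = \bigwedge_{i \in \AGT} \mathit{rat}_i$, whose size is $O(|\AGT|)$ and hence polynomial in the input (taking the agents appearing in $\varphi$, or treating $\AGT$ as part of the input). The extra global conjunct $\bigwedge_{i \in \AGT} \atlop{\{i\}} \henceforth \mathit{rat}_i$ (respectively with $\nexttime$ in the $\clratlogic$ case) is also polynomial. Hence Theorem~\ref{theo:embedding} yields a polynomial reduction of $\atlratlogic$-satisfiability over $\classcgspvar{\{\mathit{sp}\}}$ to $\atllogic$-satisfiability, which is in \Exptime{} by~\cite{DrimmelenLICS}; and Theorem~\ref{theo:embedding2} yields a polynomial reduction of $\clratlogic$-satisfiability over $\classcgspvar{}$ to $\cllogic$-satisfiability, which is in \Pspace{} by~\cite{DBLP:journals/logcom/Pauly02}.

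For the \Pspace{} bound of $\clratlogic$ over $\classcgspvar{\{\mathit{sp}\}}$, I would note that the proof of Theorem~\ref{theo:embedding2} only requires matching $\mathit{rat}_i$ with the immediate successor states reached by a non-dominated strategy, so the same embedding transfers to the stable-preference class; alternatively, one checks directly that the $\atllogic$-to-$\classcgspvar{\{\mathit{sp}\}}$ model construction can be carried out with trivial (constant) preferences, which automatically satisfy \textbf{SP}.

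For the lower bounds, I would observe that the language $\lang_{\atllogic}(\ATM,\AGT)$ is a syntactic sublanguage of $\lang_{\atlratlogic}(\ATM,\AGT)$ and does not mention any rational-capability operator. Given an $\atllogic$-formula $\varphi$ and a CGS $M$, extending $M$ with the trivial preference structure $\Omega_M$ in which $\lambda \preceq_{i,w} \lambda'$ holds for all $i,w,\lambda,\lambda'$ yields a CGSP $(M,\Omega_M) \in \classcgspvar{\{\mathit{sp}\}}$ (the constraint \textbf{SP} is immediate because both sides of the biconditional are vacuously true). Since the semantics of preference-free operators does not depend on $\Omega_M$, we have $\varphi$ satisfiable in a CGS iff $\varphi$ satisfiable in $\classcgspvar{\{\mathit{sp}\}}$, giving \Exptime-hardness for $\atlratlogic$ and, by the same reasoning restricted to the $\nexttime$-fragment, \Pspace-hardness for $\clratlogic$ over both $\classcgspvar{}$ and $\classcgspvar{\{\mathit{sp}\}}$.

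No step is conceptually delicate: the only point that needs a brief verification is that the trivial preference structure indeed satisfies \textbf{SP} and that the size bound on $\mathit{tr}(\varphi)$ is uniformly polynomial once $|\AGT|$ is bounded by the input size; both are routine. The substantive content has already been absorbed into Theorems~\ref{theo:embedding} and~\ref{theo:embedding2}, so the corollary is essentially a composition of the polynomial reduction with the known complexities of $\atllogic$ and $\cllogic$ satisfiability.
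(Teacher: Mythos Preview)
Your proposal is correct and follows the same approach as the paper, which simply states that the corollary is a direct consequence of Theorems~\ref{theo:embedding} and~\ref{theo:embedding2}, the polynomial size of $\mathit{tr}(\varphi)$, and the known complexity results for $\atllogic$ and $\cllogic$. Your write-up is in fact more explicit than the paper's: you spell out the lower-bound argument via trivial preferences (which the paper leaves implicit in the word ``complete''), and you flag that the \Pspace{} upper bound for $\clratlogic$ over $\classcgspvar{\{\mathit{sp}\}}$ requires observing that the construction behind Theorem~\ref{theo:embedding2} also goes through in the stable-preference class---a point the paper's one-sentence justification does not isolate.
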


  Before concluding this section,
  we would like to highlight the fact that the translation
  $\mathit{tr}$
  from
  the language $\lang_{\atlratlogic}(\ATM, \AGT )$ to
  the language $\lang_{\atlratlogic}(\ATM^+, \AGT )$
  is adequate 
for  the stable preference
semantics
only. It does not
work for the general
 class $\classcgspvar{}$. 
To see this,
it is sufficient to observe
that, on the one hand, 
the following formula
is valid
for the general 
class 
$\classcgspvar{}$:
\begin{align*}
  \phi_{\Group,p}  =_{\mathit{def}}\atlop{\Group}\henceforth (\mathit{rat}_\Group \wedge p)\to \atlop{\Group}\nexttime \atlop{\Group}\henceforth (\mathit{rat}_\Group \wedge p). 
\end{align*}
Indeed, 
$  \phi_{\Group,p}$
is a basic
validity
of $\atllogic$. 
Moreover, 
we have 
\begin{align*}
\mathit{tr}\big(
\atloprat{\Group}\henceforth p\to \atlop{\Group}\nexttime \atloprat{\Group}\henceforth p
\big) =
\phi_{ \Group,p}. 
\end{align*}
But, on the other hand,
the formula 
$\atloprat{\Group}\henceforth p\to \atlop{\Group}\nexttime \atloprat{\Group}\henceforth p$
is not valid
for the 
class 
$\classcgspvar{}$, 
which is the same thing
as saying that 
$\neg \big( \atloprat{\Group}\henceforth p\to \atlop{\Group}\nexttime \atloprat{\Group}\henceforth p\big) $
is  satisfiable for $\classcgspvar{}$. 
A counter-model
for this formula
is given in Appendix C. 

 Thus,  there is no analog
 of Theorem \ref{theo:embedding}
 for the class
 $\classcgspvar{}$
 since there exists  a formula
 $\varphi $
 (i.e., $\neg \big( \atloprat{\Group}\henceforth p\to \atlop{\Group}\nexttime \atloprat{\Group}\henceforth p\big) $)
which 
 is  satisfiable for
 $\classcgspvar{}$
 and, at the same time, 
 $
 \big( \bigwedge_{i \in \AGT }
      \atlop{ \{i \}} \henceforth 
       \mathit{rat}_i \big)  \wedge
 \mathit{tr}(\varphi)$
is not satisfiable for 
$\classcgspvar{}$
since $\neg  \mathit{tr}(\varphi)$
(i.e., $ \neg \neg \phi_{\Group,p}$
which is equivalent to $  \phi_{\Group,p}$)
is valid
for 
$\classcgspvar{}$.


\section{Axiomatization for $\clratlogic$}\label{axiomatization}
In this section, we first introduce an axiomatic system for $\clratlogic$ and then show its soundness and completeness.

\begin{definition}[Axiomatic system for $\clratlogic$]
\label{definition: axiomatic system for R-CL}
The axiomatic system for $\clratlogic$ consists of the following axioms:
\begin{align}
&  \text{All tautologies of propositional logic} \tag{$\top$} \label{ax:taut} 
\\
&  \neg \atlop{\Group}\nexttime \bot \tag{$\mathtt{A}\text{-}\mathtt{NAAA}$} \label{ax:NAAA} 
\\
& \atloprat{\emptyset } \nexttime \phi \leftrightarrow \atlop{\emptyset } \nexttime \phi \tag{$\mathtt{A}\text{-}\mathtt{NP}_{\emptyset}$} \label{ax:NP}
\\
& 
\atloprat{\Group} \nexttime \phi \to \atlop{\Group } \nexttime \phi \tag{$\mathtt{A}\text{-}\mathtt{MR}$} \label{ax:MR}
\\
& 
\atlop{\emptyset} \nexttime (\phi \to \psi) \rightarrow (\atlop{\Group}\nexttime \phi \rightarrow \atlop{\Group} \nexttime \psi ) \tag{$\mathtt{A}\text{-}\mathtt{MG}0$} \label{ax:MG0}
\\
& 
\atloprat{\emptyset} \nexttime (\phi \to \psi) \rightarrow (\atloprat{\Group}\nexttime \phi \rightarrow \atloprat{\Group} \nexttime\psi) \tag{$\mathtt{A}\text{-}\mathtt{MG}1$} \label{ax:MG1}
\\
&
\atlop{\Group}\nexttime \phi \rightarrow \atlop{\Group'}\nexttime \phi, \text{ for }\Group \subseteq \Group' \tag{$\mathtt{A}\text{-}\mathtt{MC}0$} \label{ax:MC0}
\\ 
& 
\atloprat{\Group}\nexttime \phi \rightarrow \atloprat{\Group'}\nexttime \phi , \text{ for }\Group \subseteq \Group' 
\tag{$\mathtt{A}\text{-}\mathtt{MC}1$} \label{ax:MC1}
\\ 
&
\atlop{\Group}\nexttime \top
\tag{$\mathtt{A}\text{-}\mathtt{NCS}$} \label{ax:Ser}
\\ 
&
(\atlop{\Group} \nexttime \phi \land \atlop{\Group'} \nexttime \psi) \rightarrow   \atlop{\Group \cup \Group'} \nexttime (\phi \land \psi),  \nonumber  \\ & 
\text{ for }\Group \cap \Group' = \emptyset
\tag{$\mathtt{A}\text{-}\mathtt{Sup}0$} \label{ax:Sup0}
\\ 
&
(\atloprat{\Group} \nexttime \phi \land \atloprat{\Group'} \nexttime \psi) \rightarrow \atloprat{\Group \cup \Group'} \nexttime (\phi \land \psi),  \nonumber  \\ &  
\text{ for }\Group \cap \Group' = \emptyset
\tag{$\mathtt{A}\text{-}\mathtt{Sup}1$} \label{ax:Sup1}
\\ 
&
\atlop{\Group} \nexttime (\phi \lor \psi) \rightarrow (\atlop{\Group} \nexttime \phi \lor \atlop{\AGT} \nexttime \psi)
\tag{$\mathtt{A}\text{-}\mathtt{Cro}$} \label{ax:Cro}
\\ 
&
\atloprat{\AGT} \nexttime (\phi \lor \psi) \rightarrow (\atloprat{\AGT} \nexttime \phi \lor \atloprat{\AGT} \nexttime \psi)
\tag{$\mathtt{A}\text{-}\mathtt{DGRC}$} \label{ax:DGRA}
\end{align}
and the following rules of inference:
\begin{align}
& \dfrac{\;\phi, \phi \rightarrow \psi \;}
{\;\psi \;} \tag{$\mathtt{MP}$} \label{ir:MP}
\\
& \dfrac{\; \phi \;}{\; \atlop{\emptyset}\nexttime \phi \;} 
  \tag{$\mathtt{N}$} \label{ir:N}
\end{align}
\end{definition}

The names of axioms and inference rules reflect  their intuitions. Axiom \ref{ax:NAAA} is called \textit{no absurd available action} and its intuition is that a coalition's available joint action
 cannot
ensure a logically absurd result. Axiom \ref{ax:NP} is called \textit{no preference for empty coalition}. As the empty coalition has no preference, its rational capability coincides  with its ordinary capability. Axiom \ref{ax:MR} is called \textit{monotonicity of rational capability}:
if an outcome can be ensured by a coalition
in a rational way 
then it can be ordinarily  ensured by the coalition. 
Axiom \ref{ax:MG0} captures \textit{monotonicity of goals}.
Axiom
\ref{ax:MG1} 
is its rational counterpart, namely, 
\textit{monotonicity of goals under rationality}. Their intuition is that ordinary
and rational capabilities are monotonic with respect to goals. Similarly, Axiom \ref{ax:MC0} and \ref{ax:MC1} capture,  respectively,  \textit{monotonicity of coalitions} and \textit{monotonicity of coalitions under rationality}:
ordinary
and rational capability are monotonic with respect to coalitions. Axiom \ref{ax:Ser} captures 
\textit{non-empty choice set},
namely, the fact that a coalition has always
an available joint action. 
Axiom \ref{ax:Sup0} and Axiom \ref{ax:Sup1} capture, respectively, \textit{superadditivity} and \textit{superadditivity under  rationality}.  Axiom \ref{ax:Cro} is the so-called \textit{crown}
axiom: it was called this way  in 
\cite{goranko_strategic_2013}
since it corresponds to the fact that the effectivity function of a game is a crown. Axiom \ref{ax:DGRA}  captures \textit{determinism of 
the grand coalition's rational collective choice}. The inference rules are \textit{modus ponens} (\ref{ir:MP}) and \textit{necessitation for the empty coalition} (\ref{ir:N}).
Note that $ \atlop{\emptyset}\nexttime$
is a normal modal operator because
of the validity-preserving rule
of inference \ref{ir:N}
and the fact that the following formula is valid:
\begin{align*}
    \atlop{\emptyset} \nexttime (\phi \to \psi) \rightarrow (\atlop{\emptyset }\nexttime \phi \rightarrow \atlop{\emptyset} \nexttime \psi ), 
\end{align*}
which is 
  an instance of Axiom 
 \ref{ax:MG0}.

The style of our axiomatic system differs
from $\cllogic$'s \cite{DBLP:journals/logcom/Pauly02}.
We want to get the axiomatization
as close as possible to the semantics
by having as many correspondences as possible between
axioms and semantic constraints. 
In particular,
we have the following correspondences:
Axiom  \ref{ax:Ser} 
corresponds to 
Constraint C3 in Definition \ref{CGS};
Axioms \ref{ax:Sup0} and 
\ref{ax:Sup1}
correspond to Constraint  C2;
Axioms \ref{ax:Cro} and 
\ref{ax:DGRA} correspond to  Constraint C1.

In Appendix D
in the supplementary material 
we show that the
 axiomatic system
 of 
 $\cllogic$ 
 is derivable from $\clratlogic$. 
 Note that the formula 
$\neg \atlop{\emptyset }\nexttime \neg \phi \to \atlop{\AGT}\nexttime \phi $ is an axiom of $\cllogic$. However, by Fact \ref{fact: some invalidities}
whose proof is given in 
Appendix C
in the supplementary material, $\neg \atloprat{\emptyset }\nexttime \neg \phi \to \atloprat{\AGT}\nexttime \phi $ is not valid. So,  the logic for the fragment of $\clratlogic$ only containing rational capability operators  
is substantially
different from  $\cllogic$.
\begin{fact}\label{fact: some invalidities}
The following
two formulas are not valid
for the class  $\classcgspvar{} $: 
  \begin{align}
    & \neg \atloprat{\emptyset }\nexttime \neg \phi \to \atloprat{\AGT}\nexttime \phi \tag{$\mathtt{Max}_{\AGT}1$} \label{invalid: Max1}
    \\
    & \atloprat{\Group} \nexttime (\phi \lor \psi) \rightarrow (\atloprat{\Group} \nexttime \phi \lor \atloprat{\AGT} \nexttime \psi)
    \tag{$\mathtt{Cro}1$} \label{invalid: Cro1}
  \end{align}
\end{fact}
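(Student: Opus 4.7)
The plan is to construct, for each of the two formulas, an explicit two-agent CGSP and a point of evaluation at which the formula is false. Since $\classcgspvar{}$ imposes no additional constraints beyond Definitions \ref{CGS} and \ref{def:cgspref} (no stable-preference requirement), it suffices to carefully arrange the transition structure and the individual preference preorders on $\historyset_{M,w_0}$.

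For $\mathtt{Max}_{\AGT}1$, I would take $\AGT=\{1,2\}$, each agent with actions $\{a,b\}$, and four distinct successor states $w_{xy}$ of the root $w_0$, one for each joint action $(x,y)$, where every $w_{xy}$ is a self-looping sink (with a unique joint action available) so that the four infinite computations $\lambda_{xy}$ from $w_0$ are in bijection with the four joint actions. Let $p$ hold only at $w_{aa}$. I would set agent $1$'s preference at $w_0$ so that $\lambda_{aa} \prec_{1,w_0} \lambda_{ba}$ and $\lambda_{ab} \prec_{1,w_0} \lambda_{bb}$, which makes agent $1$'s strategy ``play $a$ at $w_0$'' strictly dominated by ``play $b$ at $w_0$''. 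Then $\neg \atloprat{\emptyset}\nexttime \neg p$ holds at $w_0$ because $\lambda_{aa}(1)=w_{aa}\models p$, while $\atloprat{\AGT}\nexttime p$ fails because the unique joint action reaching a $p$-state is $(a,a)$ and every collective strategy of $\AGT$ realising it has agent $1$'s component in $\mathit{Dom}^1_{M,w_0}$.

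For $\mathtt{Cro}1$, I would instantiate $\Group=\{1\}$ and $\AGT=\{1,2\}$, with four successor states of $w_0$ reached by $(a,c),(a,d),(b,c),(b,d)$ respectively, where $\phi$ holds only at the $(a,c)$-successor and $\psi$ only at the $(a,d)$-successor, the other two successors satisfying neither; again each successor is a self-looping sink. I would set agent $2$'s preference so that for every $x \in \{a,b\}$ the $(x,c)$-computation is strictly preferred to the $(x,d)$-computation, making action $d$ strictly dominated by $c$ for agent $2$. I would set agent $1$'s preference so that $b$ does not strictly dominate $a$ (for instance by making the $(a,c)$-computation strictly preferred to the $(b,c)$-computation), leaving agent $1$'s $a$-strategy non-dominated. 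Then $\atloprat{\{1\}}\nexttime(\phi\lor\psi)$ holds via the non-dominated ``play $a$'' strategy, $\atloprat{\{1\}}\nexttime\phi$ fails because agent $1$ alone cannot prevent agent $2$ from responding with $d$, and $\atloprat{\AGT}\nexttime\psi$ fails because the only joint action reaching a $\psi$-state is $(a,d)$, whose second component is dominated.

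The main technical obstacle I anticipate is the bookkeeping of the dominance checks: showing that a particular strategy is \emph{not} dominated requires ruling out every candidate dominator, and I must check that the pairwise strict comparisons I impose extend to total preorders over $\historyset_{M,w_0}$ without inducing unwanted dominance relations elsewhere. In both constructions each agent has only two effective choices at $w_0$, so the case analysis stays small, but the verifications of $\atloprat{\Group}\nexttime\chi$ and $\neg\atloprat{\Group}\nexttime\chi$ must be performed explicitly against all joint-strategy combinations of the other agents.
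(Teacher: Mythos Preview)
Your constructions are correct and the verifications you sketch go through; with the four sinks having a unique available action, strategies collapse to first moves and the dominance checks reduce to the finite case analyses you indicate. The paper, however, takes a more economical route: it uses a \emph{single} agent $\AGT=\{a\}$ and a \emph{single} two-state model $W=\{w_0,w_1\}$ (with $p$ true only at $w_1$) that simultaneously refutes both schemas at $w_1$, using the instances $\phi=p$ for $\mathtt{Max}_{\AGT}1$ and $\Group=\emptyset$, $\phi=\neg p$, $\psi=p$ for $\mathtt{Cro}1$. The key is to make the agent's preference at $w_1$ rank the computation going to $w_0$ strictly above the one staying at $w_1$, so that the unique non-dominated strategy leads to a $\neg p$-state; this immediately kills $\atloprat{\AGT}\nexttime p$ while $\neg\atloprat{\emptyset}\nexttime\neg p$ and $\atloprat{\emptyset}\nexttime(\neg p\lor p)$ hold because $w_1$ is still a successor of itself. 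What the paper's approach buys is minimality (one agent, two states, one model for everything, and it additionally refutes $\atloprat{\AGT}\henceforth p\to\atlop{\AGT}\nexttime\atloprat{\AGT}\henceforth p$); what your approach buys is conceptual clarity, since the two-agent setting separates the source of the failure---in $\mathtt{Cro}1$ you make it visible that the coalition's witness strategy is rational while the grand coalition's would-be witness is blocked by a \emph{different} agent's dominance constraint.
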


As usual, for every $\varphi \in \lang_{\clratlogic}(\ATM, \AGT )$,
we write $\vdash
\varphi$
to mean that $\varphi $
is deducible  in $\clratlogic$,
that is,
there is a sequence of formulas $(\varphi_1, \ldots, \varphi_m)$
such that:
\begin{itemize}
\item $\varphi_m = \varphi$, and
\item for every $1 \leq k \leq m$, either $\varphi_k$
is  an instance of one of the  axiom schema of   $\clratlogic$
or there are formulas $\varphi_{k_1}, \ldots ,\varphi_{k_t} $
such that $k_1 , \ldots, k_t < k$  and $\frac{\varphi_{k_1}, \ldots ,\varphi_{k_t} }{\varphi_{k} }$
is an instance of some inference  rule of $\clratlogic$.
\end{itemize}

 We are going
 to prove soundness
 and completeness
 of the logic 
 $\clratlogic$ 
 relative to the 
 model class
 $\classcgspvar{ } $.
 So, in the rest of this section,
 when talking about validity
 of a formula 
 $\varphi \in \lang_{\clratlogic}(\ATM, \AGT )$
 we mean validity
 of $\varphi $
 relative to the class  $\classcgspvar{ } $. 
 
Our  completeness proof 
for
$\clratlogic$ 
differs from Pauly's
one
for $\cllogic$ \cite{DBLP:journals/logcom/Pauly02}. It 
is structured in four parts: an induction on the
modal degree of formulas, the normal form (Lemma \ref{lemma: normal-form}), the downward validity (Lemma \ref{lemma: downward validity}), and the upward derivability (Lemma \ref{lemma: upward derivability}). Before introducing  them, we need to introduce  some preliminary notions. 
\begin{definition}[Literal]
A propositional literal
is either $p$
or $\neg p $
 for any  $p \in \ATM $. 
 A modal
 $\clratlogic$-literal
 is a formula of type 
 $\atloprat{\Group}\nexttime \phi$ or $\neg \atloprat{\Group}\nexttime \phi$,
 for any coalition $\Group$ and $\phi\in \lang_{\clratlogic}(\ATM, \AGT )$.
\end{definition}

The following definition introduces
the notion of
standard $\clratlogic$ disjunction.
\begin{definition}[Standard $\clratlogic$ disjunction]
A formula of type 
  \begin{align*}
     \chi \vee & \big(
  (\bigwedge_{\nindex\in \NI}\atlop{\Group_{\nindex}}\nexttime \psi_{\nindex} 
  \wedge 
  \bigwedge_{\nindex\in \NIrat}\atloprat{\Group_{\nindex}}\nexttime \psi_{\nindex})
  \to \\
&  (\bigvee_{\pindex\in \PI}\atlop{\Group_\pindex}\nexttime \psi_{\pindex }
  \vee 
  \bigvee_{\pindex\in \PIrat}\atloprat{\Group_\pindex}\nexttime \psi_\pindex)\big)   
  \end{align*}
is called standard $\clratlogic$ disjunction, where $\NI $, $\NIrat $, $\PI $ and $\PIrat $ are four finite and pairwise disjoint sets of indices, $\chi $ is a disjunction of propositional literals, $\Group_\nindex$ is a coalition and $\psi_\nindex\in \lang_{\clratlogic}(\ATM, \AGT )$ for each $\nindex\in \NI \cup \NIrat \cup \PI \cup \PIrat $.
\end{definition}
The following is a normal form
lemma for the logic $\clratlogic$.
It is proved in a way analogous to the normal form lemma for propositional logic.
\begin{lemma}[Normal form]\label{lemma: normal-form}
  Any  formula
  $\varphi \in \lang_{\clratlogic}(\ATM, \AGT )$
  is equivalent to a conjunction of standard $\clratlogic$ disjunctions whose modal degrees are not higher
  than the modal
  degree of $\varphi$.
  This equivalence is both valid and derivable.
\end{lemma}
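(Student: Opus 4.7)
My plan is to reduce the claim to the standard conjunctive normal form (CNF) transformation of classical propositional logic, treating the modal subformulas of $\varphi$ as if they were propositional atoms. The key observation is that a standard $\clratlogic$ disjunction is nothing but a propositional clause over the extended atom set $\ATM \cup \{\atlop{\Group}\nexttime \psi,\, \atloprat{\Group}\nexttime \psi\}$, rewritten so that its negative modal literals are gathered into the antecedent of an implication and its positive modal literals into the consequent, via the elementary equivalence $\neg A_1 \vee \cdots \vee \neg A_k \vee B_1 \vee \cdots \vee B_l \equiv (A_1 \wedge \cdots \wedge A_k) \to (B_1 \vee \cdots \vee B_l)$.

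Concretely, I would first identify the set $\mathcal{M}(\varphi)$ of maximal modal subformulas of $\varphi$, namely those of the form $\atlop{\Group}\nexttime \xi$ or $\atloprat{\Group}\nexttime \xi$ that are not properly contained in any other modal subformula of $\varphi$. Introducing a fresh propositional variable $p_M$ for each $M \in \mathcal{M}(\varphi)$, I let $\tilde{\varphi}$ be the purely propositional formula obtained by uniformly substituting $p_M$ for $M$ in $\varphi$. The standard propositional CNF conversion then yields $\tilde{\varphi} \equiv \bigwedge_j C_j$, with each $C_j$ a disjunction of literals. Substituting back gives $\varphi \equiv \bigwedge_j C_j[M/p_M]$, where each clause is a disjunction of propositional literals (from $\ATM$) together with positive and negative modal literals. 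I then rewrite each such clause in standard form by taking $\chi_j$ to be the sub-disjunction of its propositional literals, partitioning the indices of its negative modal literals into $\NI_j$ and $\NIrat_j$ (according to whether the modality is $\atlop{}\nexttime$ or $\atloprat{}\nexttime$), partitioning its positive modal literals similarly into $\PI_j$ and $\PIrat_j$, and applying the equivalence above.

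The modal-depth bound is immediate: both the CNF conversion and the implication rewriting are purely propositional manipulations that leave the modal subformulas in $\mathcal{M}(\varphi)$ untouched, and each such $M$ already has modal depth at most the modal depth of $\varphi$. Validity and derivability follow together, since every equivalence used is a classical propositional tautology, which is covered by the tautology axiom schema together with \ref{ir:MP}; substitution uniformly lifts tautologies over the $p_M$'s to tautologies over the modal formulas $M$, so the overall equivalence is derivable by routine propositional reasoning inside $\clratlogic$. I do not foresee any real obstacle here: the only delicate point is bookkeeping for the degenerate cases in which some of $\chi_j$, $\NI_j \cup \NIrat_j$, or $\PI_j \cup \PIrat_j$ are empty (interpreted as $\bot$, $\top$, and $\bot$ respectively), and deciding whether to keep trivially tautological clauses such as $p \vee \neg p$ as legitimate standard disjunctions or to simplify them away before the CNF step.
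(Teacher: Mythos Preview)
Your proposal is correct and matches the paper's approach: the paper simply states that the lemma ``is proved in a way analogous to the normal form lemma for propositional logic,'' which is exactly the abstraction-to-propositional-atoms and CNF conversion you spell out. You have filled in the details the paper omits, and your treatment of the modal-depth bound and of derivability via the tautology axiom schema is sound.
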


Let 
 \begin{align*}
     \chi \vee & \big(
  (\bigwedge_{\nindex\in \NI}\atlop{\Group_{\nindex}}\nexttime \psi_{\nindex} 
  \wedge 
  \bigwedge_{\nindex\in \NIrat}\atloprat{\Group_{\nindex}}\nexttime \psi_{\nindex})
  \to \\
&  (\bigvee_{\pindex\in \PI}\atlop{\Group_\pindex}\nexttime \psi_{\pindex }
  \vee 
  \bigvee_{\pindex\in \PIrat}\atloprat{\Group_\pindex}\nexttime \psi_\pindex)\big)   
  \end{align*}
  be a standard
  $\clratlogic$
  disjunction. 
The following definition
introduces its  sets
of basic indices.  
  They will be used to state the downward validity lemma and the upward derivability lemma.
\begin{definition}[Basic indices]\label{definition: basic indices}
  Define $X_0=\{\nindex\in \NI\mid \Group_\nindex=\emptyset \}$, $Y_0=\{\pindex\in \PI\mid \Group_\pindex=\AGT\}$ and $Y_1=\{\pindex\in \PIrat\mid \Group_\pindex=\AGT\}$. 
\end{definition}
The last definition we need
is that of neat set of indices. 
\begin{definition}[Neatness]\label{definition: neatness}
  For any  $X\subseteq \NI \cup \NIrat $,
  we say $X$ is neat iff for all $\nindex,\nindex'\in X$, if $\nindex\neq \nindex'$, then $\Group_{\nindex}\cap \Group_{\nindex'}= \emptyset$.  
\end{definition}
The following is our downward validity
lemma. 
Its proof is in 
Appendix E
in the supplementary material. 
\begin{lemma}[Downward validity]\label{lemma: downward validity}
   Let $\phi = \chi \vee (
  (\bigwedge_{\nindex\in \NI}\atlop{\Group_\nindex}\nexttime \psi_\nindex 
  \wedge 
  \bigwedge_{\nindex\in \NIrat}\atloprat{\Group_\nindex}\nexttime \psi_\nindex)
  \to 
  (\bigvee_{\pindex\in \PI}\atlop{\Group_\pindex}\nexttime \psi_\pindex 
  \vee 
  \bigvee_{\pindex\in \PIrat}\atloprat{\Group_\pindex}\nexttime \psi_\pindex))$ be a standard 
    $\clratlogic$
    disjunction.
   If  $\phi$ is valid 
  then following 
validity-reduction condition
is satisfied: 
\begin{itemize}
\item $\chi $
is valid, or
\item there is $X\subseteq \NI$ and $X' \subseteq \NIrat $ such that $X\cup X'$ is neat and one of the following
conditions are met:
  \begin{itemize}
  \item there is $\pindex\in \PI$ such that $\bigcup_{\nindex\in X\cup X'} \Group_\nindex \subseteq \Group_\pindex$ and\\
  $ \bigwedge_{\nindex\in X\cup X'} \psi_\nindex\to (\psi_\pindex\vee \bigvee_{\pindex'\in Y_0}\psi_{\pindex'})$ is valid;
  \item there is $\pindex\in \PIrat$ such that $\bigcup_{\nindex\in X'} \Group_\nindex\subseteq \Group_\pindex$ and \\
  $
 \bigwedge_{\nindex\in X_0\cup X'} \psi_\nindex \to (\psi_\pindex\vee \bigvee_{\pindex'\in Y_0}\psi_{\pindex'})$
   is valid;
  \item $\bigwedge_{\nindex\in X_0\cup X'} \psi_\nindex \to \bigvee_{\pindex\in Y_0\cup Y_1}\psi_{\pindex}$ is valid.
  \end{itemize}
\end{itemize}

\end{lemma}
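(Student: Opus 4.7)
The plan is to prove the lemma by contraposition. Suppose $\chi$ is not valid and that for every $X\subseteq\NI$ and $X'\subseteq\NIrat$ with $X\cup X'$ neat, none of the three sub-conditions hold. I will then construct a pointed CGSP $(P,w)$ falsifying the standard disjunction $\phi$, which contradicts its validity.

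First, I would extract Boolean witnesses. From the non-validity of $\chi$, fix a propositional valuation making it false; this determines the atomic labelling at the intended root $w$. For each neat $X\cup X'$, the failure of sub-condition~1 gives, for each $\pindex\in\PI$ with $\bigcup_{\nindex\in X\cup X'}\Group_\nindex\subseteq\Group_\pindex$, a satisfiable Boolean combination of the form $\bigwedge_{\nindex\in X\cup X'}\psi_\nindex\wedge\neg\psi_\pindex\wedge\bigwedge_{\pindex'\in Y_0}\neg\psi_{\pindex'}$, and similarly for sub-conditions~2 and~3. Since these combinations involve only strict subformulas of $\phi$ at lower modal depth, by induction on modal depth (packaging each sub-formula via Lemma~\ref{lemma: normal-form}) we may take each of them to be realised by a pointed CGSP, which serves as the base material for the gluing.

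Second, I would assemble these witness CGSPs into a single CGSP rooted at $w$. The transition structure at $w$ is the core of the construction: using constraint~C2 (independence of choices), I would fix a maximal neat subset of $\NI\cup\NIrat$ and reserve, for each $\nindex$ in it, a disjoint block of action coordinates so that the coalition $\Group_\nindex$ has a dedicated joint action whose successors all lie inside a copy of the $\psi_\nindex$-realising submodel. Extra successors are attached for each consequent index $\pindex\in\PI\cup\PIrat$, populated with the counter-witnesses from the previous step. The preferences $\preceq_{i,w}$ at $w$ are then designed so that, for each $\nindex\in\NIrat$, every component of the designated witness action for $\Group_\nindex$ is non-dominated, while for each $\pindex\in\PIrat$ any strategy that would witness $\atloprat{\Group_\pindex}\nexttime\psi_\pindex$ is either dominated at $w$ or has a reachable successor falsifying $\psi_\pindex$. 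Verification of $(P,w)\not\models\phi$ is then direct: the atomic labelling refutes $\chi$; the antecedent capabilities hold by the disjoint-block design together with C2; each ordinary $\atlop{\Group_\pindex}\nexttime\psi_\pindex$ is refuted using the sub-condition~1 witnesses (with $Y_0$ absorbing grand-coalition cases); and each rational $\atloprat{\Group_\pindex}\nexttime\psi_\pindex$ is refuted using sub-conditions~2 and~3 (with $X_0$ capturing the always-active empty-coalition constraints and $Y_1$ the grand-coalition rational absorbing disjuncts).

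The main obstacle will be the coherent design of the preference structure at $w$ in the second step. Rational capability demands that every individual component of the witnessing strategy be non-dominated, so the preferences must simultaneously make the antecedent-witness actions non-dominated and the consequent-witness strategies either dominated or non-forcing, all while remaining consistent with the $\psi$-satisfaction pattern at the attached successors. The roles of $X_0$, $Y_0$ and $Y_1$ in Definition~\ref{definition: basic indices}, and their specific appearance in sub-conditions~2 and~3, are precisely what make this joint dominance-and-satisfaction assignment feasible; threading all these requirements through a single preference relation $\preceq_{i,w}$ without breaking any of the targeted truth values is the principal technical difficulty.
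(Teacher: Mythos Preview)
Your contrapositive strategy and your identification of the main obstacle match the paper's. However, two concrete points separate your plan from a working construction.

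First, the role of neat sets is inverted. You propose to \emph{fix} a maximal neat subset of $\NI\cup\NIrat$ and reserve action blocks only for its members; but the antecedent requires a witnessing strategy for \emph{every} $\nindex\in\NI\cup\NIrat$, including indices whose coalitions overlap, so a single neat subset cannot supply all the needed witnesses. In the paper, every agent's action set at the root is the full index set $\NI\cup\NIrat\cup\PI\cup\PIrat$, and the witness for $\atlop{\Group_\nindex}\nexttime\psi_\nindex$ is simply ``every member of $\Group_\nindex$ plays action $\nindex$''. Neatness then \emph{emerges} a posteriori: for any joint action $\sigma$, the set $\mathrm{supp}(\sigma)=\{\nindex:\forall a\in\Group_\nindex,\ \sigma(a)=\nindex\}$ is automatically neat (two distinct supported indices cannot share an agent), and the successor of $\sigma$ is labelled so as to satisfy $\bigwedge_{\nindex\in\mathrm{supp}(\sigma)}\psi_\nindex\wedge\bigwedge_{\pindex'\in Y_0}\neg\psi_{\pindex'}$.

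Second, your plan has no concrete mechanism for refuting all consequent disjuncts simultaneously. The paper uses a modular voting trick: set $\mathrm{imp}(\sigma)=\big(\sum_{a\in\AGT}\sigma(a)\big)\bmod |\PI\cup\PIrat|$, and make the successor of $\sigma$ additionally satisfy $\neg\psi_{\mathrm{imp}(\sigma)}$ whenever the coalition-containment and rationality side conditions from the three bullets permit. For $\pindex\notin Y_0\cup Y_1$ we have $\Group_\pindex\neq\AGT$, so a single outside agent can shift the sum to hit any target $\pindex$; hence every strategy of $\Group_\pindex$ is countered. This device is precisely what makes the satisfiability-reduction hypothesis sufficient to label each successor consistently. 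With it in place the preference design is simple rather than delicate: declare an action rational for every agent iff it lies in $\NIrat$, and set preferences at the root so that exactly these actions are non-dominated. The paper packages the whole thing as a ``blueprint'' (action set, consequence list, rationality list) together with a generic realization theorem, cleanly separating the combinatorics of $\mathrm{supp}$ and $\mathrm{imp}$ from the model assembly.
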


The following is our upward derivability 
lemma. 
Its proof is in 
Appendix  F
in the supplementary material.

\begin{lemma}[Upward derivability]\label{lemma: upward derivability}
   Let $\phi = \chi \vee (
  (\bigwedge_{i\in \NI}\atlop{\Group_\nindex}\nexttime \psi_\nindex 
  \wedge 
  \bigwedge_{i\in \NIrat}\atloprat{\Group_\nindex}\nexttime \psi_\nindex)
  \to 
  (\bigvee_{j\in \PI}\atlop{\Group_\pindex}\nexttime \psi_\pindex 
  \vee 
  \bigvee_{j\in \PIrat}\atloprat{\Group_\pindex}\nexttime \psi_\pindex))$ 
  be a standard     $\clratlogic$
  disjunction.
     If  $\phi$ is valid 
  then following 
derivability-reduction condition is satisfied:

    \begin{itemize}
\item $\vdash \chi $, or
\item there is $X\subseteq \NI$ and $X' \subseteq \NIrat $ such that $X\cup X'$ is neat and one of the following
conditions are met:
  \begin{itemize}
  \item there is $\pindex\in \PI$ such that $\bigcup_{\nindex\in X\cup X'} \Group_\nindex \subseteq \Group_\pindex$ and \\
  $\vdash \bigwedge_{\nindex\in X\cup X'} \psi_\nindex\to (\psi_\pindex\vee \bigvee_{\pindex'\in Y_0}\psi_{\pindex'})$;
  \item there is $\pindex\in \PIrat$ such that $\bigcup_{\nindex\in X'} \Group_\nindex\subseteq \Group_\pindex$ and \\
  $
   \vdash \bigwedge_{\nindex\in X_0\cup X'} \psi_\nindex \to (\psi_\pindex\vee \bigvee_{\pindex'\in Y_0}\psi_{\pindex'})$;
  \item $\vdash \bigwedge_{\nindex\in X_0\cup X'} \psi_\nindex \to \bigvee_{\pindex\in Y_0\cup Y_1}\psi_{\pindex}$.
  \end{itemize}
\end{itemize}
\end{lemma}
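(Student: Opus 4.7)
The plan is to proceed by strong induction on the modal degree of $\phi$, treating this lemma as part of a simultaneous induction with the completeness theorem for $\clratlogic$. The key semantic input is the already-established Downward Validity Lemma, which converts the assumption that $\phi$ is valid into validities of subformulas of strictly smaller modal degree; these validities can then be lifted to the corresponding derivabilities via the inductive completeness hypothesis.

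First, I would apply the Downward Validity Lemma to $\phi$. This yields the case split: either $\chi$ is valid, or there exist neat $X\subseteq \NI$ and $X'\subseteq \NIrat$ witnessing one of the three sub-cases, each stating that a Boolean combination of the subformulas $\psi_\nindex$, $\psi_\pindex$ is valid. Since each $\psi_\nindex$ and $\psi_\pindex$ occurs in $\phi$ strictly beneath a $\nexttime$-modality, every such subformula has modal degree strictly less than that of $\phi$, and $\chi$ has modal degree $0$. Hence every formula whose validity is asserted in the reduction condition has modal degree at most $\mathit{md}(\phi)-1$.

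Second, I would invoke the induction hypothesis, namely completeness of $\clratlogic$ for all formulas of modal degree less than $\mathit{md}(\phi)$. Each validity produced in the previous step therefore lifts to the corresponding derivability, yielding the derivability-reduction condition with the same witnesses $X$, $X'$ and $\pindex$. The base case $\mathit{md}(\phi)=0$ reduces to propositional completeness, since then $\phi$ collapses to $\chi$ and the only live sub-case is derivability of $\chi$.

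The main obstacle is the setup of the simultaneous induction rather than any single step inside the proof itself. At level $n$, one must first use completeness at levels strictly below $n$ to prove the present lemma for standard disjunctions of modal degree $n$; the derivability-reduction condition so obtained is then combined with the Normal Form Lemma and the axioms, most notably \ref{ax:Sup0}, \ref{ax:Sup1}, \ref{ax:MC0}, \ref{ax:MC1}, \ref{ax:MG0}, \ref{ax:MG1}, \ref{ax:NP}, \ref{ax:Cro} and \ref{ax:DGRA}, to syntactically reconstruct $\phi$ and thus close the induction at level $n$. A delicate point to verify is that the modal degree really drops strictly when passing from $\phi$ to its $\psi_\nindex,\psi_\pindex$-components in the presence of both ordinary and rational capability operators, so that the induction remains well-founded uniformly across the two families of modalities treated by the neatness witnesses $X$ and $X'$.
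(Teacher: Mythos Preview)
Your argument is sound for the statement as literally printed, but it does not match the paper, because the paper's statement and proof of this lemma are out of sync. The lemma reads ``$\phi$ valid $\Rightarrow$ derivability-reduction condition,'' yet the appendix proof establishes the converse: it \emph{assumes} the derivability-reduction condition and concludes $\vdash\phi$. That converse is what the name ``upward derivability'' suggests and is exactly how the lemma is invoked in the proof of Theorem~\ref{thm:soundnessCL}, where Lemma~\ref{lemma: downward validity} yields the validity-reduction condition, the inductive hypothesis converts it termwise into the derivability-reduction condition, and the present lemma is then cited to obtain $\vdash\phi_k$.

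Consequently the paper's proof is a purely syntactic derivation, with no induction on modal degree and no appeal to Lemma~\ref{lemma: downward validity}. In each of the three sub-cases it uses the axioms to climb from the assumed derivability of a subformula-implication back to $\vdash\phi$: in the $\PI$-case, aggregate over $X$ with \ref{ax:Sup0} and over $X'$ with \ref{ax:Sup1} followed by \ref{ax:MR}, apply \ref{Mon0}, and peel off the $Y_0$-disjuncts with repeated \ref{ax:Cro}; in the $\PIrat$-case, convert the $X_0$-premises via \ref{ax:NP}, aggregate with \ref{ax:Sup1}, apply \ref{Mon1}, and use the derived rule \ref{Cro0.5}; in the third case, finish with \ref{ax:DGRA} instead. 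You sketch exactly this reconstruction in your final paragraph, but you place it inside the completeness induction rather than inside the lemma; in the paper this syntactic climb \emph{is} the lemma, and the step you devote the bulk of your proposal to (Downward Validity plus inductive completeness) lives entirely inside the proof of Theorem~\ref{thm:soundnessCL}.
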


 The following is the culminating
 result of this section:
 the logic $\clratlogic$
 is sound and complete for the model class
 $\classcgspvar{ } $.

 We show Theorem \ref{thm:soundnessCL} by  induction on modal degrees of formulas. The inductive hypothesis ensures that the validity-reduction condition of a formula implies its derivability-reduction condition.  The complete proof is given in Appendix G. 
\begin{theorem}
\label{thm:soundnessCL}
Let 
 $\phi \in \lang_{\clratlogic}(\ATM, \AGT )$.
 Then, $\phi$
 is valid if and only if  $\vdash \phi $. 
\end{theorem}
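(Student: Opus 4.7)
The plan is to prove both directions separately, with completeness being the substantial part. Soundness ($\vdash \phi$ implies $\phi$ is valid) is a routine verification: for each axiom schema, I would check that it holds in every pointed CGSP $(P,w)$ using the semantic clauses from Section \ref{sec:language}. The correspondences pointed out by the authors make this systematic: Axiom \ref{ax:Ser} uses seriality (C3), Axioms \ref{ax:Sup0}/\ref{ax:Sup1} use independence of choices (C2), and Axioms \ref{ax:Cro}/\ref{ax:DGRA} use collective choice determinism (C1). Axioms \ref{ax:NP}, \ref{ax:MR}, \ref{ax:MG1}, \ref{ax:MC1}, \ref{ax:Sup1} and \ref{ax:DGRA} additionally appeal to the definition of dominated strategy (noting in particular that the empty coalition has no rationality constraint). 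The rules \ref{ir:MP} and \ref{ir:N} preserve validity trivially.

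For completeness I would argue by strong induction on the modal degree $d(\phi)$. The base case $d(\phi) = 0$ reduces to propositional tautologyhood, handled by Axiom \ref{ax:taut} and \ref{ir:MP}. For the inductive step, assume the result holds for all formulas of modal degree strictly less than $n := d(\phi)$. Given a valid $\phi$ with $d(\phi)=n$, invoke Lemma \ref{lemma: normal-form} to rewrite $\phi$ as a conjunction $\bigwedge_k \sigma_k$ of standard $\clratlogic$ disjunctions of modal degree at most $n$, with the equivalence both valid and derivable. It suffices to establish $\vdash \sigma_k$ for each $k$.

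Fix such a $\sigma_k$; by the valid equivalence, $\sigma_k$ is valid. Apply Lemma \ref{lemma: downward validity} to obtain the validity-reduction condition for $\sigma_k$. Every sub-formula appearing in this condition, namely $\chi$ together with implications of the form $\bigwedge_{\nindex\in X\cup X'} \psi_\nindex \to (\psi_\pindex \vee \bigvee_{\pindex'\in Y_0}\psi_{\pindex'})$ and $\bigwedge_{\nindex\in X_0\cup X'} \psi_\nindex \to \bigvee_{\pindex\in Y_0\cup Y_1}\psi_\pindex$, is built from components that sit strictly under a next-time modality inside $\sigma_k$, hence has modal degree at most $n-1$. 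The inductive hypothesis therefore converts each valid sub-formula into a derivable one, yielding the derivability-reduction condition for $\sigma_k$. Lemma \ref{lemma: upward derivability} (used in the direction that derivability-reduction yields $\vdash \sigma_k$, which is the content required for completeness) then gives $\vdash \sigma_k$. Taking the conjunction over all $k$ and using the derivable equivalence from Lemma \ref{lemma: normal-form} delivers $\vdash \phi$.

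The main obstacle is conceptual rather than computational: the argument depends critically on the interface between the four lemmas matching up correctly. One must ensure the validity-reduction and derivability-reduction conditions are stated over exactly the same syntactic schemata, so that the inductive hypothesis transports one into the other verbatim. In particular, the neatness condition on $X\cup X'$ (Definition \ref{definition: neatness}) and the role of the distinguished index sets $X_0, Y_0, Y_1$ (Definition \ref{definition: basic indices}) must be handled uniformly on both sides; any mismatch between ordinary operators $\atlop{\Group}\nexttime$ and their rational counterparts $\atloprat{\Group}\nexttime$ in the two reduction conditions would break the induction. Assuming Lemmas \ref{lemma: normal-form}, \ref{lemma: downward validity} and \ref{lemma: upward derivability} are in place with matching statements, the induction glues them together into the theorem.
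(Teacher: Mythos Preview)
Your proposal is correct and follows exactly the paper's own argument: soundness by routine verification, completeness by induction on modal degree using Lemma~\ref{lemma: normal-form} to reduce to standard disjunctions, Lemma~\ref{lemma: downward validity} to extract the validity-reduction condition, the inductive hypothesis to convert it into the derivability-reduction condition, and Lemma~\ref{lemma: upward derivability} (in the direction you flag) to conclude $\vdash \sigma_k$. Your observation that the sub-formulas in the reduction conditions have modal degree at most $n-1$ is precisely what makes the inductive hypothesis applicable, and your caveat about Lemma~\ref{lemma: upward derivability} being used in the direction ``derivability-reduction $\Rightarrow$ $\vdash \sigma_k$'' matches what the paper actually proves in Appendix~F (the statement in the body is phrased the other way around).
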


\section{Model checking}\label{sec:modelchecking}

The global model checking problem for $\atlratlogic$ consists of computing, for a given CGSP $P$, and a formula $\varphi$,  all the states in which $\varphi$ holds in $P$, formally $\{w \in W : (P,w) \models \varphi\}$.   
In this section, we consider this problem relative
to a   subclass of CGSP
in which preferences are short-sighted. 
 \begin{definition}[Short-sighted preferences]\label{def:shortpref}
 Let $(M,\Omega_M)$
be  a CGSP with  $M = ( W,  \ACT,  \allowbreak \relActJoint, 
  \valProp  )$
  a CGS.   We say that $M$
 has short-sighted preferences
          if
          the following condition holds:
          \begin{align*}
       (\mathbf{SSP}) &  \
\forall i \in \AGT, 
\forall w \in W, 
 \forall \lambda, \lambda' \in 
          \historyset_{M,w }
          \text{ if }
          \lambda(1)= \lambda'(1)
       \\
&    \text{ then } \lambda'   \approx _{i,w } \lambda , 
          \end{align*}
          where 
$ \lambda'   \approx _{i,w } \lambda$
iff 
$ \lambda'   \preceq_{i,w } \lambda$
and 
$ \lambda   \preceq_{i,w } \lambda'$.
\end{definition}
The short-sighted preference
condition means that an agent 
is indifferent between computations that
are equal until the next state. 
Notice that if a CGSP $M$ has
both stable and short-sighted preferences
in the sense of Definitions
\ref{def:cgspref}
and 
\ref{def:shortpref}
then the following holds:
\begin{align*}
\forall i \in \AGT, 
\forall v \in W,     \text{ if }
   v \in \mathcal{R}(w_0)
    \text{ then }
    \forall \lambda, \lambda' \in
              \historyset_{M,v },       
    \lambda'   \approx _{i,v } \lambda . 
\end{align*}
This means that under stable and 
short-sighted preferences only 
the successor states of 
the initial state $w_0$
affect an agent's preferences 
 since from the next state on an agent has complete indifference between computations.

The reason why we verify  properties relative to CGSPs
with short-sighted preferences
is to have an efficient model-checking procedure. Indeed, verifying properties
with respect to the general class of CGSPs 
would make model checking exponential since we would need to
compute dominance by alternating between 
sets of strategies of exponential size (similar to \cite{AminofGR21}). 

The proof of Theorem \ref{thm:modelchecking} is given in Appendix H.  The lower-bound follows from the model checking of  \atllogic 
\cite{alur2002alternating}. 
For the upper bound, we first define agents' preference relation in state $w$ over the successors of $w$. This allows us to define the notion of agents' dominated actions  at a given state $w$. We then reinterpret the rational strategic modalities $\atloprat{\Group}\nexttime$, $\atloprat{\Group}\henceforth$, and $\atloprat{\Group}\until{}{}$ over dominated actions instead of dominated strategies, which is equivalent for the case of GCSP with short-sighted preferences. Then, we extend the model-checking algorithm for \atllogic to include  the modalities  $\atloprat{\Group}\nexttime$, $\atloprat{\Group}\henceforth$, and $\atloprat{\Group}\until{}{}$. The resulting algorithm, provided in Appendix H, runs in polynomial time.  

\begin{theorem}
\label{thm:modelchecking}
    The global model checking problem for \atlratlogic over GCSP with short-sighted preferences is \Ptime-complete. 
\end{theorem}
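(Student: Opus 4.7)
The plan is to establish \Ptime-hardness from the embedding of standard ATL model checking (which is \Ptime-complete \cite{alur2002alternating}) and to prove membership by designing a polynomial-time algorithm that extends the classical fixpoint-based ATL model-checking procedure with a correct treatment of the rational modalities. The key enabling observation for the upper bound is that, under short-sighted preferences, dominance among strategies collapses to dominance among single actions at the evaluation state, which is a cheap, local computation.

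More precisely, first I would define, for each $i\in\AGT$ and $w\in W$, a preference relation $\preceq^{\mathrm{succ}}_{i,w}$ over the successors of $w$ by setting $v\preceq^{\mathrm{succ}}_{i,w} v'$ iff $\lambda\preceq_{i,w}\lambda'$ for any (equivalently, all, by $\mathbf{SSP}$) computations $\lambda,\lambda'\in\historyset_{M,w}$ with $\lambda(1)=v$, $\lambda'(1)=v'$. Using this, I would define an action $a\in\choiceSet_i(w)$ to be dominated at $w$ iff there exists $a'\in\choiceSet_i(w)$ such that for every joint action $\jactatm$ with $\jactatm(i)=a$, letting $\jactatm'$ agree with $\jactatm$ off $i$ and satisfy $\jactatm'(i)=a'$, the successor $\relAct{\jactatm'}(w)$ is strictly $\prec^{\mathrm{succ}}_{i,w}$-preferred to $\relAct{\jactatm}(w)$. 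Write $\mathrm{NDAct}^i(w)$ for the non-dominated actions. The key lemma to prove is then: under $\mathbf{SSP}$, a strategy $\strategymap_{\{i\}}\in\mathit{Dom}^i_{M,w}$ iff $\strategymap_{\{i\}}(w)\notin\mathrm{NDAct}^i(w)$. This is immediate because dominance only compares the outcome computations $\lambda^{M,w,\cdot}$, whose preference class is determined by the next state, which depends only on the action chosen at $w$.

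Next I would extend the classical ATL labelling algorithm. For each subformula, recursively compute the extension $[\psi]_P\subseteq W$. The Boolean and standard ATL cases are handled as usual via the controllable predecessor $\mathrm{Pre}(\Group,X)=\{w:\exists\jactatm_\Group\,\forall\jactatm_{-\Group},\relAct{\jactatm_\Group\cup\jactatm_{-\Group}}(w)\in X\}$ and $\nu$/$\mu$ fixpoints. For the rational modalities define the rational controllable predecessor
\[
\mathrm{PreR}(\Group,X)=\{w:\exists\jactatm_\Group\text{ with }\jactatm_\Group(i)\in\mathrm{NDAct}^i(w)\forall i\in\Group,\ \forall\jactatm_{-\Group},\relAct{\jactatm_\Group\cup\jactatm_{-\Group}}(w)\in X\}.
\]
Since $\mathbf{SSP}$ forces rationality to constrain only the \emph{first} action (the strategy at subsequent finite paths is unconstrained by the rationality requirement at $w$), I would set
\[
[\atloprat{\Group}\nexttime\varphi]_P=\mathrm{PreR}(\Group,[\varphi]_P),
\]
\[
[\atloprat{\Group}\henceforth\varphi]_P=\mathrm{PreR}(\Group,Z^*),\quad Z^*=\nu Z.([\varphi]_P\cap\mathrm{Pre}(\Group,Z)),
\]
\[
[\atloprat{\Group}(\until{\varphi}{\psi})]_P=\mathrm{PreR}(\Group,Y^*),\quad Y^*=\mu Y.([\psi]_P\cup([\varphi]_P\cap\mathrm{Pre}(\Group,Y))).
\]
Correctness of these identities is the main content to verify: they follow from the standard memoryless determinacy of ATL games (to continue with an arbitrary witnessing strategy after the first rational action) combined with the action-level characterisation of dominance established above.

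Finally, I would argue polynomial complexity: computing $\mathrm{NDAct}^i(w)$ over all $i,w$ costs $O(|W|\cdot|\AGT|\cdot|\ACT|^{|\AGT|+1})$ in the size of the explicit CGSP encoding; each $\mathrm{Pre}$ and $\mathrm{PreR}$ evaluation is polynomial; each fixpoint converges in at most $|W|$ iterations; and there are at most $|\varphi|$ subformulas. The main obstacle I anticipate is the correctness argument for the rational henceforth and until cases, where one has to show that the strategy chosen after the first step need not itself be rational at later states (consistent with the semantics, which quantifies dominance only at the evaluation point $w$), yet can be coordinated with the required first-step rational action to realise the temporal goal; this is precisely where short-sightedness and ATL's memoryless determinacy combine to make the algorithm sound.
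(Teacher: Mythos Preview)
Your proposal is correct and follows the same overall architecture as the paper's proof: reduce strategy dominance to action dominance via $\mathbf{SSP}$, then extend the standard ATL fixpoint algorithm with a rational controllable-predecessor operator. The key lemma (a strategy is dominated at $w$ iff its first action at $w$ is dominated) and the complexity analysis match the paper's.

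There is, however, one substantive difference worth noting. For $\atloprat{\Group}\henceforth\varphi$ and $\atloprat{\Group}(\until{\varphi}{\psi})$, the paper's Algorithm~1 uses $\mathit{PreRat}$ at \emph{every} step of the fixpoint iteration, whereas you use $\mathrm{PreR}$ only once at the outermost step and ordinary $\mathrm{Pre}$ inside the fixpoint. Your treatment is the one that matches the semantics as stated in the paper, since dominance is tested only at the evaluation state $w$; indeed you explicitly identify and justify this point. The paper's version would, on its face, additionally require the coalition to play non-dominated actions at all subsequent states, which is a strictly stronger condition and not what the truth clause for $\atloprat{\Group}\henceforth\varphi$ demands (one can construct a CGSP with short-sighted preferences where the only way to maintain $\varphi$ from a successor state is via an action that is dominated \emph{there}, yet the witnessing strategy is still non-dominated at $w$). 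So your decomposition is not only different but arguably more careful on this point; what it buys is a transparent correctness argument tied directly to the single-point rationality constraint in the semantics.
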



\balance

\section{Conclusion}
\label{sec:conclusion}

We have proposed a novel
semantic 
analysis of preferences
in concurrent games and used
our semantics based
on CGS with preferences
to define a new family
of $\atllogic$
and $\cllogic$
languages distinguishing
the notion
of ordinary
capability
from
the notion
of \emph{rational}
capability.
We have provided
a variety
of proof-theoretic
and complexity
results for our languages
with an emphasis on both
satisfiability
checking and model checking. 

Directions of future work are manifold.
Some proof-theoretic
aspects remain to be explored and complexity results
to be proved. Future work 
will be devoted to i) axiomatizing the full logic
$\atlratlogic$ relative to class
$\classcgspvar{} $ and to class 
 $\classcgspvar{\{\mathit{sp} \}}$, ii) studying complexity
of satisfiability
checking
for the language $\lang_{\atlratlogic}(\ATM, \AGT )$ relative to the general
class    $\classcgspvar{} $.
In our running example, the joint strategy in which both agents cross the road  is not dominated. This is because preferences are defined for individual agents rather than coalitions. 
Following previous work on group preference logic \cite{DBLP:journals/jancl/Lorini11} and judgment aggregation \cite{grossi2014judgment,baumeister2017strategic}, we plan to extend our framework with the notion
of group preferences resulting from the aggregation of individual preferences. 
We also intend to 
investigate levels of rationality, where agents assume minimal rationality of their opponents, allowing for iterated strong dominance  \cite{Bonanno2008,Lorini13}. 
Last but not least, we plan to consider an epistemic extension 
of our semantics and languages 
to be able to model concurrent games with imperfect information
and an agent's knowledge of its rational capability. 
\ifarxiv
Since model checking \atllogic with imperfect information and perfect recall is undecidable \cite{abs-1102-4225}, we may consider models with public actions \cite{BelardinelliLMR20} and hierarchical information \cite{BerthonMMRV21}.
\fi

\begin{acks} 
Support from 
the ANR projects EpiRL 
``Epistemic Reinforcement Learning''
(grant number ANR-22-CE23-0029)
and  ALoRS
``Action, Logical Reasoning and Spiking networks''
(grant number ANR-21-CE23-0018-01)
is    acknowledged. 
This project has received funding from the European Union’s Horizon 2020 research and innovation programme under the Marie Skłodowska-Curie grant agreement No 101105549. 
\end{acks}


\bibliographystyle{ACM-Reference-Format}

\ifarxiv
\newpage 
\appendix

\section{Proof of Lemma 
\ref{lemmacrucial}}\label{prooflemmatree}

\begin{proof}
    The right-to-left direction of the lemma is evident.
    We prove the left-to-right direction
    for the first item.  
The proof for the second  item
is analogous. 

The proof is organized in three steps.

\textbf{Step 1}. 
    Let 
    $(P,w_0 )\models \varphi $
with 
$P=(M,\Omega_M) \in
\classcgspvar{ }$, 
 $M = ( W,  \ACT, \\ (\relAct{\jactatm}   )_{\jactatm \in \JACT}, 
  \valProp  )$,
  $\Omega_{M }=(\preceq_{i,w }   )_{i \in \AGT, w \in W }$
  and $w_0 \in W $
  such that $(P,w_0)\models \varphi$. 

  We are going to transform 
  $P=(M,\Omega_M)$ into a new CGSP 
    $P'=(M',\Omega_{M'})$
    with 
     $M' = ( W',  \ACT,  (\relAct{\jactatm}'   )_{\jactatm \in \JACT}, 
  \valProp'   )$
  and $\Omega_{M'}=(\preceq_{i,\sigma  }'   )_{i \in \AGT, \sigma  \in W' }$. 
  The construction goes as follows: 

     \begin{itemize}
\item $W' = \{w^\jactatm \suchthat
w \in W \text{ and }
\jactatm \in \JACT
\}$, 

\item for all $w^\jactatm , v^{\jactatm' }  \in W'$
and for all $\jactatm'' \in \JACT$,
$w^\jactatm  \relAct{\jactatm''}'  v^{\jactatm' }$
iff $w  \relAct{\jactatm''}  v$
and 
$\jactatm''=\jactatm' $,

\item for all $p \in \ATM$ and for all $w^\jactatm  \in W'$,
$p \in \valProp'( w^\jactatm )$
iff $p \in \valProp( w )$,

\item 
for all $w^\jactatm  \in W'$,
for all $i \in \AGT $, for all $w^\jactatm  v^{\jactatm_1}
 u^{\jactatm_2} \ldots , \allowbreak
w^\jactatm  t^{\jactatm_3}
 z^{\jactatm_4} \ldots \in \historyset_{M',w^\jactatm }$, 
we have 
$$w^\jactatm  v^{\jactatm_1}
 u^{\jactatm_2} \ldots \preceq_{i,w^\jactatm   }' 
 w^\jactatm  t^{\jactatm_3}
 z^{\jactatm_4} \ldots \text{ iff }$$
$$ w v
 u \ldots \preceq_{i,w   }  w  t
 z \ldots   $$

\end{itemize}

  It is routine to verify that $P'=(M',\Omega_{M'})
\in \classcgspvar{\{  \mathit{ad}  \}} $. 
and, by induction on the structure of $\varphi$,
that
$(P,v )  \models \varphi$
iff
$(P',v^\jactatm  )  \models \varphi$
for all $v^\jactatm \in W' $
and for all $v \in W $. 
Hence, $(P', w_0^{\jactatm} ) \models \varphi$
for every $\jactatm \in \JACT $
since 
$(P, w_0 ) \models \varphi$. 

  For notational convenience,
  in the rest of the proof 
  we 
 denote with 
 $x_1, x_2, \ldots $
 the
  elements
  of $W'$
in the CGSP $M'$
defined above.

\textbf{Step 2}. 
We are going to transform 
  $P'=(M',\Omega_{M'})$ into a new CGSP 
    $P''=(M'',\Omega_{M''})$
    with 
     $M'' = ( W'',  \ACT,  (\relAct{\jactatm}''   )_{\jactatm \in \JACT}, 
  \valProp''   )$
  and $\Omega_{M''}=(\preceq_{i,\sigma  }''   )_{i \in \AGT, \sigma  \in W'' }$. 
  Before defining $P''$
  and $M''$
  we introduce some further notation.
We denote the set of finite paths
in $M'$
by  $ \finpathset_{M'}$
and its elements by $\pi, \pi', \ldots$.
The last element in a finite path $\pi$
is denoted by $\pi[\mathit{last}]$.
  
  The definition of
  $P''$
  and $M''$
  goes as follows:
     \begin{itemize}
\item $W'' = \finpathset_{M'}$,

\item for all $\pi= x_1\ldots x_k ,
\pi' = x_1'\ldots x_h'
\in W''$
and for all $\jactatm \in \JACT$,\\
$x_1\ldots x_k
\relAct{\jactatm}''  x_1'\ldots x_h'$
iff $h=k+1$,
$x_1=x_1', \ldots, x_k=x_k'$
and $x_k \relAct{\jactatm}' x_h' $,

\item for all $p \in \ATM$ and for all $\pi  \in W''$,
$p \in \valProp''( \pi )$
iff $p \in \valProp'( \pi [\mathit{last}])$,

\item 
for all $\pi  \in W''$,
for all $i \in \AGT $, for all $\pi \pi_1
\pi_2 \ldots ,
\pi \pi_1'
\pi_2' \ldots \in \historyset_{M'',\pi  }$, 
we have 
$\pi \pi_1
\pi_2 \ldots \preceq_{i,\pi   }''  \pi \pi_1'
\pi_2' \ldots $
iff 
$$
\pi[\mathit{last}]
\pi_1[\mathit{last}]
\pi_2[\mathit{last}]\ldots 
\preceq_{i,\pi [\mathit{last}]  }'  \pi [\mathit{last}]
\pi_1'[\mathit{last}]
\pi_2' [\mathit{last}]\ldots  $$ 

\end{itemize}

It is routine to verify that $P''=(M'',\Omega_{M''})
\in \classcgspvar{\{  \mathit{up} ,
\mathit{nc},\mathit{ad}  \}} $.

We  define the function $f$
mapping worlds in $M''$
into worlds in $M' $.
Let $x \in W' $
and $\pi  \in W'' $.
Then,
$f(\pi   ) = x $
iff
$x=\pi  [\mathit{last}]$. 

  By induction on the structure of $\varphi$,
  it is straightforward 
  to show that
$(P',x )  \models \varphi$
iff
$(P'',\pi   )  \models \varphi$
for all $\pi  \in W''$
and for all $x \in W' $
such that $f(\pi) =x $.
Hence, $(P'', w_0^{\jactatm}  ) \models \varphi$
for all 
 $\jactatm \in \JACT $
since $w_0^{\jactatm}  \in W'$,
$w_0^{\jactatm}  \in W''$,
$f(w_0^{\jactatm} )=w_0^{\jactatm} $
and 
$(P', w_0^{\jactatm}  ) \models \varphi$
for all 
 $\jactatm \in \JACT $.

\textbf{Step 3}. 
We fix a joint action 
 $\jactatm_0 \in \JACT $
 and we  transform the CGSP
  $P=(M'',\Omega_{M'' })$
  defined in the previous step 
  into a new structure 
    $P'''=(M''',\Omega_{M'''})$
    with 
     $M''' = ( W''',  \ACT,  (\relAct{\jactatm}'''  )_{\jactatm \in \JACT}, \allowbreak
  \valProp'''  )$
  and $\Omega_{M'''}=(\preceq_{i,\sigma  }'''   )_{i \in \AGT, \sigma \in W''' }$. 
  The construction goes as follows:
  \begin{itemize}
  \item $W''' = \{ \pi  \in W'' \suchthat w_0^{\jactatm_0}  \relAct{ }^* \pi  \} $,
  \item $\relAct{\jactatm}''' = \relAct{\jactatm}''
  \cap( W'' \times W'') $ for all $\jactatm \in \JACT$,
  \item $ \valProp''' (\pi  ) = \valProp'' (\pi  ) $ for all $\pi  \in W'''$,

  \item $\preceq_{i,\pi   }''' =\preceq_{i,\pi   } '' $ for all $\pi   \in W'''$
  and for all $i \in \AGT $.
  \end{itemize}
    $P'''=(M''',\Omega_{M'''})$
   is nothing but 
  the submodel of     $P''=(M'',\Omega_{M'' })$
  generated by the state $w_0^{\jactatm_0}  $. 
  Clearly, $ P'''=(M''',\Omega_{M''' })\in 
\classcgspvar{\{ \mathit{tr},\mathit{ad}  \}} $
since $P''=(M'',\Omega_{M''})
\in \classcgspvar{\{  \mathit{up} ,
\mathit{nc},\mathit{ad}  \}} $. Moreover,
by induction
on the structure of $\varphi$,
it is routine to verify
that 
``$(P'', \pi   )\models \varphi $ iff
  $(P''' ,\pi   )\models \varphi $'' for all $\pi \in W'''$. 
  Thus, $(P''' ,w_0^{\jactatm_0} )\models \varphi $ since
  $(P '',w_0^{\jactatm_0}  )\models \varphi $.
 \end{proof}

\section{Proof of Theorem \ref{theo:embedding}}
\label{annex:theoremembedding}

\begin{proof} 
We first prove the left-to-right direction. Suppose $\varphi$
is satisfiable for the class $\classcgspvar{\{\mathit{sp} \}}$.
Thus, by Lemma
\ref{lemmacrucial},
there exists 
$P=(M,\Omega_{M}) \in
\classcgspvar{\{ \mathit{sp},\mathit{tr},\mathit{ad}  \}} $, 
 $M = ( W,  \ACT,  (\relAct{\jactatm}   )_{\jactatm \in \JACT}, 
  \valProp  )$,
  $\Omega_{M }=(\preceq_{i,w }   )_{i \in \AGT, w \in W }$
  and $w_0 \in W $
  such that $(P,w_0 )\models \varphi $. 
  Next, we will define a new structure
  $P'=(M',\Omega_{M})   $
with
 $M' = ( W,  \ACT,  \allowbreak (\relAct{\jactatm}  )_{\jactatm \in \JACT}, \allowbreak
  \valProp ' )$
  such that  for all $ v\in W$,
  we have 
  \begin{align*}
\valProp '(v)= &
   \big(\valProp (v)\cap \ATM\big) 
   \cup \\
   & \Big\{ \mathit{rat}_i
   \suchthat 
        \exists  u \in \relAct{ }^-( v),
     \exists \strategymap_{\{i\}} \in  \big( \stratsetatl^{\{i\}}_{M} 
     \setminus \mathit{Dom}_{M,u}^i
     \big),
     \\ &
     \exists 
        \lambda 
    \in \outset(u,      \strategymap_{\{i\}} )
     \text{ s.t. }
    v=  \lambda(1)
   \Big\}.      
  \end{align*}
Clearly, 
    $P'=(M',\Omega_{M}) \in
\classcgspvar{\{ \mathit{sp},\mathit{tr},\mathit{ad}  \}} $
since 
$P=(M,\Omega_{M}) \in
\classcgspvar{\{ \mathit{sp},\mathit{tr},\mathit{ad}  \}} $
and the only difference between $P'$
and $P$
is in the valuation function 
that  does not affect the relational properties
of the structure. 
By induction on the structure of the formula,
we are going to prove that 
``$(P',v )\models \mathit{tr}( \varphi )$ iff
  $(P,v )\models \varphi $'' for all $v \in W$.
  The Boolean cases 
  are straightforward 
  and the cases $  \atlop{\Group}
\nexttime \psi $,
$\atlop{\Group}
\henceforth \psi $
and $\atlop{\Group}
(\until{\psi_1}{\psi_2 })$
   are evident. 
Let us prove the case 
$\varphi= \atloprat{\Group}
\nexttime \psi  $.
We have 
$(P,v )\models \atloprat{\Group}
\nexttime \psi $ 
iff 
$\exists \strategymap_\Group \in  \stratsetatl^\Group_{M}
  \text{ s.t. }
  \forall i \in \Group, \strategymap_\Group|_{\{i\}}\not \in \mathit{Dom}_{M ,v }^i \text{ and } \forall \lambda \in \outset(v, \strategymap_\Group), \allowbreak 
\big( P , \lambda(1)  \big)\models \psi  $.
By construction of $P'$ and the fact that
$P$
and $P'$
satisfy  properties $\mathit{tr}$ and $\mathit{ad }$,
the latter is equivalent to $\exists \strategymap_\Group \in  \stratsetatl^\Group_{M' }
 $ such that $\forall \lambda \in \outset(v, \strategymap_\Group), \allowbreak
\big( P , \lambda(1)  \big)\models 
\bigwedge_{i \in \AGT }\mathit{rat}_i \wedge \psi $. 
By induction hypothesis,
the latter is equivalent to 
$\exists \strategymap_\Group \in  \stratsetatl^\Group_{M'}
  \text{ s.t. }\forall \lambda \in \outset(v, \strategymap_\Group),
\big( P' , \lambda(1)  \big)\models 
\bigwedge_{i \in \AGT }\mathit{rat}_i \wedge \mathit{tr}(\psi)  $. 
The latter is equivalent to
$(P',v )\models \atlop{\Group}
\nexttime \big(\mathit{rat}_C \allowbreak \wedge \mathit{tr}(\psi)  \big)  $.

Let us prove the case 
$\varphi= \atloprat{\Group}
\henceforth  \psi  $.
We have 
$(P ,v )\models \atloprat{\Group}
\henceforth \psi $ 
iff 
$\exists \strategymap_\Group \in  \stratsetatl^\Group_{M }
  \text{ s.t. }   \forall i \in \Group, \strategymap_\Group|_{\{i\}}\not \in 
  \mathit{Dom}_{M,v}^i \text{ and }
  \forall \lambda \in \outset(v, \strategymap_\Group), \allowbreak
 \forall k > 0, \big( P, \lambda(k)  \big)\models \psi   $.
By construction of $P'$ and the fact that
$P$
and $P'$
satisfy  properties $\mathit{tr}$ and $\mathit{ad }$,
the latter is equivalent to $\exists \strategymap_\Group \in  \stratsetatl^\Group_{M' } 
  \text{ s.t. }   
  \forall \lambda \in \outset(v, \strategymap_\Group),
 \forall k > 0, \big( P, \lambda(k)  \big)\models
 \bigwedge_{i \in \AGT }\mathit{rat}_i \wedge 
 \psi  $. 
By induction hypothesis,
the latter is equivalent to 
$\exists \strategymap_\Group \in  \stratsetatl^\Group_{M' }$ 
  such that 
$  \forall \lambda \in \outset(v, \strategymap_\Group),
 \forall k > 0, \big( P' , \lambda(k)  \big)\models
 \bigwedge_{i \in \AGT }\mathit{rat}_i \wedge 
\mathit{tr}(\psi)  $. 
The latter is equivalent to
$(P',v )\models \atlop{\Group}
\henceforth \big(\mathit{rat}_C \wedge \mathit{tr}(\psi)  \big)  $.
We omit the proof for the case 
$\varphi= \atloprat{\Group}
(\until{\psi_1}{\psi_2 } ) $
since it is analogous to the previous case. 

Thus, we can conclude that $(P',w_0 )\models \mathit{tr}( \varphi )$
since 
$(P,w_0 )\models  \varphi  $.
It  is also straightforward to prove that,
      for all  $i\in \AGT$
      we have 
$(P',w_0 )\models \atlop{ \{i \}} \henceforth  
      \mathit{rat}_i$
      since
      i)      
      $  \big( \stratsetatl^{\{i\}}_{M} 
     \setminus \mathit{Dom}_{M,w_0}^i
     \big) \neq \emptyset $, 
     and ii) 
     if 
         $ \strategymap_{\{i\}} \in  \big( \stratsetatl^{\{i\}}_{M} 
     \setminus \mathit{Dom}_{M,w_0 }^i
     \big) \neq \emptyset $
     and $w_0 \relAct{ }^* v$
     then   $ \strategymap_{\{i\}} \in  \big( \stratsetatl^{\{i\}}_{M} 
     \setminus \mathit{Dom}_{M,v}^i
     \big) \neq \emptyset $
     (because of stable preferences for $P$). 


Let us now prove the right-to-left direction.
Suppose the formula $\big( \bigwedge_{i \in \AGT }
     \atlop{ \{i \}} \henceforth  
      \mathit{rat}_i \big)  \wedge
     \mathit{tr}(\varphi) $
     is satisfiable for the class
          $\classcgspvar{\{\mathit{sp} \}}$. 
          Thus, 
by Lemma 
\ref{lemmacrucial},  there exists  $P=(M,\Omega_M) \in
\classcgspvar{\{ \mathit{sp},\mathit{tr},\mathit{ad}  \}} $,
 $M = ( W,  \ACT,  (\relAct{\jactatm}   )_{\jactatm \in \JACT}, 
  \valProp  )$, $\Omega_{M }=(\preceq_{i,v }   )_{i \in \AGT, v \in W }$
  with  unique root $w_0\in W$ of $M$
such that   $(P,w_0 )\models \atlop{ \{i \}} \henceforth  
      \mathit{rat}_i$
      for all $i\in \AGT$  and 
$(P,w_0 )\models \mathit{tr}(\varphi) $. 

 We are going to define a new structure
  $P'=(M,\Omega_{M}' )   $
with
  $\Omega_{M' }=(\preceq_{i,v } '  )_{i \in \AGT, v \in W }$
  such that  for all $\lambda, \lambda' \in  \historyset_{M,w_0 } $
  and for all $i \in \AGT$, 
  \begin{align*}
\lambda'  \preceq_{i,w_0  }'  \lambda
\text{ iff }
\mathit{U}_{i,w_0 }(\lambda' ) \leq \mathit{U}_{i,w_0 }(\lambda),
  \end{align*}
  where 
    \begin{align*}
& \mathit{U}_{i,w_0 }(\lambda)= 
\begin{cases}
1 & \text{if }   \exists \strategymap_{\{i\}} \in  \stratsetatl^{\{i\}}_M
 \text{ s.t. } \lambda \in \outset(w_0, \strategymap_{\{i\}}  ) \text{ and }\\
&  \forall \lambda' \in  \outset(w_0, \strategymap_{\{i\}}  ),
 \forall k> 0, \big(P', \lambda(k)\big) \models  \mathit{rat}_i , \\
0 &  \text{otherwise};
\end{cases}
      \end{align*}
      and for all $v \in W$
      such that $w_0 \relAct{ }^+  v$
      and for all $\lambda, \lambda' \in  \historyset_{M, v } $, 
  \begin{align*}
\lambda'  \preceq_{i,v   }'   \lambda
\text{ iff }
\mathit{prec}(v)\lambda'  \preceq_{i,\mathit{prec}(v)   }'  \mathit{prec}(v) \lambda, 
  \end{align*}
  where 
  $\mathit{prec}(v)$
  is 
  the unique element of 
  the set 
  $\relAct{ }^-(v)$.

By induction on the structure of the formula,
it is routine to prove that 
``$(P,v )\models \mathit{tr}( \varphi )$ iff
  $(P',v )\models \varphi $'' for all $v \in W$.

\end{proof}

\section{A counter-model}\label{section: a counter-model}
  Consider the case $\AGT = \{a\}$, $M = ( W,  \ACT,  (\relAct{\jactatm})_{\jactatm \in \JACT}, \valProp)$ and $P = ( M, (\preceq_{a,w})_{w \in W })$, where:
  \begin{itemize}
    \item $W=\{w_0,w_1\}$;
    \item $\ACT=\{0,1\}$;
    \item $\relAct{\{(a,0)\}}=\{(w_1,w_0)\}$ and $\relAct{\{(a,1)\}}=\{(w_0,w_1),(w_1,w_1)\}$;
    \item $\Va{w_0}=\emptyset $ and $\Va{w_1}=\{p\} $;
    \item for all $w\in W$ and $\comp, \comp' \in \historyset_{M,w}$: $\comp \preceq_{a,w} \comp' $ iff $\mathit{U}_{a,w}(\comp) \leq \mathit{U}_{a,w}(\comp')$, where 
    \begin{align*}
& \mathit{U}_{a,w_0 }(\comp)= 
\begin{cases}
1 & \text{if }   \forall k\in \mathbb{N}_+: \comp(k)=w_1\\
0 &  \text{otherwise};
\end{cases}\\
& \mathit{U}_{a,w_1 }(\comp)= 
\begin{cases}
  1 & \text{if }  \comp(1)=w_0\\
  0 &  \text{otherwise}.
\end{cases}
      \end{align*}
  \end{itemize}
  \begin{fact}
    \begin{align*}
    & P,w_0\nvDash \atloprat{\AGT }\henceforth p\to \atlop{\AGT }\nexttime\atloprat{\AGT }\henceforth p\\
    & P,w_1\nvDash \neg \atloprat{\emptyset }\nexttime \neg p \to \atloprat{\AGT}\nexttime p \\
    & P,w_1\nvDash \atloprat{\emptyset } \nexttime (\neg p \lor p) \rightarrow (\atloprat{\emptyset } \nexttime \neg p \lor \atloprat{\AGT} \nexttime p) 
    \end{align*}
  \end{fact}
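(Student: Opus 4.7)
The plan is to first characterise the available actions and the rational (non-dominated) strategies at each of the two states, and then evaluate each of the three formulas against this characterisation.

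First I would read off the choice sets from $\relActJoint$: at $w_0$ only action $1$ is available, since $\relAct{(a,0)}(w_0)=\emptyset$, so every computation from $w_0$ starts $w_0 w_1 \ldots$; at $w_1$ both $0$ (going to $w_0$) and $1$ (looping at $w_1$) are available. Because $\AGT=\{a\}$ is a singleton, the complement coalition is empty with a unique trivial strategy, so dominance reduces to pointwise comparison of the single computation each strategy generates. Using the utility definitions, I would conclude: at $w_0$ the strategy $\sigma^*$ that always plays $1$ generates $w_0 w_1 w_1 \ldots$ with utility $1$, and any strategy playing $0$ at some reachable $w_1$ visits $w_0$ at a later step and has utility $0$, hence is strictly dominated by $\sigma^*$; thus every rational strategy at $w_0$ produces the computation $w_0 w_1 w_1 \ldots$. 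At $w_1$, the utility depends only on $\lambda(1)$, so the strategy playing $0$ first (utility $1$) strictly dominates any strategy playing $1$ first (utility $0$), and every rational strategy at $w_1$ begins by moving to $w_0$.

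Second, I would evaluate the three formulas. For the first, $\sigma^*$ witnesses $\atloprat{\AGT}\henceforth p$ at $w_0$, since $p$ holds at every state from step $1$ on. The unique successor of $w_0$ is $w_1$, so $\atlop{\AGT}\nexttime \atloprat{\AGT}\henceforth p$ reduces at $w_0$ to $\atloprat{\AGT}\henceforth p$ at $w_1$; but by the rationality analysis every rational strategy at $w_1$ first visits $w_0$ where $p$ fails, so the consequent fails and the implication is falsified at $w_0$. For the second, $\atloprat{\emptyset}\nexttime \neg p$ coincides with $\atlop{\emptyset}\nexttime \neg p$ (the empty coalition has a single trivial strategy), which requires every computation from $w_1$ to have $\neg p$ at position $1$; this fails because the computation $w_1 w_1 w_1 \ldots$ has $p$ at position $1$, so $\neg \atloprat{\emptyset}\nexttime \neg p$ holds at $w_1$. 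Meanwhile $\atloprat{\AGT}\nexttime p$ fails at $w_1$ since every rational strategy lands in $w_0$. For the third, $\atloprat{\emptyset}\nexttime (\neg p \lor p)$ is trivially true at $w_1$, whereas both disjuncts of the consequent fail by the arguments already given.

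The main obstacle will be being meticulous about the dominance definition in the single-agent setting: one must check that the comparison $\lambda^{M,w,\strategymap_{\{a\}}\oplus \strategymap_{\emptyset}''} \prec_{a,w} \lambda^{M,w,\strategymap_{\{a\}}'\oplus \strategymap_{\emptyset}''}$ in the definition really does collapse to comparing the two unique computations generated by $\strategymap_{\{a\}}$ and $\strategymap_{\{a\}}'$, and that the utility-based preorder from the model indeed induces strict dominance as claimed. A secondary point requiring care is distinguishing the ranges of $k$ in the semantic clauses for $\henceforth$ and $\nexttime$ (which start at $k>0$) from the indexing used in the utility function $\mathit{U}_{a,w_0}$, which refers to positions $k\in\mathbb{N}_+$ of the computation.
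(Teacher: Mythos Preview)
Your proposal is correct and follows essentially the same approach as the paper: compute the dominated strategies at $w_1$ (those playing $1$ at the initial state $w_1$), observe that rational strategies at $w_1$ must move to $w_0$ where $p$ fails, and use this to falsify all three formulas. Your treatment is in fact more careful than the paper's, which simply asserts ``It is clear that $(P,w_0)\vDash \atloprat{\AGT}\henceforth p$'' without analysing dominance at $w_0$; your explicit argument that $\sigma^*$ (always play $1$) is non-dominated at $w_0$ and generates $w_0 w_1 w_1 \ldots$ fills this in properly.
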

\begin{proof}
  Consider two types of strategies of grand coalition: $\Sigma_0=\{\strategymap\in \stratsetatl_{M}^{\AGT}\mid \strategymap(a)(w_1)=0\}$ and $\Sigma_1=\{\strategymap\in \stratsetatl_{M}^{\AGT}\mid \strategymap(a)(w_1)=1\}$. Note $\comp^{P,w_1,\strategymap_0}(1)=w_0$ for all $\strategymap_0\in \Sigma_0$ and $\comp^{P,w_1,\strategymap_1}(1)=w_1$ for all $\strategymap_1\in \Sigma_1$. Then $\mathit{Dom}_{M,w_1}^a=\Sigma_1$.
  
  It is clear that $(P,w_0)\vDash \atloprat{\AGT }\henceforth p$. Note $P,w_0\vDash \neg p$. Then $(P,w_1)\vDash \neg \atloprat{\AGT }\henceforth p$ and $(P,w_0)\vDash \neg \atlop{\AGT }\nexttime\atloprat{\AGT }\henceforth p$. Then $(P,w_0)\nvDash \atloprat{\AGT }\henceforth p\to \atlop{\AGT }\nexttime\atloprat{\AGT }\henceforth p$.

It is clear that $P,w_1\vDash \neg \atloprat{\emptyset }\nexttime \neg p$ and $P,w_1\vDash \atloprat{\emptyset }\nexttime (\neg p\vee p)$. 
Note $P,w_0\vDash \neg p$. Then $P,w_1\vDash \neg \atloprat{\AGT }\nexttime p$. Therefore, $P,w_1\nvDash \neg \atloprat{\emptyset }\nexttime \neg p \to \atloprat{\AGT}\nexttime p $ and $P,w_1\nvDash \atloprat{\emptyset } \nexttime (\neg p \lor p) \rightarrow (\atloprat{\emptyset } \nexttime \neg p \lor \atloprat{\AGT} \nexttime p) $.
\end{proof}

\section{Some derivabilities}\label{section: some derivabilities}
\begin{theorem}
  The following are derivable in $\clratlogic$:
  \begin{align}  
    & 
    \neg \atloprat{\Group}\nexttime \bot \tag{$\mathtt{NARA}$} \label{NARA}
    \\
    & 
    \atloprat{\Group}\nexttime \top \tag{$\mathtt{Ser}1$} \label{Ser1}
    \\
    &
    \dfrac{\; \phi \;}
    {\; \atloprat{\emptyset }\nexttime \phi \;}
    \tag{$\mathtt{N}1$} \label{N1}
    \\
    & \dfrac{\;\phi\to\psi \;}{\;\atlop{\Group}\nexttime \phi \to \atlop{\Group'}\nexttime \psi \;},\text{ where }\Group\subseteq \Group' \tag{$\mathtt{Mon}0$} \label{Mon0}
    \\
    & 
    \dfrac{\;\phi\to\psi \;}{\;\atloprat{\Group}\nexttime \phi \to \atloprat{\Group'}\nexttime \psi\;},\text{ where }\Group\subseteq \Group'\tag{$\mathtt{Mon}1$} \label{Mon1}
    \\
    &
    \atlop{\Group }\nexttime \phi \to \neg \atlop{\overline{\Group }}\nexttime \neg \phi \tag{$\mathtt{Reg}0$} \label{Reg0} 
    \\
    &
    \atloprat{\Group }\nexttime \phi \to \neg \atloprat{\overline{\Group }}\nexttime \neg \phi \tag{$\mathtt{Reg}1$} \label{Reg1} 
    \\
    &
    \neg \atlop{\emptyset }\nexttime \neg \phi \to \atlop{\AGT}\nexttime \phi 
    \tag{$\mathtt{Max}_{\AGT}0$} \label{Max0} 
    \\ 
    &
    \atloprat{\Group} \nexttime (\phi \lor \psi) \rightarrow (\atloprat{\Group} \nexttime \phi \lor \atlop{\AGT} \nexttime \psi) 
    \tag{$\mathtt{Cro}0.5$} \label{Cro0.5}
  \end{align}
  \end{theorem}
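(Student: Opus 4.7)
My plan is to derive each of the nine items by a short reduction to the axioms and inference rules of $\clratlogic$, typically by handling the ordinary-modality case and then mimicking it for the rational counterpart. The easier half is almost mechanical: NARA follows by composing MR with NAAA and contraposing; Ser1 starts from NCS with $\Group=\emptyset$, uses NP to get $\atloprat{\emptyset}\nexttime\top$, and then MC1 to lift to an arbitrary $\Group$; N1 is N followed by NP. Mon0 is the standard derivation via N, MG0, and MC0 (from $\phi\to\psi$ get $\atlop{\emptyset}\nexttime(\phi\to\psi)$ by N, then $\atlop{\Group}\nexttime\phi\to\atlop{\Group}\nexttime\psi$ by MG0, then lift the coalition on the right by MC0), and Mon1 is literally the same derivation with N1, MG1, and MC1 replacing the ordinary axioms. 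For Reg0 I will assume both $\atlop{\Group}\nexttime\phi$ and $\atlop{\overline{\Group}}\nexttime\neg\phi$; since $\Group\cap\overline{\Group}=\emptyset$, Sup0 yields $\atlop{\AGT}\nexttime(\phi\wedge\neg\phi)$, and Mon0 applied to the tautology $\phi\wedge\neg\phi\to\bot$ yields $\atlop{\AGT}\nexttime\bot$, contradicting NAAA. Reg1 is identical with Sup1, Mon1, and NARA replacing their ordinary counterparts.

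Max0 will be obtained by instantiating Cro at $\Group=\emptyset$ with the two Boolean slots swapped, namely $\atlop{\emptyset}\nexttime(\neg\phi\vee\phi)\to(\atlop{\emptyset}\nexttime\neg\phi\vee\atlop{\AGT}\nexttime\phi)$. Since $\neg\phi\vee\phi$ is a propositional tautology, N supplies the antecedent, and the consequent rearranges to the desired $\neg\atlop{\emptyset}\nexttime\neg\phi\to\atlop{\AGT}\nexttime\phi$.

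The main obstacle is Cro0.5, since it is the only item that mixes a rational premise with a partly-rational, partly-ordinary conclusion, and whose proof cannot go via the naive route of applying MR and then Cro: that route downgrades the rational witness to an ordinary one and delivers $\atlop{\Group}\nexttime\phi$ rather than $\atloprat{\Group}\nexttime\phi$. Instead, I will reason by contradiction: assume $\atloprat{\Group}\nexttime(\phi\vee\psi)$ together with $\neg\atloprat{\Group}\nexttime\phi$ and $\neg\atlop{\AGT}\nexttime\psi$. From the last conjunct the contrapositive of Max0 (already derived above) yields $\atlop{\emptyset}\nexttime\neg\psi$, and then NP upgrades this to $\atloprat{\emptyset}\nexttime\neg\psi$. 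Since $\Group$ and $\emptyset$ are disjoint, Sup1 now yields $\atloprat{\Group}\nexttime((\phi\vee\psi)\wedge\neg\psi)$, and the propositional tautology $(\phi\vee\psi)\wedge\neg\psi\to\phi$ together with Mon1 gives $\atloprat{\Group}\nexttime\phi$, contradicting the second assumption. The key methodological point, and the reason this step is delicate, is that preserving the rationality of the strategic witness requires exactly this detour through NP, Sup1, and Mon1: these axioms let one refine a rational strategy ensuring $\phi\vee\psi$ into a rational strategy ensuring $\phi$, precisely in the case where $\psi$ cannot even be forced by the grand coalition.
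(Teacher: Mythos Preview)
Your proposal is correct and follows essentially the same derivations as the paper, including the key detour through $\mathtt{NP}$, $\mathtt{Sup1}$, $\mathtt{Mon1}$, and $\mathtt{Max}_{\AGT}0$ for $\mathtt{Cro}0.5$. The only minor variation is $\mathtt{Reg}1$: the paper obtains it directly from $\mathtt{Reg}0$ and $\mathtt{A\text{-}MR}$ (downgrading both rational modalities to ordinary ones and then invoking $\mathtt{Reg}0$), whereas you rederive it from scratch via $\mathtt{Sup1}$, $\mathtt{Mon1}$, and $\mathtt{NARA}$; both routes are equally short and valid.
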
  
\begin{proof}
  \begin{itemlist}{(\ref{NARA})}
      \item[(\ref{NARA})] From \ref{ax:NAAA} and \ref{ax:MR}.
      
      \item[(\ref{Ser1})] By \ref{ax:Ser}, $\vdash_{\clratlogic}\atlop{\emptyset }\nexttime \top$. By \ref{ax:NP}, $\vdash_{\clratlogic}\atloprat{\emptyset }\nexttime \top$. By \ref{ax:MC1}, $\vdash_{\clratlogic}\atloprat{\Group}\nexttime \top $.
      
      \item[(\ref{N1})] From \ref{ir:N} and \ref{ax:NP}.  
      
      \item[(\ref{Mon0})] Assume $\vdash_{\clratlogic} \phi \to \psi$. By \ref{ir:N}, $\vdash_{\clratlogic} \atlop{\emptyset }\nexttime (\phi \to \psi )$.   
      By \ref{ax:MG0}, we have that $\vdash_{\clratlogic} \atlop{\Group }\nexttime \phi \to \atlop{\Group }\nexttime \psi $. By \ref{ax:MC0}, $\vdash_{\clratlogic} \atlop{\Group }\nexttime \phi \to \atlop{\Group' }\nexttime \psi $
      
      \item[(\ref{Mon1})] Similar to the proof for (\ref{Mon0}).
      
      \item[(\ref{Reg0})] By \ref{ax:Sup0}, $\vdash_{\clratlogic}(\atlop{\Group }\nexttime \phi \wedge \atlop{\overline{\Group }}\nexttime \neg \phi ) \to \atlop{\AGT}\nexttime (\phi \wedge \neg \phi )$. Note $\vdash_{\clratlogic} \atlop{\AGT}\nexttime (\phi \wedge \neg \phi )\to \atlop{\AGT}\nexttime \bot $ and $\vdash_{\clratlogic} \neg \atlop{\AGT}\nexttime \bot $. Then $\vdash_{\clratlogic}\atlop{\Group }\nexttime \phi \to \neg \atlop{\overline{\Group }}\nexttime \neg \phi $
      
      \item[(\ref{Reg1})] From \ref{Reg0} and \ref{ax:MR}.
      
      \item[(\ref{Max0})] By \ref{ir:N}, $\vdash_{\clratlogic} \atlop{\emptyset }\nexttime (\neg \phi\vee \phi)$. By \ref{ax:Cro}, $\vdash_{\clratlogic} \atlop{\emptyset }\nexttime \neg \phi \vee \atlop{\AGT }\nexttime \phi $. Then $\vdash_{\clratlogic}\neg \atlop{\emptyset }\nexttime \neg \phi \to \atlop{\AGT}\nexttime \phi $.
      
      \item[(\ref{Cro0.5})] Note $\vdash_{\clratlogic} ((\phi \lor \psi)\wedge  \neg \psi)\to \phi $. By \ref{ax:Sup1} and \ref{Mon1}, $\vdash_{\clratlogic} (\atloprat{\Group} \nexttime (\phi \lor \psi)\wedge \atloprat{\emptyset }\nexttime \neg \psi) \to \atloprat{\Group }\nexttime \phi$. Then $\vdash_{\clratlogic} \atloprat{\Group} \nexttime (\phi \lor \psi)\to ( \atloprat{\Group }\nexttime \phi \vee \neg \atloprat{\emptyset }\nexttime \neg \psi)$. By \ref{ax:NP} and \ref{Max0}, $\vdash_{\clratlogic} \neg \atloprat{\emptyset }\nexttime \neg \psi \to \atlop{\AGT }\nexttime \psi $. Therefore,  we have 
      $\vdash_{\clratlogic} \atloprat{\Group} \nexttime (\phi \lor  \psi) \rightarrow (\atloprat{\Group} \nexttime \phi \allowbreak \lor \allowbreak \atlop{\AGT} \nexttime \psi)$. 
    \end{itemlist}
\end{proof}

\section{Proof of Lemma \ref{lemma: downward validity}}
\label{section: downward validity}
To prove the downward validity lemma, the key is  to construct a model for any satisfiable standard conjunction, which is equal to the negation of a standard disjunction. We use a method called `blueprint' to achieve it.

\subsection{Blueprints and their realizations}
\begin{definition}[Blueprint]

A blueprint is a tuple $\bp = (\ACT_0, \\
\rlist, \clist)$, where:
  \begin{itemize}
  \item $\ACT_0$ is a nonempty set of actions;
  \item $\clist: \JACT^{\AGT}_0 \to \lang_{\clratlogic}(\ATM, \AGT )$ is a consequence-list function, where $\JACT^{\AGT}_0=\{\sigma: \AGT\to \ACT_0\}$, and $\clist(\sigma)$ is satisfiable for all $\sigma \in \JACT_0$.
  \item $\rlist: \AGT \to (\powerset{\ACT}-\{\emptyset \})$ is a rationality-list function;
  \end{itemize}
\end{definition}
\begin{theorem}[Realization]\label{theorem: realization}
  Let $\bp = (\ACT_0 , \clist , \rlist )$ be a blueprint and $\chi$ be a satisfiable conjunction of propositional literals. Then there is a pointed model $(P,w_0)$, called a realization of $\bp$ and $\chi$, where $P=( M , (\preceq_{a,w})_{a\in \AGT, w \in W })$, $M=( W, \ACT, \relActJoint ,  \valProp )$, and $\relActJoint = (\relAct{\jactatm} )_{\jactatm \in \JACT}$ such that:
  \begin{itemize}
    \item $P,w_0\vDash \chi$;
    \item $\ACT_0\subseteq \ACT$;
    \item for all $\strategymap\in \stratsetatl_{M}^\AGT$ and $\sigma\in\JACT^{\AGT}_0$, if $\strategymap(a)(w_0)=\sigma(a)$ for all $a\in \AGT$, then $M,\lambda^{M,w_0,\strategymap}(1)\vDash \clist(\sigma)$;
    \item for all coalition $\Group$, $\strategymap_{\Group}\in \stratsetatl_{M}^{\Group}$ and $a\in \Group$: $\strategymap_{\Group}|_{\{a\}}\not \in \mathit{Dom}_{M,w_0}^a$ iff $\strategymap_{\Group}(a)(w_0)\in \rlist(a)$.
  \end{itemize}
\end{theorem}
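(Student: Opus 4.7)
The plan is to build $(P,w_0)$ as a one-step tree rooted at a fresh world $w_0$ whose branches carry, on the one hand, the consequence-list content and, on the other, a preference structure tuned so that the rationally non-dominated first moves at $w_0$ are exactly those listed by $\rlist$. First, I would choose the action alphabet $\ACT$ large enough so that $\ACT_0 \cup \bigcup_{a\in \AGT}\rlist(a) \subseteq \ACT$ and so that all the witnessing CGSPs used below can be realised over $\ACT$. For each $\sigma \in \JACT = \ACT^{\AGT}$ I introduce a pairwise disjoint successor $w_\sigma$ with $\relAct{\sigma}(w_0) = \{w_\sigma\}$. When $\sigma \in \JACT^{\AGT}_0$ I graft below $w_\sigma$ a pointed CGSP witnessing the satisfiability of $\clist(\sigma)$; when $\sigma \notin \JACT^{\AGT}_0$ I attach an arbitrary serial self-looping gadget. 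The valuation at $w_0$ is any truth assignment making the satisfiable literal conjunction $\chi$ true. By construction, $\ACT_0 \subseteq \ACT$, Constraint~\ref{C1} holds by the uniqueness of each $w_\sigma$, Constraint~\ref{C2} holds because every $\sigma \in \JACT$ has a successor from $w_0$, and Constraint~\ref{C3} is inherited from the subtrees. The consequence-list clause is then immediate: if $\strategymap \in \stratsetatl_M^{\AGT}$ has first moves $\sigma \in \JACT^{\AGT}_0$, then $\lambda^{M, w_0, \strategymap}(1) = w_\sigma$, and $w_\sigma$ satisfies $\clist(\sigma)$ by the grafting.

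Next, I would design the preferences at $w_0$ to be \emph{short-sighted} so that only the immediate successor contributes to ordering computations, with the utility being a $0/1$ split driven by $\rlist$. Concretely, for each agent $a$ fix some $\alpha^*_a \in \rlist(a)$ (which exists because $\rlist(a)$ is nonempty), define $u_a(w_\sigma) = 1$ when $\sigma(a) \in \rlist(a)$ and $u_a(w_\sigma) = 0$ otherwise, and set $\lambda' \preceq_{a, w_0} \lambda$ iff $u_a(\lambda'(1)) \le u_a(\lambda(1))$. Preferences at all worlds other than $w_0$ are irrelevant to the statement and can be assigned any total preorder. Because $\lambda^{M, w_0, \strategymap \oplus \strategymap''}(1) = w_\sigma$ where $\sigma(a) = \strategymap(a)(w_0)$ and $\sigma(b) = \strategymap''(b)(w_0)$ for $b \neq a$, the $u_a$-value of the immediate successor depends only on whether $\strategymap(a)(w_0) \in \rlist(a)$. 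A two-case analysis then settles the rationality clause: if $\strategymap(a)(w_0) \in \rlist(a)$, every candidate dominator either also first-moves in $\rlist(a)$ (ties under every opponent profile) or first-moves outside $\rlist(a)$ (loses), so $\strategymap|_{\{a\}} \notin \mathit{Dom}^a_{M, w_0}$; if $\strategymap(a)(w_0) \notin \rlist(a)$, any strategy first-moving at $\alpha^*_a$ strictly dominates $\strategymap|_{\{a\}}$ against every opponent profile, so $\strategymap|_{\{a\}} \in \mathit{Dom}^a_{M, w_0}$.

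The main technical obstacle is that dominance is defined over perfect-recall strategies, which fix actions along every finite path, so a priori the rationality of a strategy could depend on continuations rather than only on the first move at $w_0$. The short-sighted preference design is precisely what trivialises this dependency: since $\preceq_{a, w_0}$ looks only at $\lambda(1)$ and $\lambda(1)$ is fully determined by the joint first move, the dominance comparison between any two strategies at $w_0$ collapses to a comparison of their first moves against every opponent first-move profile, which by design is controlled exactly by membership in $\rlist(a)$, yielding the desired equivalence.
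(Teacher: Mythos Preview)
Your proposal is correct and follows essentially the same approach as the paper: build a one-step tree from a fresh root $w_0$, hang a witnessing pointed CGSP for $\clist(\sigma)$ at the $\sigma$-successor, and equip $w_0$ with short-sighted $0/1$ preferences keyed to membership in $\rlist(a)$ so that dominance collapses to the first move. The only cosmetic differences are that the paper restricts the actions available at $w_0$ to exactly $\ACT_0$ (taking the witnessing models' action alphabets to be pairwise disjoint and disjoint from $\ACT_0$, so that no other joint action fires at $w_0$), whereas you make every $\sigma\in\ACT^{\AGT}$ fire at $w_0$ and patch the extra branches with self-loop gadgets; both choices satisfy the four bullets, and your utility argument handles the extra branches correctly since any first move outside $\rlist(a)$ gets $u_a=0$ and is dominated by $\alpha^*_a$.
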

\begin{proof}  
  For all $\sigma\in\JACT^{\AGT}_0$, let $(P^{\sigma }, w^{\sigma })$ be a pointed model satisfies $\clist(\sigma )$, where $P=(W^{\sigma }, \ACT^{\sigma }, (\rel^{\sigma}_{\jactatm} )_{\jactatm \in \JACT^{\sigma }},    \allowbreak \valProp^{\sigma }, (\preceq^{\sigma }_{a,w})_{a\in \AGT, w \in W^{\sigma } })$, their domains are pairwise disjoint, and their set of actions are pairwise disjoint and disjoint with $\ACT_0$. Let $w_0$ be a world not in any of these models.
  
  Define the CGS with preferences $P=( W, \ACT, (\relAct{\jactatm} )_{\jactatm \in \JACT},  \allowbreak \valProp , (\preceq_{a,w})_{a\in \AGT, w \in W })$ as the following:
  \begin{itemize}
    \item $W=\{w_0\}\cup \bigcup_{\sigma\in\JACT^{\AGT}_0}W^{\sigma }$;
    \item $\ACT=\ACT_0\cup \bigcup_{\sigma\in\JACT^{\AGT}_0}\ACT^{\sigma }$;
    \item for all $\jactatm: \AGT\to \ACT $:
    $$\rel_{\jactatm}=\begin{cases}
      \rel^{\sigma}_{\jactatm}&\text{ if there is }\sigma\in\JACT^{\AGT}_0 \text{ such that }\\
      &\jactatm: \AGT\to \ACT^{\sigma } \\
      \{(w_0,w^\jactatm)\}&\text{ if }\jactatm\in\JACT^{\AGT}_0 \\
      \emptyset &\text{ otherwise }
    \end{cases}$$
    \item for all $w \in W$:
    $$\Va{w}=$$
    $$\begin{cases}
      \valProp^{\sigma }(w)&\text{ if }w\in W^{\sigma } \\ & \text{ for some }\sigma\in\JACT^{\AGT}_0,\\
      \{p\in \AP \mid p \text{ is a conjunct of }\chi \}&\text{ otherwise }
    \end{cases}$$
    \item for all $w \in W$ and $a\in\AGT$:
    $$\preceq_{a,w}=\begin{cases}
        \preceq^{\sigma }_{a,w}&\text{ if }w\in W^{\sigma } \text{ for some }\sigma\in\JACT^{\AGT}_0\\
        \preceq^0_a&\text{ if }w=w_0
    \end{cases}$$
    where $\preceq^0_a=\{(\lambda,\lambda')\in \historyset_{P,w_0}\times \historyset_{P,w_0} \mid\text{either }\lambda'(1)=w^\sigma \text{ and }\sigma(a)\in \rlist(a), \text{ or }\lambda(1)=w^\sigma \text{ and }\sigma(a)\notin \rlist(a)\\ \text{for some }\sigma\in\JACT^{\AGT}_0\}$. 
  \end{itemize}

  It is clear that $P,w_0\vDash \chi$ and $\ACT_0\subseteq \ACT$.
  
  Let $\strategymap\in \stratsetatl_{M}^\AGT$ and $\sigma: \AGT \to \ACT_0$ such that $\strategymap(a)(w_0)=\sigma(a)$ for all $a\in \AGT$. We want to show $M,\lambda^{M,w_0,\strategymap}(1)\vDash \clist(\sigma)$. Note $\rel_\sigma(w_0)=\{w^\sigma\}$. Then $\lambda^{M,w_0,\strategymap}(1)=w^\sigma$. Therefore, $$M,\lambda^{M,w_0,\strategymap}(1)\vDash \clist(\sigma).$$

  Let $\Group\subseteq \AGT$, $\strategymap_{\Group}\in \stratsetatl_{M}^{\Group}$ and $a\in \Group$. We want to show $\strategymap_{\Group}|_{\{a\}}\not \in \mathit{Dom}_{M,w_0}^a$ iff $\strategymap_{\Group}(a)(w_0)\in \rlist(a)$. 

  Assume $\strategymap_{\Group}(a)(w_0)\in \rlist(a)$. Note that $$\strategymap_{\Group}|_{\{a\}}\oplus \strategymap_{\AGT-\{a\}}(a)(w_0)\in \rlist(a)$$ for all $\strategymap_{\AGT-\{a\}}\in \stratsetatl_{M}^{\AGT-\{a\}}$. Then  $\lambda^{P,w_0,\strategymap_{\Group}|_{\{a\}}\oplus \strategymap_{\AGT-\{a\}}}\succeq \lambda$ for all $\strategymap_{\AGT-\{a\}}\in \stratsetatl_{M}^{\AGT-\{a\}}$ and $\lambda \in \historyset_{P,w_0}$. It follows that for all $\strategymap'_{\{a\}}\in \stratsetatl_{M}^{\{a\}}$, there is $\strategymap_{\AGT-\{a\}}\in \stratsetatl_{M}^{\AGT-\{a\}}$ such that $\lambda^{P,w_0,\strategymap_{\Group}|_{\{a\}}\oplus \strategymap_{\AGT-\{a\}}}\succeq \lambda^{P,w_0,\strategymap'_{\{a\}}\oplus \strategymap_{\AGT-\{a\}}}$. Therefore, $\strategymap_{\Group}|_{\{a\}}\not \in \mathit{Dom}_{M,w_0}^a$.
  
  Assume $\strategymap_{\Group}(a)(w_0)\notin \rlist(a)$. Note $\rlist(a)\neq \emptyset $. Then there is $\strategymap'_{\{a\}}\in \stratsetatl_{M}^{\{a\}}$ such that $\strategymap'_{\{a\}}(a)(w_0)\in \rlist(a)$. Let $\strategymap_{\AGT-\{a\}}\in \stratsetatl_{M}^{\AGT-\{a\}}$. Then $$\lambda^{P,w_0,\strategymap_{\Group}|_{\{a\}}\oplus \strategymap_{\AGT/\{a\}}}\succ \lambda^{P,w_0,\strategymap'_{\{a\}}\oplus \strategymap_{\AGT/\{a\}}}.$$
  Therefore, $\strategymap_{\Group}|_{\{a\}}\in \mathit{Dom}_{M,w_0}^a$.    
\end{proof}
\subsection{The proof}
Now we prove the downward validity lemma (Lemma \ref{lemma: downward validity}).
\begin{proof}
  We prove its contrapositive. Assume validity-reduction-condition is not met. Then it meets satisfiability-reduction-condi\-tion, that is:   
  $\neg \chi $ is satisfiable and for all $X\subseteq \NI$ and $X' \subseteq \NIrat $, if $X\cup X'$ is neat, then:
  \begin{itemize}
    \item for all $\pindex\in \PI$, if $\bigcup_{\nindex\in X\cup X'} \Group_\nindex 
      \subseteq \Group_\pindex$, then $\bigwedge_{\nindex\in X\cup X'} \psi_\nindex 
      \wedge \neg \psi_\pindex\wedge \bigwedge_{\pindex'\in Y_0}\neg \psi_{\pindex'}$ is satisfiable for all $\pindex\in \PI$;
    \item for all $\pindex\in \PIrat$, if $\bigcup_{\nindex\in X'}        \Group_\nindex\subseteq \Group_\pindex$, then $\bigwedge_{\nindex\in X_0\cup X'} \psi_\nindex 
    \wedge \neg \psi_\pindex\wedge \bigwedge_{\pindex'\in Y_0}\neg \psi_{\pindex'}$ is satisfiable;
    \item $\bigwedge_{\nindex\in X_0\cup X'} \psi_\nindex \wedge \bigwedge_{\pindex\in Y_0\cup Y_1}\neg \psi_\pindex$ is satisfiable.
  \end{itemize}

  We want to show $\neg \chi \wedge \bigwedge_{\nindex\in \NI}\atlop{\Group_\nindex}\nexttime \psi_\nindex 
  \wedge 
  \bigwedge_{\nindex\in \NIrat}\atloprat{\Group_\nindex}\nexttime \psi_\nindex\wedge 
  \bigwedge_{\pindex\in \PI}\neg \atlop{\Group_\pindex}\nexttime \psi_\pindex 
  \wedge 
  \bigwedge_{\pindex\in \PIrat}\neg \atloprat{\Group_\pindex}\nexttime \psi_\pindex$ is satisfiable. Note $\phi $ is satisfiable iff $\phi \wedge \atloprat{\emptyset }\nexttime \top \wedge \neg \atloprat{\AGT}\nexttime \bot$ is satisfiable for all formula $\phi $. Therefore, for convenience, we can assume $\NIrat\neq \emptyset $ and $\PI\cup \PIrat$ is a nonempty beginning segment of natural numbers.

  Let $\ACT_0=\NI \cup \NIrat \cup \PI \cup \PIrat $ and $\JACT^{\Group}_0=\{\sigma_{\Group}: \Group \to \ACT_0\}$.
  Define $\Fsupport: \bigcup_{\Group\subseteq \AGT }\JACT^{\Group}_0 \to \powerset{\NI}$, 
  such that for all $\Group \subseteq \AGT$ and $\ja{\Group}\in \JACT^{\Group}_0$:
  \begin{align*}
  \support{\ja{\Group}} = \{x \in \NI\cup \NIrat \mid & \Group_\nindex\subseteq \Group \text{ and } \move{\Group}{a} = \nindex \\ &\text{for all } a \in \Group_\nindex\}    
  \end{align*}
  
  Define $\Fimpeach: \bigcup_{\Group\subseteq \AGT }\JACT^{\Group}_0 \to \PI\cup\PIrat$, such that for all $\Group \subseteq \AGT$ and $\ja{\Group}\in \JACT^{\Group}_0$:
  $$
  \impeach{\ja{\Group}} = \Big( \sum_{a \in \Group} \move{\Group}{a} \Big) \bmod n$$
  
  \noindent where $n$ is the cardinality of $\PI\cup\PIrat$.

  Define $\bp = (\ACT_0, \clist, \rlist)$ as follows:
  \begin{itemize}
    \item $\rlist: \AGT \to \ACT_0$ such that for all $a\in \AGT $:
    $$\rlist(a)=\NIrat $$
    \item $\clist: \JACT^{\AGT}_0 \to \lang_{\clratlogic}(\ATM, \AGT )$ such that for all $\sigma\in \JACT^{\AGT}_0$:
    $$\clist(\sigma)=$$
    $$
    \begin{cases}    
      \theta \wedge \bigwedge_{\pindex\in Y_1}\neg \psi_\pindex&\text{if }{\ja{}(a)}\in \rlist(a)\text{ for all }a\in \AGT;\\
      \theta \wedge \neg \psi_{\impeach{\ja{}}}&\text{if }{\ja{}(a)}\notin \rlist(a)\text{ for some }a\in \AGT,\\
      &\bigcup_{\nindex\in \support{\ja{}}}\Group_\nindex\subseteq \Group_\impeach{\ja{}},\\
      &\text{ and }\impeach{\ja{}}\in \PI;\\
      \theta \wedge \neg \psi_{\impeach{\ja{}}}&\text{if }{\ja{}(a)}\notin \rlist(a)\text{ for some }a\in \AGT,\\
      &\bigcup_{\nindex\in \support{\ja{}}}\Group_\nindex\subseteq \Group_\impeach{\ja{}},\\
      &{\ja{}(a)\in \rlist(a)}\text{ for all }a\in \Group_\impeach{\ja{}},\\
      &\text{ and }\impeach{\ja{}}\in \PIrat;\\
      \theta &\text{otherwise. }
    \end{cases}$$
    where $\theta = \bigwedge_{\nindex\in \support{\ja{}}}\psi_\nindex \wedge \bigwedge_{\pindex \in Y_{0}}\neg \psi_\pindex$.
  \end{itemize}
  
  Now we show $\bp $ is a blueprint. It suffices to show $\clist(\sigma)$ is satisfiable for all $\sigma\in \JACT^{\AGT}_0$ and $\rlist(a)\neq \emptyset$ for all $a\in \AGT$. Note $\NIrat \neq \emptyset $. Then $\rlist(a)\neq \emptyset$ for all $a\in \AGT$.

  Let $\ja{}\in \JACT^{\AGT}_0$. We want to show that $\clist(\sigma)$ is satisfiable. 
  
  Firstly, we will show $\support{\ja{}}$ is neat. 
  
  Let $\nindex,\nindex'\in \support{\ja{}}$ such that $\nindex\neq \nindex'$. Assume $\Group_\nindex\cap \Group_{\nindex'}\neq \emptyset $. Let $a\in \Group_\nindex\cap \Group_{\nindex'}$. By the definition of $\Fsupport$, $\move{}{a}=\nindex$ and $\move{}{a}=\nindex'$. But $\nindex\neq \nindex'$. Then $\Group_\nindex\cap \Group_{\nindex'}= \emptyset $. Therefore, $\support{\ja{}}$ is neat.

  Secondly, we will show that 
  if $\move{}{a}\in \rlist(a)$ for all $a\in \bigcup_{\nindex\in \support{\ja{}}}\Group_\nindex$, then $\support{\ja{}}\subseteq X_0\cup X'$ for some $X'\subseteq \NIrat$.
 
  Assume $\move{}{a}\in \rlist(a)$ for all $a\in \bigcup_{\nindex\in \support{\ja{}}}\Group_\nindex$. We want to show $(\NI\cap \support{\ja{}})\subseteq X_0$. Assume not. Then there is $\nindex\in \NI\cap \support{\ja{}}$ such that $\nindex\notin X_0$. Then $\Group_\nindex\neq \emptyset $. Therefore, there is $a\in \Group_\nindex$ such that $\move{}{a}=i$. Note $\nindex\in \NI$. Then $\nindex\notin \rlist(a)$. It is contradictory to $\move{}{a}\in \rlist(a)$ for all $a\in \bigcup_{\nindex\in \support{\ja{}}}\Group_\nindex$. Therefore, $(\NI\cap \support{\ja{}})\subseteq X_0$. Let $X'=\support{\ja{}}-(\NI\cap \support{\ja{}})$. Clearly, $\support{\ja{}}\subseteq X_0\cup X'$.

  Now we show $\clist(\sigma)$ is satisfiable by consider the following cases:
  \begin{enumerate}[label={Case \arabic*}]
    \item Assume $\clist(\ja{})=\theta \wedge \bigwedge_{\pindex\in Y_1}\neg \psi_\pindex$ and $\ja{}(a)\in \rlist(a)$ for all $a\in \AGT$, where $\theta = \bigwedge_{\nindex\in \support{\ja{}}}\psi_\nindex \wedge \bigwedge_{\pindex \in Y_{0}}\neg \psi_\pindex$. Note $\support{\ja{}}$ is neat. By satisfiability-reduction-con\-di\-tion, $\clist(\sigma)$ is satisfiable.
    \item Assume $\clist(\ja{})=\theta \wedge \psi_{\impeach{\ja{}}}$, $\ja{}(a)\notin \rlist(a)$ for some $a\in \AGT$,
    $\bigcup_{\nindex\in \support{\ja{}}}\Group_\nindex\subseteq \Group_\impeach{\ja{}}$, and $\impeach{\ja{}}\in \PI$. Note $\support{\ja{}}$ is neat. By satisfiabili\-ty-reduction-condition, $\clist(\sigma)$ is satisfiable.
    \item Assume $\clist(\ja{})=\theta \wedge \psi_{\impeach{\ja{}}}$, $\ja{}(a)\notin \rlist(a)$ for some $a\in \AGT$,
    $\bigcup_{\nindex\in \support{\ja{}}}\Group_\nindex\subseteq \Group_\impeach{\ja{}}$, $\ja{}(a)\in \rlist(a)$ for all $a\in \Group_\impeach{\ja{}}$, and $\impeach{\ja{}}\in \PIrat$. Note $\support{\ja{}}$ is neat and $\support{\ja{}}\subseteq X_0\cup X'$ for some $I'\subseteq \NIrat$. By satisfiability-reduction-condition, 
    
    $\clist(\sigma)$ is satisfiable.
    \item Assume $\clist(\ja{})=\theta=\bigwedge_{\nindex\in \support{\ja{}}}\psi_\nindex \wedge \bigwedge_{\pindex \in Y_{0}}\neg \psi_\pindex$. Note $\support{\ja{}}$ is neat. By satisfiability-reduction-con\-di\-tion, $\clist(\sigma)$ is satisfiable.
  \end{enumerate}
 
 Let $\overline{\chi }$ be a conjunction of atomic propositions, 
 equivalent to $\neg \chi$. By realization theorem (Theorem \ref{theorem: realization}), there is a pointed model $(P,w_0)$, where $P=( M , {(\preceq_{a,w})_{a\in \AGT, w \in W }})$ and $M=( W, \ACT,\\
 (\relAct{\jactatm} )_{\jactatm \in \JACT},  \valProp )$, such that:
 \begin{itemize}
   \item $P,w_0\vDash \overline{\chi }$;
   \item $\ACT_0\subseteq \ACT$;
   \item for all $\strategymap\in \stratsetatl_{M}^\AGT$ and $\sigma\in\JACT^{\AGT}_0$, if $\strategymap(a)(w_0)=\sigma(a)$ for all $a\in \AGT$, then $M,\lambda^{M,w_0,\strategymap}(1)\vDash \clist(\sigma)$;
   \item for all coalition $\Group$, $\strategymap_{\Group}\in \stratsetatl_{M}^{\Group}$ and $a\in \Group$: $\strategymap_{\Group}|_{\{a\}}\not \in \mathit{Dom}_{M,w_0}^a$ iff $\strategymap_{\Group}(a)(w_0)\in \rlist(a)$.
 \end{itemize}
We are going to show 
\begin{align*}
    P,w_0\vDash \neg \chi \wedge \bigwedge_{\nindex\in \NI}\atlop{\Group_\nindex}\nexttime \psi_\nindex 
 \wedge 
  \bigwedge_{\nindex\in \NIrat}\atloprat{\Group_\nindex}\nexttime \psi_\nindex \\
  \wedge 
  \bigwedge_{\pindex\in \PI}\neg \atlop{\Group_\pindex}\nexttime \psi_\pindex 
  \wedge 
  \bigwedge_{\pindex\in \PIrat}\neg \atloprat{\Group_\pindex}\nexttime \psi_\pindex.
  \end{align*}
  By realization theorem (Theorem \ref{theorem: realization}), $P,w_0\vDash \neg \chi$ clearly holds.

  Let $\nindex\in \NI$. We want to show $P,w_0\vDash \atlop{\Group_\nindex}\nexttime \psi_\nindex $. Let $\strategymap_{\Group_\nindex}\in \stratsetatl^{\Group_\nindex}_{M}$ such that $\strategymap_{\Group_\nindex}(a)(w_0)=\nindex$ for all $a\in \Group_\nindex$. Let $\sigma\in\JACT^{\AGT}_0$ such that $\strategymap(a)(w_0)=\sigma(a)$. Then $\sigma(a)=\nindex$ for all $a\in \Group_\nindex$. Then $\nindex\in \support{\sigma}$. Therefore, $\psi_\nindex$ is a conjunct of $\clist(\sigma)$. Then $M,\lambda(1)\vDash \psi_\nindex$ for all $\lambda\in \outset(w, \strategymap_{\Group_\nindex})$. Therefore, $P,w_0\vDash \atlop{\Group_\nindex}\nexttime \psi_\nindex $. 

  Let $\nindex\in \NIrat$. We want to show $P,w_0\vDash \atloprat{\Group_\nindex}\nexttime \psi_\nindex $.  
  Let $\strategymap_{\Group_\nindex}\in \stratsetatl^{\Group_\nindex}_{M}$ such that 
  $\strategymap_{\Group_\nindex}(a)(w_0)=\nindex$ for all $a\in \Group_\nindex$. Then $\strategymap_{\Group_\nindex}(a)(w_0)\in \rlist(a)$ for all  $a\in \Group_\nindex$. Then $\strategymap_{\Group_\nindex}|_{\{a\}}\not \in \mathit{Dom}_{M,w_0}^a$ for all $a\in \Group_\nindex$.  
  Similar to the case $\nindex\in \NI$, $M,\lambda(1)\vDash \psi_\nindex$ for all $\lambda\in \outset(w, \strategymap_{\Group_\nindex})$. Therefore, $P,w_0\vDash \atlop{\Group_\nindex}\nexttime \psi_\nindex $.

  Let $\pindex\in \PI$. We want to show $P,w_0\vDash \neg \atlop{\Group_\pindex}\nexttime \psi_\pindex $. Let $\strategymap_{\Group_\pindex}\in \stratsetatl^{\Group_\pindex}_{M}$. We want to show $M,\lambda(1)\vDash \neg \psi_\pindex$ for some $\lambda\in \outset(w, \strategymap_{\Group_\pindex})$. Consider the following two cases:

  \begin{enumerate}[label={Case \arabic*}]
  \item Assume $\pindex\in Y_0$. Then $\neg \psi_\pindex$ is a conjunct of $\clist(\sigma)$ for all $\sigma\in\JACT^{\AGT}_0$. Then $M,\lambda(1)\vDash \neg \psi_\pindex$ for some $\lambda\in \outset(w, \strategymap_{\Group_\pindex})$.
  \item Assume $\pindex\notin Y_0$. Then $\Group_\pindex\neq \AGT$. Then there is $a\in \AGT-\Group_\pindex$. Then we claim there is $\sigma_\AGT\in\JACT^{\AGT}_0$ such that $\strategymap_{\Group_\pindex}(b)(w_0)=\sigma_\AGT(b)$ for all $b\in \Group_\pindex$ and $\impeach{\sigma}=\pindex$. 
  Let $\sigma_{\Group_\pindex}:\Group_\pindex\to \ACT_0$ such that $\strategymap_{\Group_\pindex}(b)(w_0)=\sigma_{\Group_\pindex}(b)$ for all $b\in \Group_\pindex$ and $\sigma_{\AGT}:\AGT\to \ACT_0$ such that:
  \begin{itemize}
    \item  $\move{\AGT}{b}=\move{\AGT}{b}$ for all $b\in \Group_\pindex$;
    \item $\move{\AGT}{b}=0$ for all $b \in \AGT-\Group_\pindex$ such that $b\neq a$;
    \item $\move{\AGT}{a}=\begin{cases}
    \pindex-\impeach{\ja{\Group_\pindex}} & \text{if } \pindex \geq \impeach{\ja{\Group_\pindex}} \\
    \pindex-\impeach{\ja{\Group_\pindex}}+n & \text{otherwise}
    \end{cases}$
  \end{itemize}
 Clearly, $\strategymap_{\Group_\pindex}(b)(w_0)=\sigma_\AGT(b)$ for all $b\in \Group_\pindex$ and also $\impeach{\sigma}=\pindex$. Then $\neg \psi_\pindex$ is a conjunct of $\clist(\sigma_\AGT)$. Then there is $M,\lambda(1)\vDash \neg \psi_\pindex$ for some $\lambda\in \outset(w, \strategymap_{\Group_\pindex})$.
  \end{enumerate}
  Therefore, $P,w_0\vDash \neg \atlop{\Group_\pindex}\nexttime \psi_\pindex $.

  Let $\pindex\in \PIrat$. We want to show $P,w_0\vDash \neg \atloprat{\Group_\pindex}\nexttime \psi_\pindex $. Let $\strategymap_{\Group_\pindex}\in \stratsetatl^{\Group_\pindex}_{M}$ such that $\strategymap_{\Group_\pindex}|_{\{a\}}\not \in \mathit{Dom}_{M,w}^a$ for all $a \in \Group_\pindex$. We want to show $M,\lambda(1)\vDash \neg \psi_\pindex$ for some $\lambda\in \outset(w, \strategymap_{\Group_\pindex})$. Consider the following two cases:
  \begin{enumerate}[label={Case \arabic*}]
  \item Assume $\pindex\in Y_1$. Then $\Group_\pindex=\AGT$. Let $\sigma: \AGT \to \ACT_0$ such that $\sigma(a)=\strategymap_{\Group_\pindex}(a)(w)$ for all $a \in \Group_\pindex$. Note $\strategymap_{\Group_\pindex}|_{\{a\}}\not \in \mathit{Dom}_{M,w_0}^a$ for all $a \in \Group_\pindex$. Then $\sigma(a)\in \rlist(a)$ for all $a\in \AGT$. Then $\neg \psi_\pindex$ is a conjunct of $\clist(\sigma)$. Therefore, $M,\lambda(1)\vDash \neg \psi_\pindex$ for some $\lambda\in \outset(w, \strategymap_{\Group_\pindex})$.
  \item Assume $\pindex\notin Y_1$. Then $\Group_\pindex\neq \AGT$. Similar to the case $\pindex\in \PI$. there is $M,\lambda(1)\vDash \neg \psi_\pindex$ for some $\lambda\in \outset(w, \strategymap_{\Group_\pindex})$.
  \end{enumerate}
  Therefore, $P,w_0\vDash \neg \atloprat{\Group_\pindex}\nexttime \psi_\pindex $
\end{proof}

\section{Proof of Lemma \ref{lemma: upward derivability}}
\label{section: upward derivability}
\begin{proof}
  Assume that derivability-reduction-condition with respect to $\phi $ holds. If $\vdash \chi $, then $\vdash \phi $ clearly holds. Now  Assume there is $I\subseteq \NI$ and $I' \subseteq \NIrat $ such that $X\cup X'$ is neat and one of the following is met:
  \begin{itemize}
    \item there is $\pindex\in \PI$ such that $\bigcup_{\nindex\in X\cup X'} \Group_\nindex \subseteq \Group_\pindex$ and $$\vdash \bigwedge_{\nindex\in X\cup X'} \psi_\nindex\to (\psi_\pindex\vee \bigvee_{\pindex'\in Y_0}\psi_{\pindex'});$$
    \item there is $j\in \PIrat$ such that $\bigcup_{\nindex\in I'} \Group_\nindex\subseteq \Group_\pindex$ and $$
    \vdash \bigwedge_{\nindex\in X_0\cup X'} \psi_\nindex \to (\psi_\pindex\vee \bigvee_{\pindex'\in Y_0}\psi_{\pindex'});$$
    \item $\vdash \bigwedge_{\nindex\in X_0\cup X'} \psi_\nindex \to \bigvee_{\pindex\in Y_0\cup Y_1}\psi_{\pindex}$.
  \end{itemize}
  
  Consider the following 3 cases:
  \begin{enumerate}[label={Case \arabic*}]
    \item Assume there is $j\in \PI$ such that $\bigcup_{\nindex\in X\cup X'} \Group_\nindex \subseteq \Group_\pindex$ and $\vdash \bigwedge_{\nindex\in X\cup X'} \psi_\nindex \to (\psi_\pindex\vee \bigvee_{\pindex'\in Y_0}\psi_{\pindex'})$. 
    By \ref{ax:Sup0}, $$\vdash 
    \bigwedge_{\nindex\in X}\atlop{\Group_\nindex}\nexttime \psi_\nindex 
    \to  
    \atlop{\bigcup_{\nindex\in X}\Group_\nindex}\nexttime \bigwedge_{\nindex\in X}\psi_\nindex.$$  
    By \ref{ax:Sup1} and \ref{ax:MR}, 
    \begin{align*}
        \vdash 
    \bigwedge_{\nindex\in X'}\atloprat{\Group_\nindex}\nexttime \psi_\nindex 
    \to  
    \atlop{\bigcup_{\nindex\in X'}\Group_\nindex}\nexttime \bigwedge_{\nindex\in X}\psi_\nindex.
    \end{align*} 
    Therefore, \begin{align*}
    \vdash 
    (\bigwedge_{\nindex\in I}\atlop{\Group_\nindex}\nexttime \psi_\nindex \wedge 
    \bigwedge_{\nindex\in I'}\atloprat{\Group_\nindex}\nexttime \psi_\nindex ) \\
    \to  
    \atlop{\bigcup_{\nindex\in X\cup X'}\Group_\nindex}\nexttime \bigwedge_{\nindex\in I}\psi_\nindex.
    \end{align*}
    By \ref{Mon0}, $\vdash 
    (\bigwedge_{\nindex\in \NI}\atlop{\Group_\nindex}\nexttime \psi_\nindex 
    \wedge 
    \bigwedge_{\nindex\in \NIrat}\atloprat{\Group_\nindex}\nexttime \psi_\nindex)
    \to 
    \atlop{\Group_\pindex}\nexttime (\psi_\pindex\vee \bigvee_{\pindex'\in Y_0}\psi_{\pindex'})$. By \ref{ax:Cro}, \begin{align*}
        \vdash 
    \atlop{\Group_\pindex}\nexttime (\psi_\pindex\vee \bigvee_{\pindex'\in Y_0}\psi_{\pindex'})\to \\ \big (\atlop{\Group_\pindex}\nexttime 
    \psi_\pindex\vee \bigvee_{\pindex'\in Y_0}\atlop{\AGT} \nexttime \psi_{\pindex'}\big ). 
    \end{align*}
    Note $\Group_{\pindex'} = \AGT $ for all $\pindex'\in Y_0$. Therefore, \begin{align*}
    \vdash 
    (\bigwedge_{\nindex\in \NI}\atlop{\Group_\nindex}\nexttime \psi_\nindex 
    \wedge 
    \bigwedge_{\nindex\in \NIrat}\atloprat{\Group_\nindex}\nexttime \psi_\nindex)
    \to 
    \\
    \bigvee_{\pindex'\in \PI } \atlop{\Group_\pindex}\nexttime \psi_{\pindex'}.
    \end{align*}
  
    \item\label{case 2} Assume there is $\pindex\in \PIrat$ such that $\bigcup_{\nindex\in X'} \Group_\nindex\subseteq \Group_\pindex$ and $
    \vdash \bigwedge_{\nindex\in X_0\cup X'} \psi_\nindex 
    \to (\psi_\pindex\vee \bigvee_{\pindex'\in Y_0}\psi_{\pindex'})$.
    Note $\Group_\nindex = \emptyset $ for all $\nindex\in X_0$. By\ref{ax:NP}, $\vdash 
    \bigwedge_{\nindex\in X_0}\atlop{\emptyset }\nexttime \psi_\nindex \to \bigwedge_{\nindex\in X_0}\atloprat{\emptyset }\nexttime \psi_\nindex$.
    By \ref{ax:Sup1}, $$\vdash 
    \bigwedge_{\nindex\in X\cup X_0}\atloprat{\Group_\nindex}\nexttime \psi_\nindex 
    \to  
    \atloprat{\bigcup_{\nindex\in X\cup X_0}\Group_\nindex}\nexttime \bigwedge_{\nindex\in X\cup X_0}\psi_\nindex.$$  
    By \ref{Mon1}, \begin{align*}
     \vdash 
    (\bigwedge_{\nindex\in \NI}\atlop{\Group_\nindex}\nexttime \psi_\nindex 
    \wedge 
    \bigwedge_{\nindex\in \NIrat}\atloprat{\Group_\nindex}\nexttime \psi_\nindex)
    \to \\
    \atloprat{\Group_\pindex}\nexttime (\psi_\pindex\vee \bigvee_{\pindex'\in Y_0}\psi_{\pindex'}).
    \end{align*}
    By \ref{Cro0.5}, \begin{align*}
    \vdash 
    \atloprat{\Group_\pindex}\nexttime (\psi_\pindex\vee \bigvee_{\pindex'\in Y_0}\psi_{\pindex'})\to \\ (\atloprat{\Group_\pindex}\nexttime \psi_\pindex\vee \bigvee_{\pindex'\in Y_0}\atlop{\AGT} \nexttime \psi_{\pindex'}).\end{align*}
    Therefore, \begin{align*}
    \vdash 
    (\bigwedge_{\nindex\in \NI}\atlop{\Group_\nindex}\nexttime \psi_\nindex 
    \wedge 
    \bigwedge_{\nindex\in \NIrat}\atloprat{\Group_\nindex}\nexttime \psi_\nindex )
    \to \\
    (\bigvee_{\pindex'\in \PI}\atlop{\Group_{\pindex'}}\nexttime \psi_{\pindex'}\vee \bigvee_{\pindex'\in \PI}\atloprat{\Group_{\pindex'}}\nexttime \psi_{\pindex'}).
    \end{align*}

    \item Assume $\vdash \bigwedge_{\nindex\in X_0\cup X'} \psi_\nindex \to \bigvee_{\pindex\in Y_0\cup Y_1}\psi_{\pindex}$.
    Similar to \ref{case 2}, we have \begin{align*}
        \vdash 
    (\bigwedge_{\nindex\in \NI}\atlop{\Group_\nindex}\nexttime \psi_\nindex 
    \wedge 
    \bigwedge_{\nindex\in \NIrat}\atloprat{\Group_\nindex}\nexttime \psi_\nindex)
    \to \\
    \atloprat{\AGT}\nexttime \bigvee_{\pindex\in Y_0\cup Y_1}\psi_{\pindex}.
    \end{align*}
    By \ref{ax:DGRA},  \begin{align*}
        \vdash 
    \atloprat{\AGT}\nexttime \bigvee_{\pindex\in Y_0\cup Y_1}\psi_{\pindex}
    \to \bigvee_{\pindex\in Y_0\cup Y_1}\atloprat{\AGT}\nexttime \psi_{\pindex}.
    \end{align*}
    Note $\Group_{\pindex} = \AGT $ for all $\pindex\in Y_0\cup Y_1$. 
    Therefore, \begin{align*}
    \vdash 
    (\bigwedge_{\nindex\in \NI}\atlop{\Group_\nindex}\nexttime \psi_\nindex 
    \wedge 
    \bigwedge_{\nindex\in \NIrat}\atloprat{\Group_\nindex}\nexttime \psi_\nindex)
    \to \\ 
    (\bigvee_{\pindex\in Y_0}\atlop{\AGT}\nexttime \psi_{\pindex}\vee \bigvee_{\pindex\in Y_1}\atloprat{\AGT}\nexttime \psi_{\pindex}).
    \end{align*}
  \end{enumerate}
  Therefore, \begin{align*}
      \vdash 
    (\bigwedge_{\nindex\in \NI}\atlop{\Group_\nindex}\nexttime \psi_\nindex 
    \wedge 
    \bigwedge_{\nindex\in \NIrat}\atloprat{\Group_\nindex}\nexttime \psi_\nindex)
    \to \\
    (\bigvee_{\pindex\in \PI}\atlop{\Group_\pindex}\nexttime \psi_{\pindex}\vee \bigvee_{\pindex\in \PIrat}\atloprat{\Group_\pindex}\nexttime \psi_{\pindex}).
    \end{align*}
    Then $\vdash \phi $.
  \end{proof}

 \section{Proof of Theorem \ref{thm:soundnessCL}}\label{section: complete proof}
  \begin{proof}
Proving soundness is a
routine exercise. 
 We just show  completeness
 by an induction on the modal
 degrees of formulas. 
 The $0$-modal degree case
 follows from
 completeness of classical propositional logic. 
Let 
   $\phi \in \lang_{\clratlogic}(\ATM, \AGT )$ with modal degree  greater than $0$. Assume $ \phi $ 
   is valid
   and for any 
    $\psi \in \lang_{\clratlogic}(\ATM, \AGT )$, if
    the modal degree of $\psi $ is lower
    than
    the modal degree of
    $\phi $ and $ \psi $ is valid, then $\vdash \psi $. We want to show that 
    $\vdash \phi $.
  By the normal-form lemma (Lemma \ref{lemma: normal-form}), $\phi \Liff \bigwedge_{k\in K}\phi_k$
  is valid, where $K$ is a set of indices and $\phi_k $ is a standard
  $ \clratlogic$ disjunction whose modal degree is not greater than $\phi $. Fix $k\in K$. Then $ \phi_k$
  is valid. By the downward validity lemma (Lemma \ref{lemma: downward validity}),  the validity-reduction condition for 
  $\phi_k $ holds. By inductive hypothesis, the 
  derivability-reduction condition
  for $\phi_k $ holds. By 
  the upward derivability lemma (Lemma \ref{lemma: upward derivability}), $\vdash \phi_k$. Therefore, $\vdash \bigwedge_{k\in K}\phi_k$. Note that $\vdash \phi \Liff \bigwedge_{k\in K}\phi_k$. Thus,  $\vdash \phi $.
\end{proof}
  
  \section{Proof of Theorem \ref{thm:modelchecking}}\label{section: proof for modelchecking}

  \begin{proof}

We start by introducing 
some preliminary notions. In Section \ref{sec:pref}
an agent's preference relation at a given state was
defined relative to the set of computations
starting in this state. 
In the case
of a GCSP with short-sighted preferences
we can replace 
each preference relation $     \preceq_{i,w} $
by  a partial preorder 
\begin{align*}
     \preceq_{i,w}^{\mathit{succ} }  \subseteq \mathcal{R}(w) \times \mathcal{R}(w)    
\end{align*}
defining agent $i$'s preference relation over the successors of state $w$.
Then, we can define the notion of agent $i$'s dominated action
at a given state $w$. Specifically, 
we say that at world $w$
  agent $i$'s action  $\delta_{\{ i \}} \in  \JACT_{  \{i \} }   $
  dominates agent $i$'s  action  $\delta_{\{ i \}}' \in  \JACT_{  \{i \} }   $
  iff 
  \begin{align*}
  \forall \delta_{\AGT \setminus \{i \}} & \in \JACT_{\AGT  \setminus \{i \}}, \\  
& 
\text{if }  \mathcal{R}_{ \delta_{\AGT \setminus \{i \}} \oplus  \delta_{\{i \}}  } (w) \neq
\emptyset \text{ and }
 \mathcal{R}_{ \delta_{\AGT \setminus \{i \}} \oplus  \delta_{\{i \}'}  ' } (w) \neq
\emptyset \text{ then } \\
&
\mathit{succ}(w, \delta_{\AGT \setminus \{i \}} \oplus \delta_{\{i \}} )
  \prec_{i,w}^{\mathit{succ} } 
\mathit{succ} ( w, \delta_{\AGT \setminus \{i \}} \oplus  \delta_{\{i \}} ')  ,
  \end{align*}
where $\mathit{succ}(w, \delta_{\AGT \setminus \{i \}} \oplus  \delta_{\{i \}} ) $
is the unique element of the transition relation 
  $\mathcal{R}_{ \delta_{\AGT \setminus \{i \}} \oplus  \delta_{\{i \}}  } (w) $,
 $\mathit{succ}(w, \delta_{\AGT \setminus \{i \}} \oplus  \delta_{\{i \}}' ) $
is the unique element of 
  $\mathcal{R}_{ \delta_{\AGT \setminus \{i \}} \oplus  \delta_{\{i \}}'  } (w) $,
  and for any coalition $\Group$, 
  $\JACT_\Group   $
  is the set of joint actions for $\Group$,
  i.e., the set of all functions $\delta_\Group : \Group \longrightarrow \ACT$. 
   We denote by $\mathit{ActDom}_{M,w}^i$ the set of dominated actions of agent $i$   at $w$.

  We can now reinterpret the modalities $\atloprat{\Group}\nexttime$, $\atloprat{\Group}\henceforth$, and $\atloprat{\Group}\until{}{}$
by replacing the set $\mathit{Dom}_{M,w}^i$ by 
$\mathit{ActDom}_{M,w}^i$. 
It is routine to show that the two semantic interpretations, the one defined with $\mathit{Dom}_{M,w}^i$ and the one with $\mathit{ActDom}_{M,w}^i$ are equivalent for the case of GCSP with short-sighted preferences. 

Now we turn to the global model checking problem for \atlratlogic. 
We show that the complexity of the model checking problem is \Ptime-complete. 
The lower-bound follows from the model
checking of  \atllogic which is \Ptime-complete \cite{alur2002alternating}. 

For the upper bound, we provide a model-checking algorithm. 
Given a CGSP $P = (M,\Omega_M)$, with   $M  = ( W,  \ACT$, $ \relActJoint, \allowbreak  \valProp  )$  and $\Omega_M=(\preceq_{i,w }   )_{i \in \AGT, w \in W }$, and a formula $\varphi$, Algorithm \ref{alg:mc} works similarly to the symbolic algorithm for \atllogic model checking \cite{alur2002alternating} by 
computing the set of states in which  $\varphi$ holds by recursively considering its subformulas,  

%


\begin{breakablealgorithm}
\label{alg:mc}
	\caption{Model checking R-CL } 
	\begin{algorithmic}[1] 
   \noindent \textbf{Input: }   a CGSP $P = (M,\Omega_M)$, with   $M  = ( W,  \ACT$, $ \relActJoint, \allowbreak  \valProp  )$  and $\Omega_M=(\preceq_{i,w }   )_{i \in \AGT, w \in W }$, and a \atlratlogic formula $\varphi$.
   \par 
   \noindent \textbf{Output: } $\{w: (P,w) \models \varphi, w\in W\}$ 
		\Procedure{MC}{$P,  \varphi$}		
      \Case {$\varphi = p$}
            \State{\textbf{return} $\{ w : p\in \valProp(w)$\} }   
        \EndCase
    
        \Case {$\varphi = \lnot \psi$}
            \State{\textbf{return}  $W \setminus \textsc{MC} (P,   \psi)$  }
        \EndCase
       \Case {$\varphi = \psi \land \chi$}
            \State{\textbf{return} $\textsc{MC} (P,\psi) \cap \textsc{MC} (P,\chi)$}
        \EndCase
      \Case {$\varphi =  \atlop{\Group} \nexttime \psi$}
             \State{\textbf{return} $ \mathit{Pre}(P, C, \textsc{MC}(P,  \psi))$}
        \EndCase
        \Case {$\varphi =  \atlop{\Group} \henceforth \psi$}
            \State{$\rho \gets W, \tau \gets MC(P, \psi)$}
            \While{$\rho \nsubseteq \tau$}
             \State{$\rho \gets \tau$, $\tau \gets \mathit{Pre}(\Group, \rho) \cap MC(P,\psi)$}
             \EndWhile
             \State{\textbf{return} $\rho$}
        \EndCase
        \Case {$\varphi =  \atlop{\Group} \until{\psi_1}{\psi_2}$}
            \State{$\rho \gets \empty, \tau \gets MC(P, \psi_2)$}
            \While{$\tau \nsubseteq \rho$}
             \State{$\rho \gets \rho \cup \tau$, $\tau \gets \mathit{Pre}(\Group, \rho) \cap MC(P,\psi_1)$}
             \EndWhile
             \State{\textbf{return} $\rho$}
        \EndCase
        \Case {$\varphi =  \atloprat{\Group} \nexttime \psi$}
             \State{\textbf{return} $ \mathit{PreRat}(P, C, \textsc{MC}(P,  \psi))$  }
        \EndCase
        \Case {$\varphi =  \atloprat{\Group} \henceforth \psi$}
            \State{$\rho \gets W, \tau \gets MC(P, \psi)$}
            \While{$\rho \nsubseteq \tau$}
             \State{$\rho \gets \tau$, $\tau \gets \mathit{PreRat}(\Group, \rho) \cap MC(P,\psi)$}
             \EndWhile
             \State{\textbf{return} $\rho$}
        \EndCase
         \Case {$\varphi =  \atloprat{\Group} \until{\psi_1}{\psi_2}$}
            \State{$\rho \gets \empty, \tau \gets MC(P, \psi_2)$}
            \While{$\tau \nsubseteq \rho$}
             \State{$\rho \gets \rho \cup \tau$, $\tau \gets \mathit{PreRat}(\Group, \rho) \cap MC(P,\psi_1)$}
             \EndWhile
             \State{\textbf{return} $\rho$}
        \EndCase
    \EndProcedure
	\end{algorithmic}
\end{breakablealgorithm}

To solve formulas in the form $\atlop{\Group} \nexttime \psi$, $\atlop{\Group} \henceforth \psi$, and $\atlop{\Group} \until{\psi_1}{\psi_2}$, the algorithm makes use of the function $\mathit{Pre}(P, \Group, Q)$. The function  takes as arguments a CGSP $P$, a coalition $\Group \subseteq \AGT$, and a set $Q \subseteq S$ and returns the set of states from which coalition $\Group$ can force the outcome to be in one of the $Q$ states. 
Formally, 

\begin{align*}\mathit{Pre}(P, \Group, Q) = & \{ w : w\in W, \exists \delta_{\Group} \in \JACT_{\Group}, \\
& \forall \delta_{\AGT\setminus\Group}\in \JACT_{\AGT\setminus\Group},  \relAct{\delta_{\Group} \oplus    \delta_{\AGT\setminus\Group}}(w)\subseteq Q \}
\end{align*}

To handle formulas with rational strategic operators, we first need to compute agents' dominated actions.  
Deciding whether an action is dominated requires checking, for every other action for that player, whether the latter action performs strictly better against all the opponents’  actions. This procedure is known to be in polynomial time 
\cite{conitzer2005complexity,gilboa1993complexity}. 
For computing the set of actions for agent $i$ that are dominated from the state $w$, we consider the function $\mathit{DomAct}(P, i, w)$, defined as follows: 

\begin{align*}
\mathit{DomAct}&(P, i, w) = \{ \delta_{\{i\}} \in \JACT_{\{i\}} : \exists \delta_{\{i\}}' \in \JACT_{\{i\}}  , \\ &  \forall \delta_{\AGT\setminus\{i\}} \in \JACT_{\{i\}}^{\AGT\setminus\{i\}},   \\ & \mathit{succ}(w, \delta_{\{i\}}\oplus  \delta_{\AGT\setminus\{i\}} ) \prec_{i,w}^{\mathit{succ}}   \mathit{succ}(w, \delta_{\{i\}}'\oplus  \delta_{\AGT\setminus\{i\}} ) \}   \end{align*}

The non dominated joint actions of coalition $\Group$ in $w$ are obtained by function $\mathit{NonDomJAct}(P, i, w)$, defined as follows: 

\begin{align*}
\mathit{NonDomJAct}(P, \Group, w) = & \{ \delta_\Group \in \JACT_\Group:    \forall i \in \Group, \\
& \delta_{\{i\}}  \not \in DomAct(P,i,w)\}   \end{align*}


For the cases of formulas in the form $\atloprat{\Group} \nexttime \psi$, $\atloprat{\Group} \henceforth \psi$,  and $\atloprat{\Group} \until{\psi_1}{\psi_2}$, the algorithm uses the function $\mathit{PreRat}(M, \Group, Q)$ that computes the states from where the coalition $\Group$ can force   the outcome to be in one of the $Q$ states while playing non dominated actions.
\begin{align*}
    \mathit{PreRat}(P, \Group, Q)  = & \{ w : w\in W, \exists \delta_\Group \in \mathit{NonDomJAct}(P,\Group,w), \\
& 
\forall \delta_{\AGT\setminus\Group}\in \JACT_{\AGT\setminus\Group}, \\ 
& \relAct{\delta_{\Group} \oplus    \delta_{\AGT\setminus\Group}}(w)\subseteq Q   \}
\end{align*} 

Notice that functions $Pre$ and $PreRat$ can
be computed in polynomial time. 
Algorithm \ref{alg:mc}'s complexity is in $\mathcal{O}(|M| \cdot l)$, where $l$ denotes the length of the input formula $\varphi$. The outer loop of the algorithm depends on l, and it can be shown that each one of the cases lies in $\mathcal{O}(|M|)$. 

\end{proof}

\fi
 
\end{document}